\newtheorem{Thm}{Theorem}
\newtheorem{Prop}[Thm]{Proposition}
\newtheorem{Lemma}[Thm]{Lemma}
\newtheorem{Remark}[Thm]{Remark}
    \newenvironment{remark}{\begin{Remark}\rm}{\end{Remark}}
\theoremstyle{definition}
    \def\tr{{\rm tr \,}}
    \def\Re{{\rm Re \,}}
    \def\Im{{\rm Im \,}}
    \def\bigO{{\cal O}}
    \def\m{{(m)}}
        \def\P2n{{\rm P}_{{\rm II}}^{(n)}}
\title{
Sine-kernel determinant on two large intervals} 
\author{B. Fahs and I. Krasovsky}
\date{}
\begin{document}
\maketitle

\begin{abstract}
We consider the probability of two large gaps (intervals without eigenvalues) in the bulk scaling limit of the Gaussian Unitary Ensemble of random matrices. We determine the multiplicative constant in the asymptotics. 
We also provide the full explicit asymptotics (up to decreasing terms) for the transition between one and two large gaps.  
\end{abstract}

\section{Introduction}

Let $K_s$ be the (trace class) operator on $L^2(A)$, $A\subset\mathbb R$, with kernel  
$K_s(x,y)=\frac{\sin s(x-y)}{\pi (x-y)}$. Consider the Fredholm determinant
\begin{equation}\label{1}
P_s(A)=\det(I-K_s)_A.
\end{equation}
In this paper, we take $A$ to be the union of two intervals, and complete the 
description of  the asymptotics of $P_s(A)$ as $s\to\infty$, including the transition when the intervals 
merge into one.

The determinant \eqref{1} is the probability that the set $\frac{s}{\pi} A=\{\frac{s}{\pi}x: x\in A\}$  
contains no eigenvalues
of the Gaussian Unitary Ensemble of random matrices in the bulk scaling limit where the average distance between eigenvalues is 1. (Thus, if $A$ is a union of intervals, they are called gaps.)
Similar statements hold in other contexts: the sine-process with
kernel  $K_s(x,y)$ is the simplest, and one of the most common and well-studied determinantal processes
appearing in random matrix theory, random partitions, etc.

In the case where $A$ is an interval, which we can assume without loss\footnote{$P_s(A)$ is invariant under translations of $A$, and rescaling results only in the appearance of a prefactor of $s$.} to be $(-1,1)$,
the asymptotics of the logarithm of \eqref{1} have the form:
\begin{equation}\label{1gap}
\log P_s((-1,1))=-\frac{s^2}{2}-\frac{1}{4}\log s+c_0+\mathcal O(s^{-1}),\qquad s\to \infty,
\end{equation}
where
\begin{equation}\label{const}
c_0=\frac{1}{12}\log 2+3\zeta'(-1).
\end{equation}
Here $\zeta'(z)$ is the derivative of Riemann's zeta function. 

The leading term $-\frac{s^2}{2}$ was found by Dyson in 1962 in one of his fundamental papers on random matrix theory \cite{Dyson62}. Dyson used Coulomb gas arguments.
The terms $-\frac{s^2}{2}-\frac{1}{4}\log s$ were computed by des Cloizeaux and Mehta \cite{CM} in 1973
who used the fact that eigenfunctions of $K_s$ are spheroidal functions. The constant \eqref{const}, known as the Widom-Dyson constant, was identified
by Dyson \cite{Dyson} in 1976 who used the inverse scattering techniques and the earlier work of Widom
\cite{W1} on Toeplitz determinants. The works \cite{Dyson62}, \cite{CM}, and \cite{Dyson} are not fully rigorous.
The first rigorous confirmation of the main term, i.e. the fact that $\log P_s((-1,1))=-\frac{s^2}{2}(1+o(1))$, was given by Widom \cite{W2}  in 1994.
The full asymptotic expansion \eqref{1gap}, apart from the expression \eqref{const} for $c_0$, was 
proved by Deift, Its, and Zhou in a landmark work \cite{DIZ} in 1997, where the multi-interval case was also addressed.
The authors of  \cite{DIZ} used Riemann-Hilbert
techniques to determine asymptotics of the logarithmic derivative $\frac{d}{ds}\log P_s(A)$, where $A$
is one (or a union of several) interval(s).  The asymptotics for $P_s(A)$ were then obtained in \cite{DIZ}
by integrating the  logarithmic derivative with respect to $s$.
The reason the expression for $c_0$ was not established in \cite{DIZ}
is that there is no initial integration point $s=s_0$ where $P_s(A)$ would be known explicitly.
In \cite{K04}, the author was able to justify the value of $c_0$ in \eqref{const} by using a different differential identity for associated Toeplitz determinants and again the result of Widom \cite{W1}.
An alternative proof of \eqref{const}
was given in \cite{DIKZ}, which was based on another
differential identity for Toeplitz determinants. In \cite{DIKZ}, the result of \cite{W1} was 
also rederived this way.
Both \cite{K04}  and \cite{DIKZ} relied on Riemann-Hilbert techniques. Yet another proof of \eqref{const} was given by Ehrhardt \cite{Ehrhardt} who used a very different 
approach of operator theory.

If $A$ is a union of several intervals, it was shown by Widom in \cite{W3} that
\begin{equation}
\frac{d}{ds}\log P_s(A)=-C_1s+C_2(x)+o(1),\qquad s\to\infty,
\end{equation}
where $C_1>0$ and $C_2(x)$ is a bounded oscillatory function.
The constant $C_1$ can be computed explicitly, but $C_2(x)$ is an implicit solution of a Jacobi inversion problem. This result was extended and made more explicit by Deift, Its, and Zhou in \cite{DIZ}. We will now present the 
solution of \cite{DIZ}
 in the case when $A$ is the union of two intervals, which is relevant for the present work.

As above, we assume without loss that 
\[
A=(-1,v_1)\cup (v_2,1),\qquad -1<v_1<v_2<1.
\] 
Let $p(z)=(z^2-1)(z-v_1)(z-v_2)$, and consider the two-sheeted Riemann surface $\Sigma$ of the function $p(z)^{1/2}$. On the first sheet $p(z)^{1/2}/z^2\to 1$ as $z\to\infty$,
while on the second, $p(z)^{1/2}/z^2\to -1$ as $z\to\infty$. The sheets are glued at the cuts $(-1,v_1)$, $(v_2,1)$.
Each point $z\in\overline{\mathbb C}\setminus((-1,v_1)\cup(v_2,1))$ (including infinity) has two images on $\Sigma$. The Riemann surface $\Sigma$ is topologically a torus.

\begin{figure}
	\begin{center}
		\begin{tikzpicture}
		\node [above] at (0,0){$-1$};
		\node [above] at (2,0){$v_1$};
		\node [above] at (5,0) {$v_2$};
		\node [above] at (7,0) {$1$};

		\draw[black,fill=black]  (0,0) circle [radius=0.04];
		\draw[black,fill=black]  (2,0) circle [radius=0.04];
		\draw[black,fill=black]  (5,0) circle [radius=0.04];
		\draw[black,fill=black]  (7,0) circle [radius=0.04];

		\node [below] at (6.5,-1) {$A_1$};
		\node [below] at  (0.6,-1){$A_0$};
		\node [below] at (3.7,-0.88){$B_1$};

		\draw  (5,0)--(7,0);
		\draw  (0,0)--(2,0);

		\draw[dashed,decoration={markings, mark=at position 0.5 with {\arrow[thick]{>}}},
		postaction={decorate}] (1,0) to [out=90,in=180] (3.5,1.5) to [out=0,in=90]  (6,0);
		
		\draw[decoration={markings, mark=at position 0.5 with {\arrow[thick]{<}}},
		postaction={decorate}] (1,0) to [out=270,in=180] (3.5,-1.5) to [out=0,in=270]  (6,0);

		\draw[decoration={markings, mark=at position 0.5 with {\arrow[thick]{>}}},
		postaction={decorate}] (-1,0) to [out=90,in=180] (1,1) to [out=0,in=90]  (3,0);
		\draw[decoration={markings, mark=at position 0.5 with {\arrow[thick]{<}}},
		postaction={decorate}] (-1,0) to [out=270,in=180] (1,-1) to [out=0,in=270]  (3,0);     
		
		\draw[decoration={markings, mark=at position 0.5 with {\arrow[thick]{>}}},
		postaction={decorate}] (4,0) to [out=90,in=180] (6,1) to [out=0,in=90]  (8,0);
		\draw[decoration={markings, mark=at position 0.5 with {\arrow[thick]{<}}},
		postaction={decorate}] (4,0) to [out=270,in=180] (6,-1) to [out=0,in=270]  (8,0);

		\end{tikzpicture} 
		\caption{Cycles on the Riemann surface $\Sigma$}\label{Fig1}\end{center}
\end{figure}
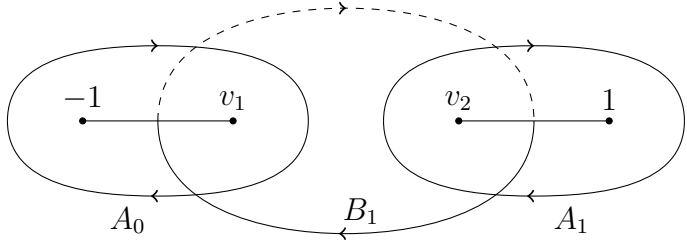

Let the elliptic integrals $I_j=I_j(v_1,v_2)>0$, $J_j=J_j(v_1,v_2)>0$ be given by
\begin{equation}\label{IJ}
I_j=\int_{v_2}^{1}\frac{x^j dx}{\sqrt{|p(x)|}}=
\frac{i}{2}\int_{A_1}\frac{x^j dx}{p(x)^{1/2}},\qquad 
J_j=\int_{v_1}^{v_2}\frac{x^j dx}{\sqrt{|p(x)|}}=
\frac{1}{2}\int_{B_1}\frac{x^j dx}{p(x)^{1/2}}
,\qquad j=0,1,2,
\end{equation}
where the loops (cycles) $A_1$, $B_1$ are shown in Figure \ref{Fig1}. The loops $A_0$, $A_1$ lie on the first sheet,
and the loop $B_1$ passes from one to the other: the part of it denoted by a solid line is on the first sheet, the other is on the second.

Let
\begin{equation}\label{defpsi}
\psi(z)=\frac{q(z)}{p(z)^{1/2}},\qquad q(z)=(z-x_1)(z-x_2),
\end{equation}
where the constants  $x_1\in (-1,v_1)$ and $x_2\in (v_2,1)$ are defined by the conditions
\begin{equation}\label{int0}
\int_{A_j}\psi(z) dz=0,\qquad j=0,1.
\end{equation}
It follows that 
 \begin{align}
x_1+x_2&=\frac{v_1+v_2}{2}\label{x1x2eqn1},
\\  x_1x_2&=
\left(-I_2+\frac{v_1+v_2}{2}I_1\right)\frac{1}{I_0},
\label{x1x2eqn2}
\end{align}
which gives an explicit expression for $q(z)$ in terms of elliptic integrals.

Note that \eqref{int0} implies that $\psi(z)$ has no residue at infinity. More precisely, we obtain as $z\to\infty$ on the first sheet
\begin{equation}\label{G0}
\psi(z)=1+\frac{G_0}{z^2}+\mathcal O(z^{-3}),\qquad G_0=
-\frac{I_2-\frac{v_1+v_2}{2}I_1}{I_0}+\frac{1}{2}+\frac{(v_2-v_1)^2}{8}.
\end{equation}
As shown in \cite{DIZ}, $G_0>0$.

Denote the holomorphic differential 
\begin{equation}\label{def:omega}
\bm{{\omega}}=i\frac{dz}{2I_0 p(z)^{1/2}}.
\end{equation}
 
Clearly, it is normalized:
\begin{equation}\label{omega cond}
\int_{A_1}\bm{{\omega}}=-\int_{A_0}\bm{{\omega}}=1.
\end{equation}

Let 
\begin{equation}\label{def:Omega}
\tau=\int_{B_1}\bm{{\omega}}=i\frac{J_0}{I_0},\qquad
\Omega=-\frac{1}{2\pi}\int_{B_1}\psi(x)dx=\frac{1}{\pi}\int_{v_1}^{v_2}\psi(x)dx=\frac{1}{I_0},
\end{equation}
where the integration $\int_{v_1}^{v_2}\psi(x)dx$ is taken on the first sheet, and where the last equation for $\Omega$ 
follows by Riemann's period relations (Lemma 3.45 in \cite{DIZ} for $n=1$).
Recall the definition \eqref{def:theta} in Appendix \ref{App1} of the third Jacobian $\theta$-function $\theta_3(z;\tau)$.
Deift, Its, and Zhou found in \cite{DIZ} that
\begin{equation}\label{FormDIZ}
\log P_s((-1,v_1)\cup (v_2,1))=-s^2G_0+\widehat G_1 \log s+\log \theta_3(s\Omega;\tau)+c_1+
\mathcal O(s^{-1}),
\qquad s\to\infty,
\end{equation}
with $G_0$ as in \eqref{G0}, and $\tau$, $\Omega$ as given in \eqref{def:Omega}. 
Constants $\widehat G_1$, $c_1$ are independent of $s$. The constant $\widehat G_1$
is written in \cite{DIZ} in terms of a limit of an integral of a combination of $\theta$-functions. The constant
$c_1$ remained undetermined (for the same reason as given above in the case of one interval).

The main result of the present paper is the expression for $c_1$, which completes the
description of the asymptotics \eqref{FormDIZ}. We also find that the original expression for $\widehat G_1$
in \cite{DIZ} can be simplified, and we obtain that $\widehat G_1=-1/2$ (see Appendix \ref{App2}).
We also determine this coefficient $-1/2$ of $\log s$ in a different way, as a direct result of our 
computation of \eqref{FormDIZ} which also produced $c_1$. We describe this computation in more detail below in the introduction.

Thus, we obtain
\begin{Thm}\label{Thm}
The asymptotics \eqref{FormDIZ} hold with
\begin{equation}\label{constant}
\widehat G_1=-\frac{1}{2},\qquad
c_1=-\frac{1}{2}\log\frac{I_0}{\pi}-\frac{1}{8}\sum_{y\in\{-1,v_1,v_2,1\}}\log |q(y)|+2c_0,\qquad
c_0=\frac{1}{12}\log 2+3\zeta'(-1).
\end{equation}
\end{Thm}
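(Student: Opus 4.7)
The plan is to perform a full Deift-Zhou steepest-descent analysis of the Riemann-Hilbert problem for $P_s(A)$, tracking every multiplicative constant, and to pin down the integration constant $c_1$ (left open in \cite{DIZ}) via an auxiliary differential identity and a matching argument with the one-interval asymptotics \eqref{1gap}. The decomposition in \eqref{constant} essentially signposts where each term originates: the piece $-\tfrac{1}{2}\log(I_0/\pi)$ from the normalization of the global $\theta$-function parametrix, the sum $-\tfrac{1}{8}\sum_y \log|q(y)|$ from the Airy parametrices at the four endpoints $\{-1,v_1,v_2,1\}$, and the contribution $2c_0$ from a matching to the known constant \eqref{const} for a single gap (the four endpoints, in a suitable limit, behaving like two copies of the one-interval problem).

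Concretely, I would first formulate the integrable-operator RHP $Y$ in the Its-Izergin-Korepin-Slavnov formalism, with jumps on $A=(-1,v_1)\cup(v_2,1)$. Then I would apply the standard sequence of transformations $Y\to T\to S\to R$: the $g$-function is built from the two-band equilibrium measure whose density is a multiple of $\psi(z)$ from \eqref{defpsi}; lenses are opened along both bands; the global parametrix on $\Sigma$ is expressed via $\theta_3(\cdot;\tau)$ as in \cite{DIZ}, producing the $\log\theta_3(s\Omega;\tau)$ factor and, through its normalization, the $-\tfrac{1}{2}\log(I_0/\pi)$ constant; and local Airy parametrices at each of the four endpoints supply the $-\tfrac{1}{8}\log|q(y)|$ contributions and jointly the coefficient $-\tfrac{1}{2}$ of $\log s$. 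A differential identity of Jimbo-Miwa-Mori-Sato type for $\partial_s\log P_s(A)$, combined with the asymptotic form of $R$ from the small-norm RHP and integrated in $s$, then recovers \eqref{FormDIZ} up to an $s$-independent constant $c_1(v_1,v_2)$.

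The main difficulty --- precisely what \cite{DIZ} stopped short of doing --- is fixing this remaining constant. I would address it by using a second differential identity, most naturally in an endpoint variable (say $v_1$), and integrating along a path in $(v_1,v_2)$-space that terminates at a configuration where the one-interval asymptotics \eqref{1gap}-\eqref{const} can be applied directly. This requires uniform control of the steepest-descent analysis along the path and, critically, a careful treatment of the confluent regime in which $\Sigma$ degenerates to genus zero (e.g.\ $v_2\to 1$ or $v_1\to v_2$): in this regime the $\theta_3$ factor collapses, two Airy parametrices coalesce, and the divergences in $c_1(v_1,v_2)$, $\widehat G_1\log s$ and $\log\theta_3(s\Omega;\tau)$ must cancel exactly. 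Handling this confluent limit correctly is the key technical step and, when carried through, should simultaneously produce the value $\widehat G_1=-1/2$ and the explicit formula \eqref{constant} for $c_1$.
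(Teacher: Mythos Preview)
Your overall strategy---steepest descent plus an endpoint differential identity plus matching to a degenerate configuration where the one-interval constant is known---is the right one, and is essentially what the paper does. But two concrete points would derail the execution as you have described it.

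First, the local parametrices at $-1,v_1,v_2,1$ are \emph{Bessel}, not Airy. The $\Phi$-RHP here has jump $\left(\begin{smallmatrix}0&-1\\1&0\end{smallmatrix}\right)$ on the gaps and triangular jumps on the lens edges; near each endpoint this is exactly the Bessel model problem (see \eqref{def:Psiout}). Airy parametrices belong to soft-edge problems where a density vanishes like a square root; there is no such turning point here. If you build Airy parametrices you will not get the correct matching condition \eqref{idPN}, and the downstream computation of the subleading term (the $W(\omega)$ contribution in \eqref{def:W}, which is precisely where the $-\tfrac18\sum_y\log|q(y)|$ and part of $-\tfrac12\log s$ eventually come from) will be wrong.

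Second, your matching regime is not the one that works cleanly. You suggest $v_1\to v_2$ (merging) or $v_2\to 1$ alone; the paper instead sends \emph{both} gaps to small intervals near $\pm 1$ simultaneously (the regime $v_2=-v_1=1-2t/s$ with $t\to\infty$ slowly). In that regime the determinant factorizes to leading order as a product of two one-gap determinants, and this is proved not by a confluent RH analysis but by an elementary Fredholm-series estimate (Lemma~\ref{Lemmasep} and its proof via Proposition~\ref{propos}): one simply bounds the off-block entries of the kernel by $1/w$ and uses Hadamard's inequality. This is what pins down the $2c_0$. The merging limit $v_1\to v_2$ is far more delicate (it is the subject of \cite{FKduke} and Theorem~\ref{Thmtau0} here) and would not serve as a clean anchor for the constant. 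Relatedly, the paper uses \emph{only} the $v_2$-differential identity \eqref{diffid new}; the $s$-identity of JMMS/DIZ type is not needed and, as the introduction notes, cannot by itself fix $c_1$ because there is no $s_0$ at which $P_{s_0}(A)$ is known.
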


\begin{remark}
Using a connection between the elliptic integral $I_0$ and $\theta_3(0)$, equation \eqref{idth3} below, and substituting
$\widehat G_1$, $c_1$ into \eqref{FormDIZ}, we can write\footnote{
Perhaps, the corresponding formula for the logarithm of the probability of $n+1$ gaps $A=\cup_{j=0}^n(a_j,b_j)$ is
\begin{multline}
\log \det (I-K_s)_A=-\alpha s^2 -\frac{n+1}{4}\log s+\log \frac{\theta(sV)}{\theta(0)}\\+\frac{1}{4}\sum_{0\leq j<k\leq n} \log (b_k-b_j)(a_k-a_j) -\frac{1}{8}\sum_{j=0}^n\log|q(a_j)q(b_j)|+(n+1)c_0+\mathcal O(s^{-1}),
\qquad s\to\infty.
\end{multline}
The coefficient $\alpha$ here was determined in \cite{DIZ}, and $V$, $q$, and the multivariable $\theta$-function
are in the notation of \cite{DIZ}.} 
\begin{multline}
\log P_s((-1,v_1)\cup (v_2,1))=-s^2 G_0 -\frac{1}{2}\log s+\log \frac{\theta_3(s\Omega;\tau)}{\theta_3(0;\tau)}\\+\frac{1}{4}\log (1-v_1)(1+v_2) -\frac{1}{8}\sum_{y\in\{-1,v_1,v_2,1\}}\log |q(y)|+2c_0+\mathcal O(s^{-1}),
\qquad s\to\infty.
\end{multline}
\end{remark}

\begin{remark}
The elliptic integrals $I_j$, $J_j$ can be reduced to the complete ones.  In particular, 
in the symmetric case of 
$-v_1=v_2=v$, \eqref{FormDIZ} becomes (by a straightforward use of \eqref{I02} in Appendix \ref{App1})
\begin{equation}
\begin{aligned}
\log P_s((-1,-v)\cup (v,1))=-s^2\left(\frac{1+v^2}{2}-\frac{E(v')}{K(v')}\right)
-\frac{1}{2}\log\frac{s}{\pi}+
\log \theta_3\left(\frac{s}{K(v')};2i\frac{K(v)}{K(v')}\right)\\
-\frac{1}{4}\log[(K(v')-E(v'))(E(v')-v^2K(v'))]+2c_0+
\mathcal O(s^{-1}),
\end{aligned}
\end{equation}
where $v'=\sqrt{1-v^2}$, and $K(z)$, $E(z)$ are the complete elliptic integrals of first and second kind, respectively,
see \eqref{EllInt}.
\end{remark}


\bigskip

The asymptotics \eqref{FormDIZ} with the coefficients given by \eqref{G0}, \eqref{def:Omega}, \eqref{constant}
can be extended (with a worse error term) to various double scaling regimes
where $v_1$, $v_2$ are allowed to approach each other or the endpoints $\pm1 $ at a sufficiently slow rate as 
$s\to\infty$: Theorems \ref{Thmtau0}, \ref{Thmtau1} below.
In Section \ref{sec-tau0}, we prove

\begin{Thm} (Extension to slowly merging gaps)\label{Thmtau0}
For a fixed $\epsilon>0$, let $-1+\epsilon\le v_1 < v_2\le 1-\epsilon$ be such that $2\nu\equiv v_2-v_1>s^{-5/4}$.
Then the asymptotics \eqref{FormDIZ} hold with the error term $\mathcal O(s^{-1/9})$.
In particular, if $s\nu\to 0$ as $s\to\infty$, the expansion of the terms in \eqref{FormDIZ} gives
\begin{equation}\label{limThm1tau0}
\begin{aligned} \log P_s((-1,v_1)\cup (v_2,1))
&=
s^2\left(-\frac{1}{2}+\frac{|\alpha \beta|}{\log (\gamma \nu)^{-1}}\right)
-\frac{1}{2}\log s+\frac{1}{4}\log \log (\gamma \nu)^{-1}
-\langle \omega_0 \rangle^2 \log (\gamma \nu)^{-1}
\\& +\log \left(1+ (\gamma\nu)^{1-2|\langle \omega_0 \rangle |}\right)-\frac{1}{8} \log \left|\alpha \beta\right|
+2 c_0
+o(1),
\end{aligned}\end{equation}
where
$-\alpha=1+\frac{v_2+v_1}{2}>0$, $\beta=1-\frac{v_2+v_1}{2}>0$,
$\gamma=\frac{1}{8}(\beta^{-1}+|\alpha|^{-1})$, 
\begin{equation}\label{omegaintro}
\omega_0=\frac{s\sqrt{|\alpha \beta|}}{\log(\gamma\nu)^{-1}}>0,
\end{equation}
and $\langle x\rangle\in (-1/2,1/2]$ denotes the difference between $x$ and the integer nearest to it.

\end{Thm}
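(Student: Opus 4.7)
The first statement is a uniform extension of Theorem \ref{Thm} in which the inner endpoints $v_1,v_2$ are allowed to depend on $s$ subject to $\nu=(v_2-v_1)/2>s^{-5/4}$. I would approach it by revisiting every step of the Riemann--Hilbert analysis underlying Theorem \ref{Thm} while carefully tracking the $\nu$-dependence of every error. The four branch points $-1,v_1,v_2,1$ carry local Airy parametrices on disks of some radius $r$; disjointness of the disks at $v_1$ and $v_2$ forces $r<\nu$, while matching with the global parametrix forces $sr\gg 1$, so $r\sim\nu$ is optimal when $s\nu\to\infty$ and produces a matching error of order $(s\nu)^{-1}$. In the delicate regime $s\nu\to 0$, the two Airy disks would overlap, and one must replace the pair by a single local parametrix that resolves both $v_1,v_2$ simultaneously, to be matched with a global theta-function model whose modular parameter $\tau$ is itself approaching the real axis. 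Balancing these matching errors against the choices of the local scales yields the uniform error $\mathcal{O}(s^{-1/9})$ under the constraint $\nu>s^{-5/4}$.

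For the second assertion, assume $s\nu\to 0$ and insert the $\nu\to 0$ behaviour of the ingredients of \eqref{FormDIZ}. The key elliptic-integral asymptotics, obtained from \eqref{IJ} by the substitution $x=c+u$ with $c=(v_1+v_2)/2$ and careful isolation of the logarithmic divergence at $u=0$, are
\begin{align*}
I_0 &= \frac{1}{\sqrt{|\alpha\beta|}}\log\frac{1}{\gamma\nu}+o(1),\qquad J_0=\frac{\pi}{\sqrt{|\alpha\beta|}}+o(1),\\
I_2-c I_1 &= \sqrt{|\alpha\beta|}+o(1),
\end{align*}
where the constant $\gamma=(|\alpha|^{-1}+\beta^{-1})/8$ emerges as the natural exponential normalisation of the log (one checks $C=\int_0^\beta(c+u)/\sqrt{(\beta-u)(|\alpha|+u)}\,du=\sqrt{|\alpha\beta|}$ after the simplification $c+(\beta-|\alpha|)/2=0$). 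Substituting in \eqref{G0} yields $G_0=\tfrac12-|\alpha\beta|/\log(\gamma\nu)^{-1}+(\text{subdominant})$, producing the $s^2(-\tfrac12+|\alpha\beta|/\log(\gamma\nu)^{-1})$ term. By \eqref{def:Omega}, $\tau=iJ_0/I_0$ is purely imaginary with $\mathrm{Im}\,\tau\to 0^+$. Jacobi's imaginary modular transformation
\[
\theta_3(z;\tau)=\frac{e^{-\pi iz^2/\tau}}{\sqrt{-i\tau}}\,\theta_3(z/\tau;-1/\tau)
\]
then reduces $\theta_3(s\Omega;\tau)$ to a sum with modulus $-1/\tau$ of large imaginary part, so only the two terms nearest $\omega_0$ contribute; a direct computation shows that the exponential prefactor $e^{-\pi is^2\Omega^2/\tau}$ cancels the Gaussian peak of the transformed series, leaving
\[
\log\theta_3(s\Omega;\tau)=\tfrac12\log\log(\gamma\nu)^{-1}-\tfrac12\log\pi-\langle\omega_0\rangle^2\log(\gamma\nu)^{-1}+\log\!\bigl(1+(\gamma\nu)^{1-2|\langle\omega_0\rangle|}\bigr)+o(1).
\]
Finally, \eqref{x1x2eqn1}--\eqref{x1x2eqn2} give $x_1+x_2=c$ and $x_1 x_2=O(1/\log(\gamma\nu)^{-1})$, so $\log|q(v_1)q(v_2)|$ carries a $-2\log\log(\gamma\nu)^{-1}$ piece that yields, through $-\tfrac18\sum\log|q(y)|$ in \eqref{constant}, the $\tfrac14\log\log(\gamma\nu)^{-1}$ term of \eqref{limThm1tau0}, while $|q(-1)q(1)|\to|\alpha\beta|$ combines with $-\tfrac12\log(I_0/\pi)$ to produce the constant $-\tfrac18\log|\alpha\beta|$.

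The main obstacle is the uniform Riemann--Hilbert analysis in the overlap regime $s\nu\lesssim 1$: there the Airy parametrices at $v_1$ and $v_2$ cannot be used separately, and a merged local parametrix (built from an appropriate coalescing-endpoints model) must be constructed and matched with the near-degenerate global theta-function parametrix. Optimizing the resulting competing error sources against the radii of the local disks is what forces the nonstandard exponent $1/9$ and the threshold $\nu>s^{-5/4}$.
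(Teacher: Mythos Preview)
Your treatment of the second statement (the expansion \eqref{limThm1tau0} when $s\nu\to 0$) is essentially correct and matches the paper: expand $I_0$, $I_2-\tfrac{v_1+v_2}{2}I_1$, $J_0$ as $\nu\to 0$, apply the Jacobi imaginary transformation to $\theta_3(s\Omega;\tau)$, and track how the $\log\log(\gamma\nu)^{-1}$ and $\log|\alpha\beta|$ pieces from $c_1$ combine with those from $\log\theta_3$.

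Your plan for the first statement, however, rests on two misconceptions. First, the local parametrices at $-1,v_1,v_2,1$ are \emph{Bessel}, not Airy (these are hard edges of the sine-kernel support; see the model $\Psi$ in Section~\ref{SecR}). Second, and more importantly, the paper does \emph{not} introduce a merged local parametrix at the coalescing pair $v_1,v_2$. It keeps separate Bessel parametrices on disks of radius $\nu/3$; the obstruction you correctly anticipate---that the naive matching error $\Delta_1$ can blow up like $\nu^{-2|\langle s\Omega\rangle|}$ (see \eqref{LargeDelta})---is resolved by modifying the \emph{outer} parametrix. The original $\mathcal{N}$ is replaced by $\widetilde{\mathcal{N}}$ built from a different quartic-root factor $\delta(z)$ (in place of $\gamma(z)$) and a shifted Abel-map constant $d'$, chosen so that $\widetilde{\mathcal{N}},\widetilde{\mathcal{N}}^{-1}=\mathcal{O}(\nu^{-1/4-\epsilon})$ uniformly on all four boundary circles (equation \eqref{tau0orderN}). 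This yields a small-norm problem with jump error $\mathcal{O}(\log\nu^{-1}/(s\nu^{1/2+2\epsilon}))$, and the exponents $5/4$ and $1/9$ then fall out of integrating the resulting error $\mathcal{O}(s^{-1}\nu^{-3/2-\epsilon}+s^{-2}\nu^{-5/2-\epsilon})$ in the differential identity (Lemma~\ref{Lemtau0}), not from any balance of local-parametrix radii.

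So the gap is that you have the wrong mechanism for controlling the degeneration: the idea is to redesign the global theta-function parametrix, not to build a coalescing local model.
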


\begin{remark}
The rate $-5/4$ can be somewhat decreased with an appropriate change of the error term.
\end{remark}

\begin{remark}
Using the translational invariance of $\det (I-K_s)$, we see by the shift of variable $x\to x-\frac{v_1+v_2}{2}$ that
\[
P_s((-1,v_1)\cup (v_2,1))=P_s((\alpha,-\nu)\cup (\nu,\beta)).
\]
\end{remark}

Thus Theorem \ref{Thmtau0} provides the asymptotics for  $P_s((-1,v_1)\cup (v_2,1))$ in the case when
$|v_1-v_2|>s^{-5/4}$. 
In recent work \cite{FKduke}, we obtained the asymptotics of $P_s((-1,v_1)\cup (v_2,1))=
P_s((\alpha,-\nu)\cup (\nu,\beta))$ in the case
of two gaps merging into one, i.e. 
where $v_1$, $v_2$ are scaled with $s$ in such a way
that $|v_1-v_2|\le 1/(s \log^2 s)$ while being bounded away from $\pm 1$.
 We also showed implicitly that the asymptotics we obtained in that case
uniformly connect to those of fixed $v_1<v_2$. Theorem \ref{Thmtau0} provides  an explicit matching:
More precisely, we showed in \cite{FKduke} that\footnote{In \cite{FKduke}, $\beta-\alpha$ was arbitrary, but 
by a rescaling argument we can assume without loss that $\beta-\alpha=2$, which is the assumption
in the present work.}

\begin{Thm} (Splitting of the gap $(-1,1)$ \cite{FKduke})\label{Dukethm}
 As $s\to \infty$, uniformly for $\nu=\frac{v_2-v_1}{2} \in (0, \nu_0)$, where $ s\nu_0\log \nu_0^{-1} \to 0$, 
 \begin{equation} \label{duke1}
 \begin{aligned}
 \log P_s ((-1,v_1)\cup (v_2,1))&=
 -\frac{s^2}{2}
 +s\sqrt{|\alpha \beta|}\left(\omega_0-\frac{\langle\omega_0\rangle^2}{\omega_0}\right) -\frac{1}{4}\log s +c_0+
 \log \left(\frac{2^{2k^2-k}}{\pi^k}\frac{G(k+1)^4}{G(2k+1)}\right)\\
 &+\log \left(1+2\pi \kappa_{k-1}^2 (\gamma \nu)^{1+2\langle\omega_0\rangle}\right)
+ \log \left(1+(2\pi \kappa_k^2)^{-1}(\gamma \nu)^{1-2\langle\omega_0\rangle}\right)\\
&+\mathcal O \left(\max \left\{s\nu_0\log \nu_0^{-1},\frac{1}{\log \nu^{-1}_0},\frac{1}{s} \right\}\right)
,\qquad k=\omega_0-
\langle \omega_0 \rangle,
\end{aligned}\end{equation}  
where 
$G$ is the Barnes G-function, and where $\kappa_j$ is the leading coefficient of the Legendre polynomial of degree $j$ orthonormal on the interval $[-2,2]$,
given by 
\begin{equation} \label{def Leg coeff}
\kappa_j =4^{-j-1/2}\sqrt{2j+1}\frac{(2j)!}{j!^2},  \quad  j =1,2,\dots, \qquad \kappa_0=1/2, \qquad \kappa_{-1}=0. 
\end{equation} 
The rest of notation in \eqref{duke1} is from Theorem \ref{Thmtau0}.

 As $s\to \infty$, uniformly for $\nu\in(\nu_1,\nu_0)$, where
$ s\nu_0\log \nu_0^{-1}\to0,\,\, \frac{s}{\log \nu_1^{-1}}\to \infty$ (i.e., $k\to \infty$), formula \eqref{duke1} reduces to 
\begin{equation}\label{limThmduke}
\begin{aligned} \log P_s((-1,v_1)\cup (v_2,1))
=
s^2\left(-\frac{1}{2}+\frac{|\alpha \beta|}{\log (\gamma \nu)^{-1}}\right)
-\frac{1}{2}\log s+\frac{1}{4}\log \log (\gamma \nu)^{-1}
-\langle \omega_0 \rangle^2 \log (\gamma \nu)^{-1}
\\ +\log \left(1+ (\gamma\nu)^{1-2|\langle \omega_0 \rangle |}\right)-\frac{1}{8} \log \left|\alpha \beta\right|
+2 c_0+
 \mathcal O \left(\max \left\{s\nu_0\log \nu_0^{-1},\frac{1}{\log \nu^{-1}_0},\frac{\log \nu^{-1}_1}{s} \right\}\right).
\end{aligned}\end{equation}

\end{Thm}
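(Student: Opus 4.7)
The plan is to analyze $P_s((-1,v_1)\cup(v_2,1))$ via Riemann--Hilbert steepest descent, treating the small gap $(v_1,v_2)$ of half-width $\nu$ as a perturbation of the single-interval configuration $(-1,1)$ and installing a dedicated local parametrix at the merger. For Phase 1, I would represent $P_s(A)$ through the $2\times 2$ Its--Izergin--Korepin--Slavnov RH problem for the sine kernel and apply Deift--Zhou steepest descent with a $g$-function built from the single-cut equilibrium measure (the arcsine density) on $(-1,1)$; this is the natural leading-order measure as $\nu\to 0$. Outside a neighborhood of the small gap, the outer parametrix together with standard Airy parametrices at the hard edges $\pm 1$ produce the single-gap terms $-s^2/2-\tfrac14\log s+c_0$ appearing in \eqref{duke1}, in agreement with \eqref{1gap}--\eqref{const}.

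Phase 2 handles the small gap. Rescaling $z=\tfrac{v_1+v_2}{2}+\nu w$ converts the local RH problem to a fixed scale; its solution is built from Legendre polynomials on $[-2,2]$ of two consecutive degrees $k-1$ and $k$, where $k=\omega_0-\langle\omega_0\rangle$ is a filling-fraction quantization arising from the equilibrium-measure mass that would be transferred by the gap (this is how $\omega_0$ of \eqref{omegaintro} enters). Matching to the outer parametrix on a shrinking annulus yields: the correction $s\sqrt{|\alpha\beta|}(\omega_0-\langle\omega_0\rangle^2/\omega_0)$ from the local energy balance; the Barnes $G$-factor $G(k+1)^4/G(2k+1)$ from the norming constants of the Legendre polynomial system, via a Selberg-type integral, by analogy with the derivation of $c_0$ in the single-interval case; and the two correction factors $1+2\pi\kappa_{k-1}^2(\gamma\nu)^{1+2\langle\omega_0\rangle}$ and $1+(2\pi\kappa_k^2)^{-1}(\gamma\nu)^{1-2\langle\omega_0\rangle}$ arising from interference between the adjacent polynomial degrees, which becomes critical when $\langle\omega_0\rangle\to\pm 1/2$.

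Phase 3 assembles the global and local parametrices into a small-norm RH problem and integrates a differential identity for $\partial_s\log P_s$, using the one-interval limit \eqref{1gap} to fix the additive reference constant. This establishes \eqref{duke1} with the stated error, uniformly over $\nu\in(0,\nu_0)$ subject to $s\nu_0\log\nu_0^{-1}\to 0$. To reduce to \eqref{limThmduke}, apply Stirling's formula to $\kappa_j$ together with the large-$k$ expansion of $\log G(k+1)$ to simplify the $k$-dependent factors in the regime $k\to\infty$; the resulting $\log\log\nu^{-1}$ and $\log\nu^{-1}$ contributions then assemble into the explicit form and the result matches (as it must) the fixed-$(v_1,v_2)$ asymptotic \eqref{limThm1tau0}.

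The main obstacle is Phase 2: identifying the correct Legendre polynomial system on $[-2,2]$ for the local parametrix, pinning down the quantized integer $k$, and ensuring that the combined parametrix remains uniformly bounded as $\langle\omega_0\rangle$ passes through $\pm 1/2$ (where one of the two interference factors becomes comparable to $1$). A closely related difficulty is the explicit extraction of the Barnes $G$ prefactor from the Selberg integral associated with Legendre orthogonality, which plays the role here that the strong Szeg\H{o} theorem played in the derivation of $c_0$ for the single interval.
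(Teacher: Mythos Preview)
This theorem is not proved in the present paper: it is quoted from the authors' earlier work \cite{FKduke} and is stated here solely to exhibit the explicit matching with Theorem~\ref{Thmtau0} in the overlap region. There is therefore no in-paper proof to compare your proposal against.

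That said, your outline is a fair high-level sketch of the strategy actually used in \cite{FKduke}: Riemann--Hilbert steepest descent with the single-cut $g$-function for $(-1,1)$, a dedicated local parametrix at the merger built from orthogonal (Legendre) polynomials on a fixed rescaled interval, the degree $k$ selected by a quantized filling condition, and the Barnes~$G$ and $\kappa_j$ factors arising from the norming constants and matching. You also correctly identify the uniform-boundedness issue near $\langle\omega_0\rangle=\pm\tfrac12$ as the delicate point.

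One technical slip worth noting: the local parametrices at the gap endpoints in this problem are Bessel-type (built from the modified Bessel functions $I_0$ and $K_0$; see the $\Psi$-RH problem in Section~\ref{secPhi} of the present paper), not Airy. ``Hard edge'' is the right label, but hard edges take Bessel parametrices; Airy parametrices belong to soft edges. Also, your Phase~3 is slightly muddled: integrating a differential identity in $s$ (as in \cite{DIZ}) does not by itself let you use the one-interval result \eqref{1gap} as a boundary condition, since \eqref{1gap} fixes a value at each $s$, not at some reference $s_0$. In \cite{FKduke} the integration is with respect to an edge parameter (here effectively $\nu$), so that the $\nu\to 0$ endpoint genuinely reduces to the single-interval determinant and supplies the constant $c_0$.
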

 
Thus we see that the asymptotic regime of Theorem \ref{Thmtau0} overlaps with that of  Theorem \ref{Dukethm}
(for example, $\nu=s^{-6/5}$  belongs to both regimes), and 
comparing \eqref{limThm1tau0} with \eqref{limThmduke} we see an explicit matching. 
Taken together, these theorems describe the asymptotics for two large gaps and one large gap 
(note that \eqref{duke1} reduces to \eqref{1gap} when $\nu\to 0$ sufficiently rapidly)
as well as the transition between them.

\bigskip

Our strategy to prove Theorem \ref{Thm} relies on connecting the asymptotics for fixed $v_1<v_2$
with another double-scaling regime, namely the one where $v_1$ approaches $-1$, and $v_2$ approaches $1$.
In this regime the scaled gaps, $s(-1,v_1)$, $s(v_2,1)$,
although still growing with $s$, become small in comparison with the separation between them,
and we show that in that case $P_s((-1,v_1)\cup (v_2,1))$ splits to the main orders into the product
of $P_s(-1,v_1)$ and $P_s(v_2,1)$. The advantage is that for each of the separate gaps we can use
an appropriately rescaled asymptotics \eqref{1gap} which contains the constant $c_0$.
More precisely, we prove in Section \ref{secsep} by elementary arguments 
the following

\begin{Lemma}(Separation of gaps)\label{Lemmasep}
Let 
\[
A_s=\left(-1,-1+\frac{2t}{s}\right)\cup \left(1-\frac{2t}{s},1\right),
\qquad t=\frac{1}{2}(\log s)^{1/4}.
\]
Then
\begin{equation}
\log \det(I-K_s)_{A_s}= -t^2-\frac{1}{2}\log t+2c_0+\mathcal O(1/t),\qquad t\to\infty.
\end{equation}
\end{Lemma}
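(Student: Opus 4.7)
The plan is to reduce to the one-interval asymptotics \eqref{1gap} by showing the two gaps are essentially independent. Writing $A_s=A_0\cup A_1$ with $A_0=(-1,-1+2t/s)$ and $A_1=(1-2t/s,1)$, the translation-scaling invariance of the sine kernel implies that $K_s|_{A_j}$ is unitarily equivalent to $K_t|_{(-1,1)}$, so $P_s(A_0)=P_s(A_1)=P_t((-1,1))$. Applying \eqref{1gap} at parameter $t$ yields
\[
\log P_s(A_0)+\log P_s(A_1)=-t^2-\tfrac12\log t+2c_0+\mathcal O(1/t),
\]
which is exactly the claim; it thus suffices to show that the interaction term $\log P_s(A_s)-\log P_s(A_0)-\log P_s(A_1)$ is $o(1/t)$.

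Next I would decompose $K_s|_{A_s}=\tilde K+R$ on $L^2(A_s)=L^2(A_0)\oplus L^2(A_1)$, where $\tilde K=K_s|_{A_0}\oplus K_s|_{A_1}$ is block diagonal and $R$ is block antidiagonal, with blocks $K_s^{01}$ and $(K_s^{01})^*$. The factorization $I-K_s|_{A_s}=(I-\tilde K)(I-Y)$ with $Y:=(I-\tilde K)^{-1}R$ reduces matters to bounding $\log\det(I-Y)$. The crucial block-structure observation is that $Y$ has vanishing diagonal blocks, since those of $R$ do and $(I-\tilde K)^{-1}$ is block diagonal; hence $\Tr Y=0$ and $\det(I-Y)$ coincides with the $2$-regularized determinant $\det_2(I-Y)$. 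Carleman's inequality then gives $|\log\det_2(I-Y)|\le\|Y\|_2^2$ as soon as $\|Y\|_{\mathrm{op}}\le 1/2$, reducing everything to
\[
|\log\det(I-Y)|\le\|R\|_2^2\,\|(I-\tilde K)^{-1}\|_{\mathrm{op}}^2.
\]

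Both factors are easy to estimate. For the Hilbert--Schmidt norm, one has $|x-y|\ge 2-4t/s\ge 1$ for $x\in A_0$, $y\in A_1$ and $s$ large, so
\[
\|R\|_2^2=2\iint_{A_0\times A_1}\frac{\sin^2(s(x-y))}{\pi^2(x-y)^2}\,dx\,dy=\mathcal O(t^2/s^2).
\]
For the resolvent, positivity of $\tilde K$ gives $\|(I-\tilde K)^{-1}\|_{\mathrm{op}}=(1-\lambda_0)^{-1}$, where $\lambda_0$ is the top eigenvalue of $K_t|_{(-1,1)}$. The elementary bound $P_t((-1,1))=\prod_n(1-\lambda_n)\le 1-\lambda_0$, combined with \eqref{1gap} at $t$, yields $1-\lambda_0\ge\exp(-t^2/2+\mathcal O(\log t))$, and hence $\|(I-\tilde K)^{-1}\|_{\mathrm{op}}\le\exp(t^2/2+\mathcal O(\log t))$. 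Plugging in $t=\tfrac12(\log s)^{1/4}$, so that $t^2=(\log s)^{1/2}/4$, one obtains $\|Y\|_2^2=\mathcal O\bigl(s^{-2}\exp(\tfrac14(\log s)^{1/2})\bigr)$, which decays faster than any inverse power of $\log s$ and in particular is $o(1/t)$.

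The main obstacle is that a naive trace-norm bound on $R$, derivable from the Fourier factorization $K_s=(s/(2\pi))F^*F$ via Hölder for Schatten norms, only gives $\|R\|_1=\mathcal O(t)$ uniformly in $s$, which would be useless here. The vanishing $\Tr Y=0$ is precisely what upgrades the relevant norm on $R$ from trace norm to Hilbert--Schmidt, where the $\mathcal O(t/s)$ decay kicks in and makes the whole estimate work comfortably within the $\mathcal O(1/t)$ error budget.
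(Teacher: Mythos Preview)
Your argument is correct and genuinely different from the paper's. The paper rescales to $\det(I-K_{2t})_{A^{(w)}}$ with $w=s/(2t)$, then bounds the difference of Fredholm determinants (not their logarithms) directly from the Fredholm series: in each $m$-fold integral it replaces the $(m{+}1)\times(m{+}1)$ kernel matrix by the one with cross-interval entries zeroed out, and controls the difference via Hadamard's inequality (Proposition~\ref{propos}), summing to $\frac{C_3}{w}e^{C_4 u^2}$. Your approach is operator-theoretic: the factorisation $(I-\tilde K)(I-Y)$ together with the block-antidiagonality of $Y$ lets you pass to $\det_2$ and work in Hilbert--Schmidt norm, where the cross-kernel is tiny; the resolvent bound on $(I-\tilde K)^{-1}$ is then extracted from the one-gap asymptotics \eqref{1gap} itself. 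Each route has its own appeal: the paper's is entirely elementary (no regularized determinants, no spectral input beyond the existence of the Fredholm series), while yours is shorter and more conceptual, and the cancellation $\Tr Y=0$ is a clean way to dodge the trace-norm obstruction you identify. Both yield an interaction term that decays like $s^{-1+o(1)}$ (or better), far inside the $\mathcal O(1/t)$ budget.

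Two minor remarks. The inequality you cite as ``Carleman's'' is not the usual one-sided bound $|\det_2(I+A)|\le e^{\|A\|_2^2/2}$; what you actually use is the power-series estimate $|\log\det_2(I-Y)|\le\sum_{n\ge2}\tfrac{1}{n}\|Y\|_2^2\|Y\|_{\mathrm{op}}^{n-2}\le C\|Y\|_2^2$ for $\|Y\|_{\mathrm{op}}\le\tfrac12$, which is fine and standard. Second, you implicitly need $Y$ trace class for $\det(I-Y)$ to make sense before identifying it with $\det_2(I-Y)$; this holds since $R$ is trace class (smooth finite-rank-type kernel on bounded intervals) and $(I-\tilde K)^{-1}$ is bounded.
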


\begin{remark}\label{Lemmasepremark}
The rate of increase of $t$, $t=\frac{1}{2}(\log s)^{1/4}$, can be replaced with a slower rate of growth with $s$,
and the statement will still hold (cf. Theorem \ref{Thmtau1} below).
\end{remark}

Now we describe the steps of the proof of Theorem \ref{Thm}. First, we obtain in Section \ref{secDI} an identity (equation \eqref{diffid new} of  Lemma \ref{diffFred})
for the derivative $\frac{\partial}{\partial v_2}\log P_s((-1,v_1)\cup (v_2,1))$ in terms of 
a certain Riemann-Hilbert (RH) problem, the $\Phi$-RH problem. The fact that we use a differential identity with respect to one of the edges ($v_2$) of the gaps is crucial in allowing us to determine the constant $c_1$. 


We then give in  Section \ref{SecR} the asymptotic solution of the $\Phi$-RH problem as $s\to\infty$ with $v_1$, $v_2$ fixed. This problem is very similar to that solved in \cite{DIZ}, and its solution involves the Jacobian $\theta$-functions (we give a collection of various useful properties of $\theta$-functions in the Appendix \ref{App1} below).  
In Section \ref{secextend}, we show that the solution of the $\Phi$-RH problem can be extended to the double-scaling range where $v_2$ is allowed to approach $1$ at such a rate that $(1-v_2)s\to\infty$ (by symmetry, also
$v_1$ is allowed to approach $-1$  so that $(1+v_1)s\to\infty$). It is this extension which eventually provides a connection with Lemma \ref{Lemmasep}.  

In Section \ref{secPrelim},
we then substitute the solution
into our differential identity (see \eqref{exact}, \eqref{notexact2}).
In Proposition \ref{PropD}, we
characterize the main asymptotic terms (equation \eqref{DD}) in the differential identity 
using averaging with respect to fast oscillations.

A large part of our work, Sections \ref{SecLead}, \ref{SecFluc}, \ref{SecConst}, is to bring the expression \eqref{DD} to an explicit form. This relies, apart from the use of standard formulae,
on (specific to our setting) identities for $\theta$-functions obtained in Lemma \ref{ThmThetaids} of Section \ref{secthid}. 
As a result, we obtain an explicit form \eqref{newD} for the non-small part \eqref{DD}  of the right-hand side of the differential identity \eqref{diffid new}. 

We then, by Proposition \ref{PropD}, 
integrate the resulting identity with respect to $v_2$ from the point when $v_2=-v_1$ is close to $1$
to a fixed $v_2=-v_1$, and then, with $v_1$ fixed, over $v_2$, so that at one of the integration limits we can use the result of Lemma \ref{Lemmasep}.  This proves Theorem \ref{Thm}. Thus
the part $2c_0$ of the constant $c_1$ in \eqref{constant} comes from Lemma \ref{Lemmasep}, while the rest of
$c_1$ comes from the integration. 

As a byproduct of our proof we also obtain the following extension of the asymptotics \eqref{FormDIZ}.
\begin{Thm} (Extension to separation of gaps)\label{Thmtau1}
For a fixed $\epsilon>0$, let $-1< v_1 < v_2<1$ be such that $v_2-v_1\ge \epsilon$,
$(1-v_2)s\to\infty$, $(1+v_1)s\to\infty$.
Then the asymptotics \eqref{FormDIZ} hold with the error term
$\mathcal O \left(\max \left\{\frac{1}{(1-v_2)s},\frac{1}{(1+v_1)s}\right\}\right)$. 
\end{Thm}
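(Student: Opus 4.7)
The proof of Theorem \ref{Thmtau1} is essentially a byproduct of the proof of Theorem \ref{Thm}: all the ingredients are already in place, and what remains is error bookkeeping in the extended range $(1-v_2)s,(1+v_1)s\to\infty$. The plan is to use the differential identity \eqref{diffid new} of Lemma \ref{diffFred} for $\partial_{v_2}\log P_s$, together with the symmetric identity for $\partial_{v_1}\log P_s$, and integrate from a reference point $(v_1^{(0)},v_2^{(0)})$ bounded away from $\mp 1$ with $v_2^{(0)}-v_1^{(0)}\ge\epsilon$, where Theorem \ref{Thm} supplies the initial value of $\log P_s$.

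The first step is to invoke the extension of the $\Phi$-RH analysis carried out in Section \ref{secextend}, which is designed precisely to handle the regime $(1-v_2)s\to\infty$. The global $\theta$-function outer model is unchanged; what must be modified are the local parametrices near $v_2$ and near $1$, which are constructed on a scale adapted to the shrinking parameter $1-v_2$. The condition $(1-v_2)s\to\infty$ is exactly what guarantees that the matching between local and global models remains uniform and that the small-norm RH problem for the residual $R$ delivers a bound of size $\mathcal{O}(1/((1-v_2)s))$, with the symmetric statement on the $v_1$ side.

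Next I would substitute this extended RH solution into the differential identity and follow the algebraic derivation of Sections \ref{secPrelim}--\ref{SecConst}. The $\theta$-function identities of Lemma \ref{ThmThetaids} are purely algebraic and insensitive to the double scaling, so the explicit form \eqref{newD} for the main part of $\partial_{v_2}\log P_s$ persists; only the error in the representation \eqref{notexact2} is enlarged, becoming $\mathcal{O}(1/((1-v_2)^2 s))$ as a consequence of the weaker bound on $R$ and of one $\partial_{v_2}$ differentiation of the local scale $1-v_2$. Thus $\partial_{v_2}\log P_s$ equals $\partial_{v_2}$ of the right-hand side of \eqref{FormDIZ} up to this error. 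Integrating in $v_2$ from $v_2^{(0)}$ to the target $v_2$ reproduces the right-hand side of \eqref{FormDIZ} plus an integrated error of order $\int_{v_2^{(0)}}^{v_2}dv'_2/((1-v'_2)^2 s)=\mathcal{O}(1/((1-v_2)s))$, dominated by the contribution near the upper limit. Applying the analogous argument to the symmetric $\partial_{v_1}$ identity yields the bound $\mathcal{O}(1/((1+v_1)s))$, so that together they produce the claimed error.

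The main obstacle is the verification of uniformity of the parametrix estimates in Section \ref{secextend} as $v_2\to 1$: in particular, one must ensure that the local Airy-type parametrices at $v_2$ and at $1$ remain compatible when the two points come close on the scale $1-v_2\sim 1/s$ from above, and that the various $\theta$-function derivatives entering \eqref{newD} stay bounded uniformly in $v_2$ up to the endpoint. Once this uniformity is in hand, the integration of the differential identity is immediate and delivers the asymptotics \eqref{FormDIZ} with the stated error term.
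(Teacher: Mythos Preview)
Your outline is essentially the paper's own route: use Theorem~\ref{Thm} at a fixed reference configuration, then integrate the differential identity in $v_2$ (and symmetrically in $v_1$) through the extended regime of Section~\ref{secextend}; the paper packages this as ``apply Part~(c) of Proposition~\ref{PropD} together with \eqref{averagelogOm} and the symmetry of Remark~\ref{use of Prop}.''

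Two points of correction. First, the local parametrices at $v_2$ and $1$ are Bessel-type (built from $I_0,K_0$; see \eqref{def:Psiout}), not Airy-type. Second, and more substantively, your sentence ``$\partial_{v_2}\log P_s$ equals $\partial_{v_2}$ of the right-hand side of \eqref{FormDIZ} up to this error'' is not correct pointwise. From \eqref{notexact2} the derivative $\mathcal F_s(v_1,v_2)$ contains the term $\tfrac{i\zeta_0\gamma_0^2}{4}W(s\Omega)$, which is oscillatory and $\mathcal O(1/(1-v_2))$, not small; and even after replacing $W(s\Omega)$ by its $\omega$-average to form $D(v_1,v_2)$, equation~\eqref{newD} still leaves the two extra terms
\[
\frac{\partial\tau}{\partial v_2}\int_0^1\partial_\tau\log\theta_3(\omega;\tau)\,d\omega-\frac{\partial\tau}{\partial v_2}\,\partial_\tau\log\theta_3(s\Omega;\tau),
\]
which are individually $\mathcal O(1/(1-v_2))$ but cancel to $\mathcal O(1/((1-v_2)s))$ only after integration in $v_2$, via the Fourier/averaging argument (this is exactly Proposition~\ref{PropD}(c) and \eqref{averagelogOm}). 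So the mechanism producing the final error $\mathcal O(1/((1-v_2)s))$ is not merely ``integrate a pointwise $\mathcal O(1/((1-v_2)^2 s))$ remainder,'' but also an integration-by-parts averaging of $\mathcal O(1)$ oscillations against $e^{2\pi i j s\Omega}$; your write-up should make that step explicit.
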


Note that
the use of Toeplitz determinants in \cite{K04}, \cite{DIKZ} was essential to determine the constant $c_0$ in the asymptotics for one gap. In this paper, however, we use Lemma \ref{Lemmasep} which, in turn, relies on the already known constant $c_0$. 

\section{Separation of gaps: proof of Lemma \ref{Lemmasep}}\label{secsep}
For $w>2$ let
\[
A^{(w)}=A_1^{(w)}\cup A_2^{(w)},\qquad
A_1^{(w)}=(-w,-w+1),\qquad A_2^{(w)}=(w-1,w).
\]
With $t$ as in Lemma \ref{Lemmasep} and $v=s/(2t)$, we have
\begin{equation}
\det(I-K_s)_{A_s}=\det(I-K_{2t})_{A^{(v)}}.
\end{equation}

By \eqref{1gap} and translational invariance, as $t\to\infty$,
\[
\det(I-K_{2t})_{A_1^{(v)}}=\det(I-K_{2t})_{A_2^{(v)}}=\det(I-K_{t})_{(-1,1)}=
e^{c_0}t^{-1/4}e^{-t^2/2}(1+\mathcal O(1/t)).
\]
Therefore, upon setting $u=2t$, $w=v$, we obtain
Lemma \ref{Lemmasep} as a direct consequence of the following lemma we now prove.

\begin{Lemma}\label{seplemma2}
Let $u,w>2$. There exist absolute constants $C_3,C_4>0$ such that 
\begin{equation}\label{decor1}
\left|\det(I-K_u)_{A^{(w)}}-\det(I-K_u)_{A_1^{(w)}}\det(I-K_u)_{A_2^{(w)}}\right|
\leq \frac{C_3}{w} e^{C_4 u^2}.
\end{equation}
\end{Lemma}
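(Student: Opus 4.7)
The plan is to view $K_u$ on $L^2(A^{(w)})=L^2(A_1^{(w)})\oplus L^2(A_2^{(w)})$ as a block operator and separate its block-diagonal part. Letting $P_j$ be the orthogonal projection onto $L^2(A_j^{(w)})$, I set $L := P_1K_u P_1 + P_2 K_u P_2$, for which the factorisation $\det(I-L) = \det(I-K_u)_{A_1^{(w)}}\det(I-K_u)_{A_2^{(w)}}$ is immediate. The lemma then reduces to bounding $|\det(I-K_u)-\det(I-L)|$, for which I apply the standard trace-class perturbation inequality $|\det(I-A)-\det(I-B)|\le \|A-B\|_1 \exp(\|A\|_1+\|B\|_1+1)$. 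Since $K_u\ge 0$ on $L^2(A^{(w)})$ (from the Fourier representation of the sine kernel) with $K_u(x,x)\equiv u/\pi$, I immediately get $\|K_u\|_1 = \|L\|_1 = 2u/\pi$, so the exponential factor is bounded by $e^{1+4u/\pi}$, which is at most $e^{C_4 u^2}$ for an appropriate $C_4$ and all $u>2$. Everything therefore reduces to the trace-norm estimate $\|K_u-L\|_1\le C/w$.

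The off-diagonal piece is $K_u-L = M+M^*$ with $M := P_1 K_u P_2$, so it is enough to show $\|M\|_1\le C/w$. The kernel of $M$ is $\frac{\sin u(x-y)}{\pi(x-y)}$ for $x\in A_1^{(w)}$, $y\in A_2^{(w)}$, where $|x-y|\ge 2w-2$ and the singularity is absent. Writing $\sin u(x-y) = \frac{1}{2i}(e^{iu(x-y)}-e^{-iu(x-y)})$ and factoring out the unitary multiplications by $e^{\pm iux}$ on $L^2(A_1^{(w)})$ and $e^{\mp iuy}$ on $L^2(A_2^{(w)})$ (which preserve trace norm), I reduce the problem to bounding $\|T\|_1$, where $T$ is the integral operator with kernel $1/(x-y)$. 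Centering about the midpoints through $x=-w+\tfrac12+a_1$, $y=w-\tfrac12+a_2$ with $|a_i|\le\tfrac12$, the Taylor series
\[
\frac{1}{x-y} = -\sum_{k=0}^{\infty}\frac{(a_1-a_2)^k}{(2w-1)^{k+1}}
\]
converges and, after multinomial expansion, exhibits $T$ as an absolutely summable series of rank-one operators with kernels proportional to $a_1^j a_2^{k-j}$. Each rank-one term has trace norm equal to the product of the $L^2$-norms of its factors on $[-\tfrac12,\tfrac12]$, and a short geometric estimate (using $\binom{k}{j}\le 2^k$ and $\|a^m\|_{L^2([-1/2,1/2])}\le 2^{-m}$) yields $\|T\|_1\le \frac{1}{2w-2}$. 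Hence $\|M\|_1\le C/w$ and $\|K_u-L\|_1\le C'/w$.

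The main obstacle I anticipate is that the Hilbert--Schmidt norm of the off-diagonal block decays only like $w^{-1/2}$, which is not sharp enough for the target bound; the upgrade to $w^{-1}$ in trace norm relies on the explicit rank-one decomposition above, exploiting the analyticity of $1/(x-y)$ on the product of the two separated intervals. Once this trace-norm estimate is in hand, combining the two ingredients gives
\[
\bigl|\det(I-K_u)_{A^{(w)}} - \det(I-K_u)_{A_1^{(w)}}\det(I-K_u)_{A_2^{(w)}}\bigr| \le \|K_u-L\|_1\, e^{1+4u/\pi}\le \frac{C_3}{w}\, e^{C_4 u^2},
\]
which is the desired statement of Lemma~\ref{seplemma2}.
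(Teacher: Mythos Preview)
Your proof is correct and takes a genuinely different route from the paper.

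The paper works directly with the Fredholm series: it defines the block-diagonal kernel $\widehat K_u$ (identical to your $L$), observes that $\det(I-\widehat K_u)$ factorises, and then bounds the difference of Fredholm expansions term by term. For each $m+1$-fold integral they compare the two $(m+1)\times(m+1)$ determinants via a Hadamard-type estimate (Proposition~\ref{propos}), picking up a factor $\frac{1}{w}(C_1 u)^m\sqrt{m!}$; summing in $m$ and applying Cauchy--Schwarz produces the $e^{C_4 u^2}$ factor.

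Your approach replaces this combinatorial argument by the standard trace-class perturbation inequality $|\det(I-A)-\det(I-B)|\le\|A-B\|_1\,e^{1+\|A\|_1+\|B\|_1}$, which immediately reduces everything to a trace-norm estimate of the off-diagonal block. Your rank-one Taylor decomposition of the kernel $1/(x-y)$ cleanly yields $\|T\|_1\le 1/(2w-2)$, while positivity of $K_u$ gives $\|K_u\|_1=\|L\|_1=2u/\pi$. This is slicker and in fact sharper: you obtain an exponential factor $e^{1+4u/\pi}$, linear in $u$, which you then relax to $e^{C_4 u^2}$ only to match the stated bound (this is harmless for the application in Lemma~\ref{Lemmasep}, where $u=2t$ and $w=s/(2t)$ with $t=\tfrac12(\log s)^{1/4}$). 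The paper's method has the advantage of being fully elementary, relying on nothing beyond Hadamard's inequality, whereas yours invokes a standard but nontrivial fact from trace-ideal theory.
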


We start with 
\begin{Prop}\label{propos}
Let $m\in\{0,1,\dots\}$ and $B$ be an $m+1\times m+1$ matrix satisfying $|B_{jk}|\leq u$ for all $j,k=1,\dots,m+1$. Let $\widehat X$ be a set of indices $j,k$ such that $|B_{jk}|<1/w$ for all $(j,k)\in \widehat X$ and set
\[
\widehat B_{jk}=\begin{cases}
B_{jk}&{\rm if }\,\,(j,k)\not \in \widehat X,\\
0&{\rm if} \, \, (j,k)\in \widehat X.
\end{cases}
\]
Then
\begin{equation}\label{Hadamard}
|\det B-\det \widehat B|\leq \frac{1}{w} (C_1 u)^m \sqrt{m!}
\end{equation}
for a sufficiently large absolute constant $C_1>0$.
\end{Prop}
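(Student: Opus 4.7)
The plan is to pass from $B$ to $\widehat B$ by zeroing out the entries indexed by $\widehat X$ one at a time, controlling each elementary step by the linearity of the determinant in a single entry combined with Hadamard's inequality for the accompanying $m\times m$ minor.

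Concretely, I would enumerate $\widehat X=\{(j_1,k_1),\ldots,(j_N,k_N)\}$, $N\leq(m+1)^2$, and build a telescoping chain $B=B^{(0)},B^{(1)},\ldots,B^{(N)}=\widehat B$ in which $B^{(i)}$ is obtained from $B^{(i-1)}$ by replacing its $(j_i,k_i)$ entry by $0$. Cofactor expansion along row $j_i$ gives
\[
\det B^{(i-1)}-\det B^{(i)}=B_{j_ik_i}\,M_i,
\]
where $M_i$ is, up to sign, the minor of $B^{(i-1)}$ obtained by deleting row $j_i$ and column $k_i$. All entries of $B^{(i-1)}$ have modulus at most $u$ (they are either original entries of $B$ or zero), so Hadamard's inequality applied to the rows of $M_i$ gives $|M_i|\leq (u\sqrt{m})^m=u^m m^{m/2}$. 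Combining $|B_{j_ik_i}|<1/w$ with the telescope and the triangle inequality yields
\[
|\det B-\det \widehat B|\leq \frac{N}{w}\,u^m m^{m/2}\leq \frac{(m+1)^2}{w}\,u^m m^{m/2}.
\]

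It then remains to verify $(m+1)^2 m^{m/2}\leq C_1^m\sqrt{m!}$ for a suitable absolute constant $C_1$. By Stirling one has $\sqrt{m!}\geq c\,(m/e)^{m/2}(2\pi m)^{1/4}$, so the desired bound reduces to showing that $(m+1)^2 e^{m/2}(2\pi m)^{-1/4}$ is $\leq $ const $\cdot C_1^m$, which holds uniformly in $m\geq 0$ as soon as $C_1>\sqrt{e}$ is taken large enough to absorb the polynomial prefactor together with the finitely many small values of $m$. I do not anticipate any serious obstacle: the only point requiring care is precisely this conversion from Hadamard's $m^{m/2}$ to $\sqrt{m!}\sim(m/e)^{m/2}$, but the discrepancy is just a factor $e^{m/2}$ times a polynomial in $m$, which is harmlessly absorbed into $C_1^m$, so no finer expansion of the determinant (e.g.\ summed over subsets of zeroed entries) is needed.
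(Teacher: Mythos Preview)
Your proof is correct and follows essentially the same approach as the paper: both telescope from $B$ to $\widehat B$ (you entry-by-entry, the paper row-by-row), bound each step by $1/w$ times a minor controlled by Hadamard's inequality, arrive at the same intermediate bound $\frac{(m+1)^2}{w}u^m m^{m/2}$, and then absorb the $m^{m/2}$ into $C_1^m\sqrt{m!}$ via Stirling.
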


\begin{proof}
Let $B^{(0)}=B$ and
\begin{equation}
B^{(\ell)}_{jk}=\begin{cases}
B_{jk}&{\rm if }\,\,(j,k)\not \in \widehat X,\\
0&{\rm if} \, \, (j,k)\in \widehat X\, {\rm and}\, j\le\ell
\end{cases}\qquad \ell=1,\dots,m+1.
\end{equation}
In particular,
$
\widehat B=B^{(m+1)}
$.

Expanding $B$ and $B^{(1)}$ in the first row we have
\begin{equation}\label{B1}
|\det B- \det B^{(1)}|\leq \frac{1}{w} \sum_{k=1}^{m+1} |\det B^{(0)(1k)}|,
\end{equation}
where $B^{(0)(jk)}$ is the $m\times m$ matrix obtained by removing the $j$'th row and the $k$'th column from $B=B^{(0)}$. Similarly, for any $\ell=1,2,\dots,m$, expanding in the $\ell+1$ row,
we have
\begin{equation}\label{Bl}
|\det B^{(\ell)}- \det B^{(\ell+1)}|\leq \frac{1}{w} \sum_{k=1}^{m+1} |\det B^{(\ell)(\ell+1\, k)}|,
\end{equation}
Inequalities (\ref{B1}), (\ref{Bl}) imply
\begin{equation}
|\det B- \det \widehat B|\leq \frac{1}{w} \sum_{\ell=0}^{m} \sum_{k=1}^{m+1} |\det B^{(\ell)(\ell+1\, k)}|.
\end{equation}

Hadamard's inequality yields
\begin{equation}
|\det B^{(\ell)(\ell+1\, k)}|\leq u^{m} m^{m/2},
\end{equation}
and so
\begin{equation}
|\det B-\det \widehat B|\leq\frac{1}{w} (m+1)^2 u^m m^{m/2}\leq \frac{1}{w} (C_1 u)^m 
\sqrt{m!}
\end{equation}
for some $C_1>0$.
\end{proof}

\noindent{\it Proof of Lemma \ref{seplemma2}.}
Let
\begin{equation}\widehat K_u(x,y)=\begin{cases}
K_u(x,y)&{\rm if}\,\, x,y\in A_1^{(w)}\, {\rm or}\, x,y\in A_2^{(w)}\\
0&{\rm otherwise}.
\end{cases}
\end{equation}

If we set
\begin{equation}B=\det(K_u(x_j,y_k))_{j,k=1}^{m+1},\qquad \widehat B=\det(\widehat K_u(x_j,y_k))_{j,k=1}^{m+1},
\end{equation}
with $x_j,y_k\in A^{(w)}$, then $B, \widehat B$ satisfy the conditions of Proposition 
\ref{propos} for some $\widehat X$.
By \eqref{Hadamard} and the definition of the Fredholm determinant, 
we have for sufficiently large absolute constants $C_j>0$ 
\begin{equation}\label{dethatKs1}
\begin{aligned}
&|\det(I-K_u)_{A^{(v)}}-\det(I-\widehat K_u)_{A^{(v)}}|\\
&\leq\sum_{m=0}^{\infty} \frac{1}{(m+1)!}
\int_{A^{(w)}}dx_1\cdots \int_{A^{(w)}}dx_{m+1}
\left|\det(K_u(x_i,y_j))_{i,j=1}^{m+1}-
\det(\widehat K_u(x_i,y_j))_{i,j=1}^{m+1}\right|\\
&\leq \frac{1}{w}\sum_{m=0}^{\infty} \frac{(C_2 u)^m} {\sqrt{m!}}
\leq  \frac{1}{w}\sqrt{\sum_{m=0}^{\infty} \frac{(C_2 u)^{2m}(m+1)^2} {m!}}
\sqrt{\sum_{m=0}^{\infty} \frac{1}{(m+1)^2}}
\leq \frac{C_3}{w} e^{C_4 u^2}.
\end{aligned}
\end{equation}

The reason for introducing $\widehat K$ is that the corresponding Fredholm determinant
splits into the product of the determinants over $A_1^{(w)}$ and $A_2^{(w)}$.
Indeed,
\begin{multline}
\det(I-\widehat K_u)_{A^{(w)}}=I+\sum_{m=1}^\infty \sum_{k=0}^m \frac{(-1)^m}{(m-k)!k!}\\ \times \int_{\substack{x_1,\dots,x_k \in A_1^{(w)}\\ x_{k+1},\dots,x_m\in A_2^{(w)}}}\det K_u(x_i-x_j)_{i,j=1}^k\det K_u(x_i-x_j)_{i,j=k+1}^m dx_1\dots dx_m\\
=\det(I-K_u)_{A_1^{(w)}}\det(I-K_u)_{A_2^{(w)}}.
\end{multline}
Combining this with the estimate \eqref{dethatKs1} proves the lemma.
$\Box$

\section{Differential Identity}\label{secDI}
Consider the following Riemann-Hilbert problem for a $2\times 2$ matrix valued
function $\Phi(w)$. Let $\Gamma_{\Phi}$ be the contour shown in Figure \ref{ContourPhi},
where as usual the $+$ side of the contour is on the left w.r.t. the direction shown by the arrow, and the $-$ side is on the right.

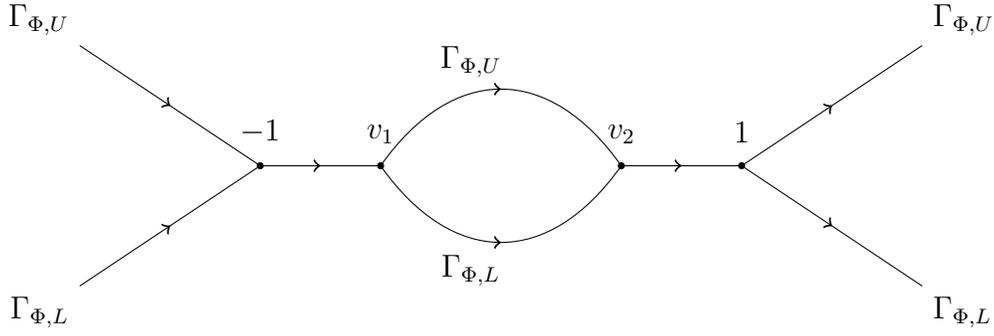
\begin{figure}\begin{center}
\begin{tikzpicture}[scale=0.8]
\draw [decoration={markings, mark=at position 0.5 with {\arrow[thick]{>}}},
        postaction={decorate}] (3,0) -- (5,0);
\draw [decoration={markings, mark=at position 0.5 with {\arrow[thick]{>}}},
        postaction={decorate}] (9,0) -- (11,0);
\draw [decoration={markings, mark=at position 0.5 with {\arrow[thick]{>}}},
        postaction={decorate}] (0,2) -- (3,0); 
\draw [decoration={markings, mark=at position 0.5 with {\arrow[thick]{>}}},
        postaction={decorate}] (11,0) -- (14,2); 
\draw [decoration={markings, mark=at position 0.5 with {\arrow[thick]{>}}},
        postaction={decorate}] (11,0) -- (14,-2); 
\draw  [decoration={markings, mark=at position 0.5 with {\arrow[thick]{>}}},
        postaction={decorate}] (5,0) .. controls (6.2,-1.7) and (7.8,-1.7) .. (9,0);     
\draw  [decoration={markings, mark=at position 0.5 with {\arrow[thick]{>}}},
        postaction={decorate}] (0,-2) -- (3,0);
\draw  [decoration={markings, mark=at position 0.5 with {\arrow[thick]{>}}},
        postaction={decorate}] (5,0) .. controls (6.2,1.7) and (7.8,1.7) .. (9,0);  
\node [above left] at (0,2) {$\Gamma_{\Phi,U}$};
\node [above right] at (14,2) {$\Gamma_{\Phi,U}$};
\node [above ] at (6.5,1.3) {$\Gamma_{\Phi,U}$};
\node [below left] at (0,-2) {$\Gamma_{\Phi,L}$};
\node [below right] at (14,-2) {$\Gamma_{\Phi,L}$};
\node [below ] at (6.5,-1.3) {$\Gamma_{\Phi,L}$};
\node [above ] at (9,0.2) {$v_2$};
\node [above ] at (11,0.2) {$1$};
\node [above ] at (5,0.2) {$v_1$}; 
\node [above ] at (3,0.2) {$-1$};
\draw[fill] (9,0) circle [radius=0.05];
\draw[fill] (3,0) circle [radius=0.05]; 
\draw[fill] (5,0) circle [radius=0.05];   
\draw[fill] (11,0) circle [radius=0.05];  
\end{tikzpicture}
\caption{The jump contour $\Gamma_{\Phi}$}\label{ContourPhi}
\end{center}
\end{figure}
\subsubsection*{RH problem for $\Phi$}
\begin{itemize}
\item[(a)] $\Phi$ is analytic for $w \in \mathbb C \setminus \Gamma_{\Phi}$.

\item[(b)] $\Phi$ has $L^2$ boundary values $\Phi_+(w)$, $\Phi_-(w)$ as the point 
$w\in\Gamma_{\Phi}$ is approached nontangentially from the $+$ side, $-$ side, respectively. 
These values are related by the jump condition $\Phi_+(w)=\Phi_-(w)J_{\Phi}(w)$, where 
\begin{equation}
J_{\Phi}(w)=
\begin{cases}
\begin{pmatrix}
0&-1\\1&0
\end{pmatrix}&\textrm{for }w\in I= (-1,v_1)\cup (v_2,1),\\
\begin{pmatrix}
1&0\\1&1
\end{pmatrix}
&\textrm{for } w\in \Gamma_{\Phi, \rm L},\\
\begin{pmatrix}
1&-1\\0&1
\end{pmatrix}
&\textrm{for } w\in \Gamma_{\Phi, \rm U}.
\end{cases}
\end{equation}
\item[(c)] As $w \to \infty$,
\begin{equation}\label{44}
\Phi(w)=\left(I+\mathcal O\left(\frac{1}{w}\right)\right)
\begin{pmatrix} e^{isw} & 0 \cr 0 & e^{-isw}
\end{pmatrix}.
\end{equation}
\end{itemize}

\noindent{\it Remarks}

\noindent
1) As usual, we write for brevity
\[
\begin{pmatrix} e^{isw} & 0 \cr 0 & e^{-isw}
\end{pmatrix}=
e^{isw \sigma_3},\qquad
\sigma_3=\begin{pmatrix} 1 & 0 \cr 0 & -1\end{pmatrix}.
\]

\noindent
2) By general theory, see, e.g., \cite{Dbook}, if this problem has a solution $\Phi(w)$, then the solution is unique.
Although it is unclear a-priori that it has a solution, we will show this in Section \ref{SecR} for large $s$, which is the case we need.

The rest of this section will be devoted to 2 different proofs of the following

\begin{Lemma} (Differential identity)\label{diffFred}
The Fredholm determinant (\ref{1}) satisfies:
\begin{equation}\label{diffid new}
\frac{\partial}{\partial v_2}\det(I-K_s)_{(-1,v_1)\cup (v_2,1)}=\mathcal F_s(v_1,v_2)\equiv\frac{i}{2\pi}\left[\Phi_+^{-1}(v_2)\Phi_+'(v_2)\right]_{12},
\end{equation}
where $\Phi'(z)=\frac{d}{dz}\Phi(z)$ and
$\Phi_+^{-1}(v_2)\Phi_+'(v_2)=\lim_{\epsilon\downarrow 0}
\Phi^{-1}(v_2+i\epsilon)\Phi'(v_2+i\epsilon)$.
Moreover, if $-v_1=v_2=v$,
\begin{equation}\label{diffidsym}
\frac{\partial}{\partial v}\det(I-K_s)_{(-1,-v)\cup(v,1)}=2\mathcal F_s(-v,v). 
\end{equation}

\end{Lemma}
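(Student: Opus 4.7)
The plan is to identify the $\Phi$-RH problem as the Its-Izergin-Korepin-Slavnov (IIKS) encoding of the sine-kernel operator on $I=(-1,v_1)\cup(v_2,1)$ and then to invoke the classical endpoint differential identity for Fredholm determinants of integrable kernels. The sine kernel admits the integrable form
\[
K_s(x,y)=\frac{f(x)^T g(y)}{x-y},\qquad f(x)=\frac{1}{2\pi i}\binom{e^{isx}}{-e^{-isx}},\quad g(y)=\binom{e^{-isy}}{e^{isy}},
\]
with $f(x)^Tg(x)\equiv 0$, so that the resolvent $R=(I-K_s)^{-1}K_s$ is again integrable with a well-defined value on the diagonal.

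First, I would verify that the $\Phi$-RH problem is, after two standard transformations, the IIKS RH problem for the data $(f,g)$. The unopened IIKS problem has a single jump $I-2\pi i f g^T$ on $I$ and normalization $Y\to I$ at infinity; conjugating by $e^{isw\sigma_3}$ produces the asymptotic factor appearing in \eqref{44}, and opening lenses around each component of $I$ via a triangular factorization of $I-2\pi i f g^T$ reproduces exactly the jumps on $I$, $\Gamma_{\Phi,\mathrm{U}}$ and $\Gamma_{\Phi,\mathrm{L}}$. Consequently the ingredients $F=(I-K_s)^{-1}f$ and $G=(I-K_s)^{-1}g$ of the integrable representation $R(x,y)=F(x)^T G(y)/(x-y)$ may be reconstructed from the boundary values of $\Phi_+$: after undoing the outer conjugation, $e^{-isw\sigma_3}f(w)=(2\pi i)^{-1}(1,-1)^T$ and $e^{isw\sigma_3}g(w)=(1,1)^T$ are constant vectors, and $F$, $G$ reduce to $\Phi_+$ acting on these fixed vectors.

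Next, starting from the classical endpoint derivative formula $\partial_{v_2}\log\det(I-K_s)_I=R(v_2,v_2)$, valid with a plus sign because $v_2$ is a left endpoint of $I$, one has $\partial_{v_2}\det(I-K_s)_I=\det(I-K_s)_I\cdot R(v_2,v_2)$. Using $F(v_2)^T g(v_2)=0$ and L'Hopital, $R(v_2,v_2)=-F(v_2)^T G'(v_2)$; substituting the $\Phi$-representation and expanding
\[
\Phi_+^{-1}(y)\Phi_+(v_2)=I-(y-v_2)\Phi_+^{-1}(v_2)\Phi_+'(v_2)+O((y-v_2)^2)
\]
isolates the $(1,2)$-entry of $\Phi_+^{-1}\Phi_+'$ and produces the prefactor $i/(2\pi)$ from the $(2\pi i)^{-1}$ in $f$. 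Tracking how the factor $\det(I-K_s)_I$ enters the IIKS expression at the endpoint (the final bookkeeping step) yields \eqref{diffid new}. The symmetric case \eqref{diffidsym} then follows by adding the analogous contribution at the endpoint $-v$, which coincides with the one at $v$ under the reflection $x\mapsto -x$, producing the factor $2$.

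An alternative, more elementary second proof may be obtained by differentiating the Fredholm series $\det(I-K_s)_A=\sum_{n\ge 0}\frac{(-1)^n}{n!}\int_{A^n}\det[K_s(x_i,x_j)]\,dx_1\cdots dx_n$ directly with respect to $v_2$: the derivative acts on the integration domain and produces at each order a boundary term that resums via a cofactor expansion into $\det(I-K_s)_A\cdot R(v_2,v_2)$, after which the IIKS dictionary supplies the $\Phi$-representation. The main obstacle in either route is the bookkeeping that isolates the $(1,2)$-entry of $\Phi_+^{-1}\Phi_+'$ with the correct prefactor $i/(2\pi)$: the orientation of the lens contours, the direction from which the $+$ boundary value is taken at $v_2$, the lens-opening matrices at $v_2$, and the precise way in which $\det(I-K_s)_I$ enters the IIKS formula must all be handled carefully enough that no other combination of entries contaminates the answer.
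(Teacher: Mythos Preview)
Your main approach is essentially the paper's first proof: write the sine kernel in IIKS integrable form, build the resolvent kernel from $\Phi_+$, and use the endpoint identity $\partial_{v_2}\log\det(I-K_s)_I=R_s(v_2,v_2)$ together with $\sum_j\Lambda_j M_j=0$ and L'H\^opital to extract the $(1,2)$-entry of $\Phi_+^{-1}\Phi_+'$. The paper carries out exactly these steps, with the explicit identification \eqref{FG2} of $\vec\Lambda,\vec M$ with columns of $\Phi_+$ doing the ``bookkeeping'' you allude to.

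One remark on the alternative route you sketch: the paper's second proof is \emph{not} a direct differentiation of the Fredholm series. Instead it realizes $\det(I-K_s)_I$ as an $n\to\infty$ limit of Toeplitz determinants $D_n(f)$ for the indicator of two arcs, derives a differential identity for $D_n$ in terms of the Fokas--Its--Kitaev RH problem for the associated orthogonal polynomials (via Christoffel--Darboux), and then matches the polynomial RH problem to $\Phi$ through a local parametrix near $z=1$. This buys a connection to orthogonal polynomials and CUE that your Fredholm-series alternative does not, at the cost of a more elaborate construction; your direct-series idea would work but is closer in spirit to the first proof than to what the paper actually does.
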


\subsection{First proof of Lemma \ref{diffFred}} 
The proof of identities of type \eqref{diffid new} using the theory of integrable operators is standard
\cite{IIKS, DIZ, BBIK, BD}. We give an outline.
First, we write the kernel of the (integrable)  operator $K_s$ in the form
\begin{equation}\label{kernel fh}
K_s(x,y)=\frac{\vec \lambda^T(x)\vec \mu(y)}{x-y}=\frac{\sum_{j=1}^2 \lambda_j(x)\mu_j(y)}{x-y}
, \qquad \vec \lambda(z)=\begin{pmatrix}e^{isz}\\
-e^{-isz}\end{pmatrix},\qquad \vec \mu=\frac{1}{2\pi i}\begin{pmatrix}e^{-isz}\\
e^{isz}\end{pmatrix}.
\end{equation}
Note that $\sum_{j=1}^2 \lambda_j(z)\mu_j(z)=0$. 
The resolvent of the operator $K_s$,
\[
(I-K_s)^{-1}=I+R_s,
\]
has the property \cite[Lemma 2.8]{DIZ} that the kernel of $R_s$ is of the form
\begin{equation}\label{resolvent}
R_s(x,y)=\frac{\vec \Lambda^T(x)\vec M(y)}{x-y}, \qquad  \Lambda_j=(I-K_s)^{-1}\lambda_j,\qquad 
M_j=(I-K_s^T)^{-1}\mu_j,\qquad j=1,2,
\end{equation}
and moreover, $\sum_{j=1}^2 \Lambda_j(z)M_j(z)=0$. 
The functions $\Lambda(z)$ and $M(z)$ for $z\in A$ can be written as \cite[Lemma 2.12]{DIZ}
\begin{equation}\label{FG}
\vec \Lambda(z)=\widehat m_+(z)\vec \lambda(z), \qquad \vec M(z)=(\widehat m_+^{-1}(z))^T\vec \mu(z),
\end{equation}
where $\widehat m(z)$ is the $2\times 2$ matrix valued function which 
solves the following RHP (this is the $m$-RHP of \cite{DIZ} up to a slight modification: $\lambda_2$, $\mu_2$ are replaced
by $-\lambda_2$, $-\mu_2$, respectively):
\subsubsection*{RH problem for $\widehat m$}
\begin{itemize}
\item[(a)] $\widehat m(z)$ is analytic in $\mathbb C\setminus \overline{A}$.
\item[(b)] $\widehat m(z)$ has $L^2$ boundary values related by the condition
$\widehat m_+(x)=\widehat m_-(x)J_m(x)$ for $x\in A$, with
\begin{equation}
J_m(x)=I-2\pi i \vec \lambda(x)\vec \mu^T(x).
\end{equation}
\item[(c)] $\widehat m(z)= I+\bigO(z^{-1})$ as $z\to \infty$.
\end{itemize}

This problem is reduced to a constant jump problem by the transformation
\begin{equation}
\widehat\psi(z)=\widehat m(z)e^{isz\sigma_3}.
\end{equation}
Indeed so defined $\widehat\psi(z)$ satisfies
\subsubsection*{RH problem for $\widehat\psi(z)$}
\begin{itemize}
\item[(a)] $\widehat\psi(z)$ is analytic in $\mathbb C\setminus \overline{A}$.
\item[(b)] $\widehat\psi(z)$ has $L^2$ boundary values related by the condition
$\widehat\psi_+(x)=\widehat\psi_-(x)
\begin{pmatrix}
0 & -1\\
1 & 2
\end{pmatrix}$ for $x\in A$. 
\item[(c)] $\widehat\psi(z)= \left(I+\bigO(z^{-1})\right) e^{isz\sigma_3} $ as $z\to \infty$.
\end{itemize}

It is now straightforward to verify that the solution to the $\Phi$-RH problem is written in terms of $\widehat\psi(z)$
as follows: $\Phi(z)=\widehat\psi(z)\begin{pmatrix}
1 & -1\\
0 & 1
\end{pmatrix}$ above $\Gamma_{\Phi, U}$ (see Figure \ref{ContourPhi});
$\Phi(z)=\widehat\psi(z)\begin{pmatrix}
1 & 0\\
-1 & 1
\end{pmatrix}$ below $\Gamma_{\Phi, L}$; and  $\Phi(z)=\widehat\psi(z)$ inside the lenses in Figure \ref{ContourPhi}.

Writing $\widehat m$ in terms of $\Phi$ in \eqref{FG}, we obtain
\begin{equation}\label{FG2}
\vec \Lambda(z)=\begin{pmatrix}-\Phi_{12,+}(z)\\
-\Phi_{22,+}(z)\end{pmatrix},\qquad
\vec M(z)=\frac{1}{2\pi i}\begin{pmatrix}\Phi_{22,+}(z)\\
-\Phi_{12,+}(z)\end{pmatrix},\qquad z\in A.
\end{equation}

Now the logarithmic derivative of the determinant
\begin{multline}\label{identityFredholmresolvent}
\frac{\partial}{\partial v_2}\log\det(I-K_s)_{(-1,v_1)\cup (v_2,1)}
=-\tr\left((I-K_s)^{-1}{\partial K_s\over \partial v_2}\right)=
((I-K_s)^{-1}K_s)(v_2,v_2)\\
=((I-K_s)^{-1}(K_s-I+I))(v_2,v_2)=R_s(v_2,v_2)=
-(\Lambda_1(v_2)M_1'(v_2)+\Lambda_2(v_2)M_2'(v_2)).
\end{multline}
Substituting here \eqref{FG2}, we obtain \eqref{diffid new}. The identity \eqref{diffidsym} is obtained similarly.

\subsection{Differential identity for Toeplitz determinants} 
For the second proof of Lemma \ref{diffFred},
we will first represent the Fredholm determinant $\det (I-K_s)_A$ in terms of a special Toeplitz determinant and then obtain (\ref{diffid new}) as a limit of the corresponding differential identity for Toeplitz determinants. This way of proving Lemma \ref{diffFred} has a potential advantage of future applications to computing probabilities in the Circular Unitary Ensemble of random matrix theory, and to the theory of orthogonal polynomials.

Let $J=J_1\cup J_2$ be the union of two disjoint arcs $J_1$ and $J_2$ on the unit circle $C$. 
We parametrize the endpoints of $J_1$ by $a_1=e^{i\phi_1},a_2=e^{i\phi_2}$  and the endpoints of $J_2$ by $b=e^{i\phi_0}$, $\bar{b}=e^{-i\phi_0}$,
see Figure \ref{J crit}. Let $f$ be the indicator function of the set $J$:
\begin{equation}\nonumber
f(z)=\begin{cases} 1 & \textrm{for $z \in J$,}\\ 0& \textrm{for $z \notin J$.} \end{cases} \end{equation}
Consider the $n$-dimensional Toeplitz determinant with symbol $f$ on the unit circle $C$:
\begin{equation}\nonumber
D_n(f)=\det(f_{j-k})_{j,k=0}^{n-1}, \qquad f_j=\int_C f(z) z^{-j}\frac{dz}{2\pi i z}=
\int_J z^{-j}\frac{dz}{2\pi i z},
\end{equation}
where the integration is in the counterclockwise direction.

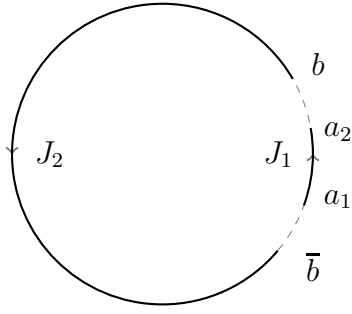
\begin{figure} 
\begin{center}
\begin{tikzpicture}
\node [right] at (2,0.3) {$a_2$};
\node [right] at (2,-0.6) {$a_1$};
\node [left] at (2.3,1.2) {$b$};
\node  at (2,-1.5) {$\overline{b}$};
\node  at (-1.5,0) {$J_2$};
\node [right] at (1.2,0) {$J_1$};
\draw [decoration={markings, mark=at position 0 with {\arrow[thick]{>}}},
        postaction={decorate},decoration={markings, mark=at position 0.5 with {\arrow[thick]{>}}},
        postaction={decorate},dashed, gray] (0,0) circle[radius=2]; 
\draw[thick](2,0) arc [radius=2,start angle=0, end angle=10];
\draw[thick](2,0) arc [radius=2,start angle=360, end angle=340];
\draw[thick](-2,0) arc [radius=2,start angle=180, end angle=30];
\draw[thick](-2,0) arc [radius=2,start angle=180, end angle=320];
\end{tikzpicture}
\caption{Arc $J_1$ on the right and $J_2$ on the left.}\label{J crit}
\end{center}
\end{figure}

If the end-points of the arcs vary with $n$ as follows,
$\phi_0=2s/n$ and $\phi_j=2v_js/n$ for $j=1,2$, then it is easily verified that  
\begin{equation}\label{20}
\lim_{n\to\infty}D_n(f)=\det(I-K_s)_{(-1,v_1)\cup (v_2,1)}.
\end{equation}

We will now obtain a differential identity for $D_n(f)$, and in the next subsection,
by taking $n\to\infty$ and using (\ref{20}), will prove Lemma \ref{diffFred}.

Since $f$ is nonnegative, it follows from the multiple integral representation for Toeplitz determinants that $D_j(f)>0$ for all $j=1,2,\dots$. Set $D_0(f)=1$. 
Define the polynomials $\psi_0=1/\sqrt{f_0}$,
$\psi_j$, $j=1,2,\dots$ by
\begin{equation}\nonumber
\psi_j(z)=\frac{1}{\sqrt{D_j(f)D_{j+1}(f)}}\det 
{\small \begin{pmatrix} f_0 & f_{-1} & \dots &f_{-j+1} &f_{-j}\\
f_1& f_0 &\dots&f_{-j+2}&  f_{-j+1}\\
&&\ddots &\\
f_{j-1}&f_{j-2}& \dots& f_0 &f_{-1}\\
1&z&\dots &z^{j-1}&z^j \end{pmatrix}}=\chi_jz^j+\dots,
\end{equation}
where the leading coefficient $\chi_j$ is given by
\begin{equation}
\chi_j=\sqrt{\frac{D_j(f)}{D_{j+1}(f)}}.
\end{equation}
These polynomials are orthonormal on $J$:
\begin{equation}
\int_J \psi_k(z)\overline{ \psi_j(z)} \frac{dz}{2\pi iz}= \delta_{jk},\qquad j,k=0,1,\dots
\end{equation}
For a given $n\ge 1$, define the matrix-valued function $Y=Y(z)$ in terms of the orthogonal polynomials:
\begin{equation}\label{Soln Y}
Y(z)=\begin{pmatrix}
\chi_n^{-1}\psi_n(z)&\chi_n^{-1} \int_J \frac{\psi_n(\zeta)}{\zeta-z}\frac{d\zeta}{2\pi i \zeta^n} \\
-\chi_{n-1}z^{n-1} \overline{\psi}_{n-1}(z^{-1})& -\chi _{n-1} \int_J \frac{\overline{\psi}_{n-1}(\zeta^{-1})}{\zeta-z} \frac{d\zeta}{2\pi i \zeta}
\end{pmatrix}.
\end{equation}
The function $Y$ is a unique solution to the following RH Problem:
\begin{itemize}
\item[(a)] $Y:\mathbb C \setminus J \to \mathbb C^{2\times 2}$ is analytic;
\item[(b)]$Y_+(z)=Y_-(z)\begin{pmatrix} 1&z^{-n}\\0&1 \end{pmatrix}$ for $z\in J$;
\item[(c)] $Y(z)=(I+\mathcal{O}(1/z))z^{n\sigma_3}$ as $z\to \infty$.
\end{itemize}
This fact was initially noticed in \cite{FIK} for orthogonal polynomials on the real line and extended to the case of orthogonal polynomials on the unit circle in \cite{BDJ}.
As in \cite{K04,DIK},
we will use the orthogonal polynomials to obtain a differential identity for $\log D_n(f)$ in terms of the solution to the RH problem for $Y$. Namely, we have

\begin{Prop} \label{Prop Diffid}\begin{itemize}  \item[(a)] Let $a_2=e^{i\phi_2}$. 
The Toeplitz determinant $D_n(f)$ satisfies
\begin{equation}\label{diff id}
\frac{\partial}{\partial \phi_2}\log D_n(f)=-\frac{1}{2\pi}F(a_2),
\end{equation}
where $F$ is given by
\begin{equation} \label{alternative}
F(z)=-z^{-n+1}[Y^{-1}(z)Y'(z)]_{21}.
\end{equation}
\item[(b)] Let $a_2=\overline a_1=e^{i\phi_2}$. Then 
\begin{equation}\label{diffidsymT}
\frac{d}{d\phi_2}\log D_n(f)=-\frac{1}{\pi}F(a_2).
\end{equation}
\end{itemize}
\end{Prop}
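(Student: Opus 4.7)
The plan is to reduce \eqref{diff id} to a rank-one perturbation of the Gram matrix $T_n=(f_{j-k})_{j,k=0}^{n-1}$ and then translate the resulting Christoffel--Darboux expression back to the matrix entry of $Y^{-1}Y'$ displayed in \eqref{alternative}, using the representation \eqref{Soln Y} of $Y$ in terms of the orthonormal polynomials $\psi_n,\psi_{n-1}$. This is in the spirit of the classical OPUC-based differential identities used in \cite{K04, DIK}.

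For part~(a), I would first differentiate $f_{j-k}=\frac{1}{2\pi}\int_J e^{-i(j-k)\theta}\,d\theta$ in $\phi_2$. Since only the endpoint $a_2=e^{i\phi_2}$ of $J_1$ moves, this produces the rank-one variation
\[
\partial_{\phi_2}T_n=\tfrac{1}{2\pi}\,vv^{*},\qquad v_j=a_2^{-j},\quad j=0,\dots,n-1,
\]
so Jacobi's formula gives $\partial_{\phi_2}\log D_n(f)=\tfrac{1}{2\pi}\,v^{*}T_n^{-1}v$. Next I would use the Cholesky-type identity $T_n^{-1}=C^{*}C$, where the rows of $C$ are the coefficient vectors of $\psi_0,\dots,\psi_{n-1}$, to collapse the quadratic form to the diagonal Christoffel--Darboux kernel,
\[
v^{*}T_n^{-1}v=\sum_{\ell=0}^{n-1}|\psi_{\ell}(\bar a_2)|^{2}=K_n(\bar a_2,\bar a_2).
\]

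It then remains to identify $K_n(\bar a_2,\bar a_2)$ with $-F(a_2)=a_2^{-n+1}[Y^{-1}(a_2)Y'(a_2)]_{21}$. Here I would apply the OPUC Christoffel--Darboux identity to rewrite $K_n(\bar a_2,\bar a_2)$ as a bilinear combination of $\psi_n,\psi_{n-1},\psi_n^{*},\psi_{n-1}^{*}$ and their first derivatives at $\bar a_2$; then, using \eqref{Soln Y} and the unit-circle identity $\psi_\ell^{*}(z)=z^\ell\overline{\psi_\ell(z^{-1})}$, each of those values and derivatives translates into an entry of $Y(a_2)$ or $Y'(a_2)$. Computing $[Y^{-1}Y']_{21}$ explicitly from the cofactor formula for $Y^{-1}$ should recover $-K_n(\bar a_2,\bar a_2)$ up to the prefactor $a_2^{-n+1}$ coming from the $z^{n-1}$ in $Y_{21}$, giving \eqref{diff id}. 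Part~(b) then follows by the chain rule: when $a_1=\bar a_2$ the parameter $\phi_1=-\phi_2$ also moves with $\phi_2$, so an analogous computation at $a_1$ contributes a second term of the same size; the orientation sign at the lower endpoint and the reflection symmetry $f(\bar z)=f(z)$ (which forces a corresponding reality symmetry of the $Y$-RHP, and hence $F(\bar a_2)=F(a_2)$) combine to double the contribution and yield \eqref{diffidsymT}.

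The main obstacle I expect is the final identification, matching $K_n(\bar a_2,\bar a_2)$ with $a_2^{-n+1}[Y^{-1}(a_2)Y'(a_2)]_{21}$. The bookkeeping is twofold: one must carefully track the $z\leftrightarrow 1/\bar z$ conjugation symmetry on the unit circle that relates $\psi_\ell$ to $\psi_\ell^{*}$ (equivalently, the first row of $Y$ to the second), and one must interpret $Y(a_2),Y'(a_2)$ correctly as boundary values—$Y$ has a logarithmic singularity at $a_2$, but the $[21]$ entry of $Y^{-1}Y'$ stays finite because the log contribution sits in the $[12]$ position. Once this matching is carried out, both differential identities follow.
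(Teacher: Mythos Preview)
Your approach is correct and reaches the same endpoint as the paper, but by a somewhat different and arguably more direct route. The paper begins from the product formula $D_n=\prod_{j}\chi_j^{-2}$, differentiates through the orthogonality relations to obtain
\[
\partial_{\phi_2}\log D_n(f)=\frac{1}{2\pi}\int_J \partial_{\phi_2}F(z)\,d\theta,
\]
and then uses the trick that $\int_J F\,d\theta=-2\pi n$ is independent of $\phi_2$ to convert the integral of a derivative into the boundary evaluation $-\frac{1}{2\pi}F(a_2)$. You instead differentiate the Toeplitz matrix entries directly, recognize the rank-one structure $\partial_{\phi_2}T_n=\tfrac{1}{2\pi}vv^{*}$, and apply Jacobi's formula to land immediately on the Christoffel--Darboux diagonal $\frac{1}{2\pi}\sum_\ell|\psi_\ell|^2$ at the endpoint. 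Both routes then need the same identification $F(z)=-\sum_{k=0}^{n-1}|\psi_k(z)|^2$ on $C$, which the paper carries out via the CD formula $-\sum_k|\psi_k|^2=n|\psi_n|^2-2\Re(z\overline{\psi_n}\psi_n')$ together with the recurrence identity $\chi_n\overline{\psi_n(z)}=\chi_{n-1}z^{-1}\overline{\psi_{n-1}(z)}+\overline{\psi_n(0)}z^{-n}\psi_n(z)$; this is precisely the matching you anticipate as the main obstacle, and it goes through cleanly without any boundary-value subtleties, since the computation involves only the polynomial entries of $Y$.

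One small correction: with the Cholesky factorization oriented correctly (so that $(T_n^{-1})_{jk}=\sum_\ell c^{(\ell)}_j\overline{c^{(\ell)}_k}$), the quadratic form $v^{*}T_n^{-1}v$ with $v_j=a_2^{-j}$ evaluates to $\sum_\ell|\psi_\ell(a_2)|^2$, not $\sum_\ell|\psi_\ell(\bar a_2)|^2$. This is harmless for part~(a) and actually simplifies part~(b), where your symmetry argument (in the conjugation-symmetric case $T_n$ is real, the $\psi_\ell$ have real coefficients, hence $|\psi_\ell(\bar a_2)|=|\psi_\ell(a_2)|$) correctly doubles the contribution.
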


\begin{proof}
From the definition of the orthogonal polynomials it is clear that
\begin{align}\label{Toeplitz and coeffs}
D_n(f)=&\prod_{j=0}^{n-1} \chi_j^{-2}.\end{align}
The orthogonality conditions imply that, with $z=e^{i\theta}$,
\begin{align} \frac{1}{2\pi} \int_J \frac{\partial \psi_j(z)}{\partial \phi_2} \overline{\psi_j(z)}d\theta =& \frac{1}{2\pi} \int_J \frac{\partial \chi_j}{\partial\phi_2} (z^j +\textrm{polynomial of degree $j-1$}) \overline{\psi_j(z)}d \theta
=\frac{1}{\chi_j} \frac{\partial \chi_j}{\partial \phi_2} ,
\end{align}
and similarly,
\begin{equation}\label{diff tricks}
 \frac{1}{2\pi} \int_J \psi_j(z)\frac{\partial \overline{\psi_j(z)}}{\partial \phi_2}d\theta=\frac{1}{\chi_j} \frac{\partial \chi_j}{\partial \phi_2} .
\end{equation}
By \eqref{Toeplitz and coeffs}--\eqref{diff tricks} we obtain:
\begin{equation}\label{Dnsumphij}
  \frac{\partial }{\partial \phi_2} \log (D_n(f))=-2\sum_{j=0}^{n-1} \frac{\partial \chi_j}{\partial \phi_2} /\chi_j
=-\frac{1}{2\pi} \int_J \frac{\partial}{\partial\phi_2} \left( \sum_{j=0}^{n-1} |\psi_j(z)|^2 \right)d\theta.
\end{equation}
The Christoffel-Darboux formula for orthogonal polynomials on the unit circle 
(see, e.g., equation (2.8) in \cite{DIK})
states that
\begin{equation}\label{CD}
-\sum_{k=0}^{n-1}| \psi_k(z)|^2=n|\psi_n(z)|^2 -2\Re \left(z\overline{\psi_n(z)}\psi_n'(z)\right) \quad \textrm{for } z \in C.
\end{equation}
On the other hand, using the following identity (equation (2.4) in \cite{DIK})
\begin{equation}
\chi_n\overline{ \psi_n(z)}=\chi_{n-1}z^{-1}\overline{ \psi_{n-1}(z)}+\overline{ \psi_n(0)} z^{-n}\psi_n(z),
\end{equation}
and (\ref{Soln Y}), we easily verify that
\begin{equation}\label{FOP}
F(z)=-z^{-n+1}[Y^{-1}(z)Y'(z)]_{21}=
n|\psi_n(z)|^2 -2\Re \left(z\overline{\psi_n(z)}\psi_n'(z)\right)
 \quad \textrm{for } z \in C.
\end{equation}
Substitution of \eqref{CD}, \eqref{FOP} into \eqref{Dnsumphij} gives
\begin{equation} \label{eqndiffid1}
\frac{\partial}{\partial \phi_2}\log D_n(f)=  \frac{1}{2\pi} \int_J \frac{\partial}{\partial \phi_2} \left(F(z)\right) d\theta.
\end{equation}

Since by orthogonality
\[
\int_JF(z)\frac{d\theta}{2\pi}=-\int_J \sum_{k=0}^{n-1}| \psi_k(z)|^2\frac{d\theta}{2\pi}=
-n,
\]
we obtain
\begin{equation}\label{eqndiffid2}
0=\frac{\partial}{\partial\phi_2}\left(\int_{J} F(z) d\theta\right)=F(a_2)+\int_{J} \frac{\partial}{\partial\phi_2}F(z)d\theta,
\end{equation}
and proposition \ref{Prop Diffid} (a) follows from \eqref{eqndiffid1}. Part (b) is proved similarly.
\end{proof}

\subsection{Limit $n\to \infty$. Second proof of Lemma \ref{diffFred}.}
As we are eventually interested in the limit $n\to\infty$, we first reduce the $Y$ RH problem to an approximate problem for $\Phi$ which does not contain the parameter $n$, and the dependence on $n$ is in the error of approximation. 

Let 
\begin{equation}
T(z)=\begin{cases}Y(z)&|z|<1,\\
Y(z)z^{-n\sigma_3}&|z|>1.
\end{cases}
\end{equation}

\begin{figure}\begin{center}
\begin{tikzpicture}
\node [right] at (1.9,0.6) {$a_2$};
\node [right] at (2.2,0) {$\Gamma_{\widehat S}^{\textrm{Out}}$};
\node [left] at (1.8,0) {$\Gamma_{\widehat S}^{\textrm{In}}$};
\node [right] at (1.9,-0.65) {$a_1$};
\node [right] at (-1.5,0) {$\Gamma_{\widehat S}^{\textrm{In}}$};
\node [left] at (-2.5,0) {$\Gamma_{\widehat S}^{\textrm{Out}}$};
\draw [
decoration={markings, mark=at position 0 with {\arrow[thick]{>}}},
        postaction={decorate},
decoration={markings, mark=at position 16/360 with {\coordinate (p1);}},
        postaction={decorate},
decoration={markings, mark=at position 40/360  with {\coordinate (p2);}},
        postaction={decorate},
decoration={markings, mark=at position 0.5 with {\arrow[thick]{>}}},
        postaction={decorate},
decoration={markings, mark=at position325/360  with {\coordinate (p3);}},
        postaction={decorate},
decoration={markings, mark=at position 0.75 with {\arrow[thick]{>}}},
        postaction={decorate},   
decoration={markings, mark=at position344/360  with {\coordinate (p4)  ;}},
        postaction={decorate}        
] (0,0) circle[radius=2]; 
\draw [decoration={markings, mark=at position 0.5 with {\arrow[thick]{>}}},
        postaction={decorate}](p4)  to [out=120, in=240] (p1);
\draw[decoration={markings, mark=at position 0.5 with {\arrow[thick]{>}}},
        postaction={decorate}] (p4) to [out=20, in=-20] (p1);
\draw[decoration={markings, mark=at position 0.5 with {\arrow[thick]{>}}},
        postaction={decorate}] (p2) to [out=70,in=0] (0,2.5) to [out=180, in=90] (-2.5,0) to [out=-90,in=180] (0,-2.5) to [out=0,in=-70]  (p3);
\draw [decoration={markings, mark=at position 0.5 with {\arrow[thick]{>}}},
        postaction={decorate}] (p2) to [out=-170,in=0] (0,1.5) to [out=180, in=90] (-1.5,0) to [out=-90,in=180] (0,-1.5) to [out=0,in=170]  (p3);
        \node [right] at (p2) {$b$};
        \node [right] at (p3) {$\overline{b}$};
\end{tikzpicture} 
\caption{Contour $\Gamma_{\widehat S}$.}\label{Contour GammaS}
\end{center}
\end{figure}
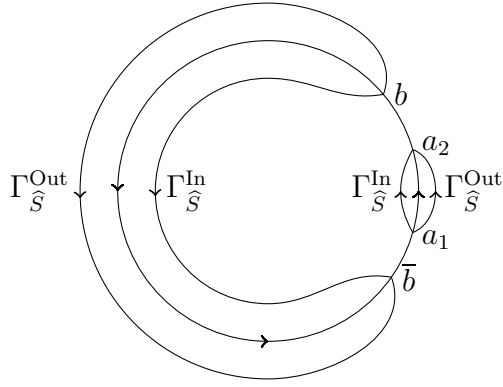

We open the lenses around $J_1$ and $J_2$, see Figure \ref{Contour GammaS}.
Denote the edges of the lenses inside  the unit disc by $\Gamma^{\rm In}_{\widehat S}$, the edges of the lenses outside the unit disc by $\Gamma^{\rm Out}_{\widehat S}$,
and let
\begin{equation}
{\widehat S}(z)=\begin{cases}
T(z)&\textrm{outside the lenses,}\\
T(z)\begin{pmatrix}1&0\\-z^{n}&1 \end{pmatrix}&\textrm{inside the lenses, for }|z|<1,\\
T(z)\begin{pmatrix}1&0\\z^{-n}&1 \end{pmatrix}&\textrm{inside the lenses, for }|z|>1.
\end{cases}\end{equation}

Then ${\widehat S}$ satisfies the following RH problem:
\begin{itemize}
\item[(a)] ${\widehat S}$ is analytic on $\mathbb C\setminus \left(C\cup \Gamma_{\widehat S}^{\rm In}\cup \Gamma^{\rm Out}_{\widehat S}\right)$.
\item[(b)] The jumps of ${\widehat S}$ are given by ${\widehat S}_+(z)={\widehat S}_-(z)J_{\widehat S}(z)$, where
\begin{equation} \nonumber 
J_{\widehat S}(z)=\begin{cases}
\begin{pmatrix}0&1\\-1&0\end{pmatrix}&\textrm{for } z\in J,\\
\begin{pmatrix} 1&0\\ z^{-n}&1 \end{pmatrix}&\textrm{for } z\in \Gamma_{\widehat S}^{\rm Out}\\
\begin{pmatrix} 1&0\\ z^{n}&1 \end{pmatrix}&\textrm{for } z\in \Gamma_{\widehat S}^{\rm In},\\
z^{n\sigma_3}&\textrm{for } z\in C\setminus J.
\end{cases}
\end{equation}
\item[(c)] As $z\to \infty$,
\begin{equation}{\widehat S}(z)=I+\mathcal O\left(z^{-1}\right).\end{equation}
\end{itemize}

We assume that the lenses around $J_1$ and the contour part $C\setminus J$ are contained within the set
$|z-1|<1/2$.
The following function $\mathcal M$ will approximate ${\widehat S}$ for $|z-1|>1/2$:
\begin{equation}
\mathcal M(z)=\begin{cases}
\begin{pmatrix}0&1\\-1&0\end{pmatrix}&|z|<1,\\
I&|z|>1.
\end{cases} \end{equation} 
For $|z-1|<1/2$, we construct the following function $Q$. Let 
\begin{equation}
w(z)=-i\frac{n}{2s}\log z,
\end{equation}
so that $w(e^{2i t\frac{s}{n}})=t$ for any $t$, and define
\begin{equation}\label{66}
Q(z)=\begin{cases}\Phi\left(w(z);s\right)z^{- \frac{n}{2}\sigma_3}\begin{pmatrix}
0&1\\-1&0
\end{pmatrix}&|z|<1,\\
\Phi\left(w(z);s\right)z^{- \frac{n}{2}\sigma_3}&|z|>1,
\end{cases}
\end{equation}
where $\Phi$ is the solution of the $\Phi$ RH problem at the beginning of the section.

Let
\begin{equation}
\widehat R(z)=\begin{cases}
{\widehat S}\mathcal M^{-1} & \textrm{for $|z-1|>1/2$,}\\
{\widehat S}Q^{-1} & \textrm{for $|z-1|<1/2.$}
\end{cases}
\end{equation} 
Then $\widehat R$ is analytic for $\mathbb C\setminus \Gamma_{\widehat R}$, where 
\begin{equation}\widehat \Gamma_R= \{ \textrm{the edge of the lens for $|z-1|>1/2$} \} \cup \{z:|z-1|=1/2\}.
\end{equation}
We have using \eqref{44}, \eqref{66}, 
\begin{equation}\nonumber \widehat R_+^{-1}(z)\widehat R_-(z)=Q(z)\mathcal M^{-1}(z)=I+\mathcal O(s/n)\end{equation}
 uniformly on the circle $|z-1|=1/2$ oriented counterclockwise.
Furthermore, $\widehat R_+^{-1}\widehat R_--I=\mathcal O(e^{-n\epsilon})$, $\epsilon>0$,
uniformly on the edges of the lenses. Thus, by standard small norm analysis (see, e.g., \cite{Dbook}),
\begin{equation}\widehat R(z)=I+\mathcal O(s/n),\qquad \widehat R'(z)=\mathcal O(s/n),\end{equation}
uniformly for $z\in \mathbb C$.

We now express $F(a_2)$ from Proposition \ref{Prop Diffid} in terms of elements of $\Phi$.
Tracing back the transformations, we see that as $z$ approaches $a_2$ from the {\it inside}
of the unit circle and being outside the lens,
\[
Y(z)=T(z)=\widehat S(z)=\widehat R(z)Q(z)=
\widehat R(z)\Phi(w(z))z^{-(n/2)\sigma_3}
\begin{pmatrix}
0&1\\-1&0
\end{pmatrix}.
\]
Using this, we obtain
\[
\begin{aligned}
&-z^{n+1}(Y(z)^{-1}Y'(z))_{21}=
z\left(\Phi(w(z))^{-1}\frac{d}{dz}\Phi(w(z))\right)_{12}
+z\left(\Phi^{-1}\mathcal O(s/n)\Phi\right)_{12}
\\&=
z\left(\Phi(w)^{-1}\frac{d}{dw}\Phi(w)\right)_{12}\frac{dw}{dz}+
z\left(\Phi^{-1}\mathcal O(s/n)\Phi\right)_{12}
\\&=
-\frac{in}{2s}\left(\Phi(w)^{-1}\frac{d}{dw}\Phi(w)\right)_{12}+
z\left(\Phi^{-1}\mathcal O(s/n)\Phi\right)_{12}.
\end{aligned}
\]
Taking the limit $z\to a_2=\exp(i\phi_2)=\exp(i2v_2 s/n)$ along this trajectory, 
we obtain
\begin{equation}
F(a_2)=-\frac{in}{2s}\left[\Phi_+^{-1}(v_2)\Phi_+'(v_2)\right]_{12}+\mathcal O(s/n),
\end{equation}
as $s,n\to \infty$ and $s/n\to 0$.
Substituting this into \eqref{diff id}, recalling (\ref{20}), and noting
that $dv_2/d\phi_2=n/(2s)$, proves \eqref{diffid new}.
The symmetric case identity \eqref{diffidsym} follows from \eqref{diffidsymT}.
Thus we finished the proof of Lemma \ref{diffFred}.

We now solve the RH problem for $\Phi$, compute the asymptotics of the r.h.s. of (\ref{diffid new}), integrate it,
and use Lemma \ref{Lemmasep} at one of the integration limits
to obtain Theorem \ref{Thm}.

\section{Solution of the RH problem for $\Phi$}\label{secPhi}
Recall the definition of $\psi(z)$ in \eqref{defpsi}, and for 
$z\in \mathbb C\setminus \overline{(-1,v_1)\cup (v_2,1)}$ on the first sheet of the Riemann surface $\Sigma$, let
\begin{equation}\label{def:phi}
\phi(z)=\int_{1}^z\psi(\xi)d\xi.
\end{equation}
We see by \eqref{int0}  that $\phi(z)$ is a well defined function, analytic on $\mathbb C\setminus \overline{(-1,v_1)\cup (v_2,1)}$. Since $\psi_+=-\psi_-$ on $(-1,v_1)\cup (v_2,1)$, we have
\begin{equation}\label{def:Omega2}
\phi_+(z)+\phi_-(z)=\begin{cases}0&\textrm{for }z\in(v_2,1),\\
-2\pi \Omega&\textrm{for }z\in(-1,v_1),
\end{cases}\qquad \Omega=\frac{1}{\pi}\int_{v_1}^{v_2}\psi(x)dx>0.
\end{equation}

Since by \eqref{int0} $\psi(z)$ has zero residue at infinity,
$\psi(z)=1+\mathcal O(1/z^2)$ as $z\to \infty$, and we have
\begin{equation}\label{asymphi}
\phi(z)=z+\mathcal O(1),\qquad z\to \infty.
\end{equation}
Let
\begin{equation}\label{def:hatPhi}
 \mathcal S(z)= e^{is\ell\sigma_3}\Phi(z)e^{-is\phi(z)\sigma_3},
\qquad \ell=\int_1^{\infty}(\psi(x)-1)dx-1, 
 \end{equation}
then $\mathcal S$ satisfies the following RH problem.
\subsubsection*{RH Problem for $\mathcal S$}
\begin{itemize}
\item[(a)] $\mathcal S$ is analytic for $z \in \mathbb C \setminus \Gamma_{ \Phi}$,
\item[(b)] $\mathcal S$ has jumps given by $\mathcal S_+(z)=\mathcal S_-(z)J_{\mathcal S}(z)$, where 
\begin{equation}
J_{\mathcal S}(z)=
\begin{cases}
\begin{pmatrix}
0&-1\\1&0
\end{pmatrix}&\textrm{for }z\in (v_2,1),\\
\begin{pmatrix}
0&-e^{-2\pi is\Omega}\\e^{2\pi is\Omega}&0
\end{pmatrix}&\textrm{for }z\in (-1,v_1),\\
\begin{pmatrix}
1&0\\e^{-2is\phi(z)}&1
\end{pmatrix}
&\textrm{for } z\in \Gamma_{ \Phi,\rm L},\\
\begin{pmatrix}
1&-e^{2is\phi(z)}\\0&1
\end{pmatrix}
&\textrm{for } z\in \Gamma_{ \Phi,\rm U}.
\end{cases}
\end{equation}
\item[(c)] As $z \to \infty$,
\begin{equation}
\mathcal S(z)=I+\mathcal O\left(\frac{1}{z}\right).
\end{equation}
\end{itemize}

We need the conditions $\Im \phi(z)<0$, $\Im \phi(z)>0$,
to hold uniformly on $\Gamma_{\Phi,\rm L}$, $\Gamma_{\Phi,\rm U}$, respectively,
away from some fixed $\epsilon$ neighborhoods of the end-points for the corresponding jumps to be exponentially close to the identity.
Since  \eqref{asymphi} is uniform as $|z|\to \infty$, the conditions hold for $|z|>W$ for some sufficiently large but fixed $W>0$. Since $\frac{d}{dx}\phi(x)=\psi(x)>0$ for $x\in \mathbb R \setminus \overline{(-1,v_1)\cup (v_2,1)}$, the conditions hold 
on the contour as stated assuming (and we do this) 
that the angle between the parts of $\Gamma_{\Phi,\rm L}$, $\Gamma_{\Phi,\rm U}$ emanating from $\pm 1$ and the real axis  
  was chosen to be sufficiently small and the lens around $(v_1,v_2)$ was sufficiently narrow. Therefore
\begin{equation}\label{smallJhat}
J_{\mathcal S}(z)=I+\mathcal O\left( e^{-cs(1+|z|)}\right),
\end{equation}
as $s\to \infty$, for some constant $c>0$, uniformly on 
$\Gamma_{\Phi,\rm L}$, $\Gamma_{\Phi,\rm U}$
away from fixed $\epsilon$-neighborhoods of $\pm 1$, $v_1$, $v_2$.

\subsection{Outside parametrix and $\theta$-functions}\label{Inter}
Consider the following RH problem for the $2\times2$-matrix valued function
$\mathcal N(z;\omega)$ with a real parameter $\omega$,
which will give an approximate solution to the $\Phi$ RH problem away from the edge points $\pm 1$, $v_1$, $v_2$, when $\omega=s\Omega$. Later on we also construct approximate solutions
(local parametrices) on a neighborhood of each edge point, and match them to the leading
order with $\mathcal N(z;\omega)$ on the boundaries of the neighborhoods.

\subsubsection*{RH problem for $\mathcal N$}
\begin{itemize}
\item[(a)] $\mathcal N(z)$ is analytic on $\mathbb C\setminus \overline{(-1,v_1)\cup (v_2,1)}$.
\item[(b)] On $(-1,v_1)\cup (v_2,1)$, $\mathcal N$ has $L^2$ boundary values related by 
the jump conditions:
\begin{equation}\nonumber
\begin{aligned}
\mathcal N_+(z)&=\mathcal N_-(z)\begin{pmatrix}
0&-1\\1&0
\end{pmatrix}&\textrm{for }z\in (v_2,1),\\
\mathcal N_+(z)&=\mathcal N_-(z)\begin{pmatrix}
0&-e^{-2\pi i\omega}\\e^{2\pi i\omega}&0
\end{pmatrix}&\textrm{for }z\in (-1,v_1).
\end{aligned}
\end{equation}
\item[(c)] As $z\to \infty$,
\begin{equation}
\mathcal N(z)=I+\mathcal O(z^{-1}).
\end{equation}
\end{itemize}
A more general problem with jumps on $m$ intervals was solved in \cite{DIZ} in terms of 
multidimensional $\theta$-functions. We now present the solution in our case of 2 intervals:
$(-1,v_1)$, $(v_2,1)$.
Let
\begin{equation}\label{def:gamma}
\gamma(z)=\left( \frac{(z-1)(z-v_1)}{(z-v_2)(z+1)}\right)^{1/4}, \end{equation}
also with branch cuts on $(-1,v_1)\cup (v_2,1)$, such that $\gamma(z) \to 1$ as $ z \to \infty$ on the first sheet
of the Riemann surface $\Sigma$.

Recall the definition of the holomorphic differential \eqref{def:omega}.
Let $u$ be the following analytic function on $\mathbb C\setminus \left\{(-\infty,v_1]\cup [v_2,+\infty)\right\}$:
\begin{equation}\label{def:u}
u(z)=-\int_{v_2}^z\bm{{\omega}},\end{equation}
with integration taken on the first sheet.
Note that, $\mod\mathbb Z$,
\begin{equation}\label{uz0}
u(-1)=-\frac{\tau}{2}-\frac{1}{2},\qquad u(v_1)=-\frac{\tau}{2},
\qquad u(v_2)=0,\qquad u(1)=-\frac{1}{2}.
\end{equation} 
The function $u(z)$ extends to the Riemann surface $\Sigma$ and is then called the Abel map. It maps the Riemann 
surface onto the torus where $\theta$-functions are defined.

A simple calculation (see \cite{DIZ}) shows that
the function $\gamma(z)-\gamma(z)^{-1}$ has a single zero on $(v_1,v_2)$ on the first sheet, 
denote it by $\hat z$, and no zeros on the second sheet. We have
\begin{equation}
\hat{z}=\frac{v_1+v_2}{2+v_1-v_2}.\end{equation}
Similarly, the function $\gamma(z)+\gamma(z)^{-1}$ has no zeros on the first sheet and one zero on the second. 

Let
\begin{equation}\label{def:d}
d=-\frac{1-\tau}{2}-\int_{v_2}^{\hat z}\bm{{\omega}},
\end{equation}
with integration taken on the first sheet. 

Consider the third Jacobian $\theta$-function $\theta(z)=\theta_3(z;\tau)$
(see Appendix \ref{App1}).
Since $\theta((1-\tau)/2)=0$, we have 
$\theta(u(\hat z)-d)=0$.
The function $\theta(u(z)-d)=0$ has no other zeros on the Riemann surface.
The function $\theta(u(z)+d)=0$ has only one zero on the Riemann surface located on the second sheet which coincides with the only zero of $\gamma(z)+\gamma(z)^{-1}$.

By an argument in \cite{DIZ} we have 
\begin{equation}\nonumber
u(\infty)+d =m\tau\mod\mathbb Z,
\end{equation}
for some integer $m$.
Consider the integral of $\bm{{\omega}}$ along the closed contour composed of a large interval along the real axis and a semicircle in the upper half-plane. Then using (\ref{omega cond}) and the definition of $\tau$ in (\ref{def:Omega}) we obtain in the case $v_1=-v_2$ that $u(\infty)+d =0\mod\mathbb Z$ with $u(z)$ considered on the first sheet.
Therefore also in the general case of $v_1$, $v_2$, by continuity,
\begin{equation} \label{uinftyd}
u(\infty)+d=0\mod\mathbb Z.
\end{equation}

The solution to the RH problem for $\mathcal N$ is given by 
\begin{equation}\begin{aligned}\label{def:calN}
\mathcal N(z;\omega)&=\begin{pmatrix}
\frac{\gamma+\gamma^{-1}}{2}m_{11}&
-\frac{\gamma-\gamma^{-1}}{2i}m_{12}\\
 \frac{\gamma-\gamma^{-1}}{2i}m_{21}&
\frac{\gamma+\gamma^{-1}}{2} m_{22}
\end{pmatrix},
\\m(z)&=\frac{\theta(0)}{\theta(\omega)}
\times \begin{pmatrix}
\frac{\theta(u(z)+\omega+d)}{\theta (u(z)+d)}&
\frac{\theta(u(z)-\omega-d)}{\theta (u(z)-d)}\\
\frac{\theta(u(z)+\omega-d)}{\theta (u(z)-d)}&
\frac{\theta(u(z)-\omega+d)}{\theta (u(z)+d)}
\end{pmatrix}
\end{aligned}\end{equation}
with $z$ on the first sheet.
To see that $\mathcal N$ solves the RH problem for $\mathcal N$, one makes several observations. First note that $\gamma(z)$ is analytic on $\mathbb C \setminus \overline{(-1,v_1)\cup (v_2,1)}$ and 
\[ \gamma_+(z)=i\gamma_-(z), \quad z \in (-1,v_1)\cup (v_2,1).\]
Hence for $w\in (-1,v_1)\cup (v_2,1)$
\begin{equation}\label{jumpsgamma} \begin{aligned}
\left(\frac{\gamma+\gamma^{-1}}{2}\right)_+&=-
\left(\frac{\gamma-\gamma^{-1}}{2i}\right)_-;\\
\left(\frac{\gamma-\gamma^{-1}}{2i}\right)_+&=
\left(\frac{\gamma+\gamma^{-1}}{2}\right)_-.
\end{aligned} \end{equation}
Secondly, as follows from \eqref{theta3 period} and the relations 
\begin{equation} \label{jumpsu}
u_+(z)=\begin{cases}-u_-(z)\mod \mathbb Z& z \in (v_2,1),\\
-u_-(z)-\tau \mod \mathbb Z &z\in (-1,v_1), \end{cases}
\end{equation}
the matrix $m$ has the jumps:
 \begin{equation} \label{jumpsm}
\begin{aligned}
m_+(z)&=m_-(z)\begin{pmatrix}
0&1\\1&0
\end{pmatrix} \quad \textrm{for $z \in (v_2,1)$,}\\
m_+(z)&=m_-(z)\begin{pmatrix}
0& e^{-2\pi i \omega}\\
e^{2\pi i \omega} &0
\end{pmatrix} \quad \textrm{for $z \in (-1,v_1)$.}
\end{aligned}
\end{equation}
The singularities of $m$ cancel with the zeros of $\gamma \pm \gamma^{-1}$. Furthermore, 
\[ \mathcal N(z)=I+\mathcal O (z^{-1}) \]
as $z \to \infty$.

\subsection{Identities for $\theta$-functions}\label{secthid}
Our proof of Theorem \ref{1} will use several identities satisfied by $\theta$-functions.
We present these identities now. Standard definitions and properties of theta-functions that we need are summarized in Appendix \ref{App1}.
\begin{Lemma}\label{ThmThetaids}
With the coefficients of the expansion $\gamma_0$, $u_0$, $\gamma_1$, $u_1$,  given in \eqref{expgamma} below
we have:
\begin{itemize}
\item[(a)] For any\footnote{If $d\pm\omega$ is a zero of $\theta_3$, we multiply through
in \eqref{id1} before evaluating. We adopt the same convention in other formulae below.
Furthermore, it is easily seen that $\theta_3(d)\neq 0$ and $\theta_3(\omega)\neq 0$ for any $\omega\in\mathbb R$.} $\omega \in \mathbb R$,
\begin{equation}\label{id1}
\frac{\theta_3(0)^2\theta_3(d+\omega)\theta_3(d-\omega)}{\theta_3(d)^2\theta_3(\omega)^2}\left(1-\frac{\gamma_0^2u_0}{2}\left[\frac{\theta_3'(d+\omega)}{\theta_3(d+\omega)}+\frac{\theta_3'(d-\omega)}{\theta_3(d-\omega)}-2\frac{\theta_3'(d)}{\theta_3(d)}\right]
\right)=1.
\end{equation}
\item[(b)] 
\begin{equation}\label{id2} 
\frac{\theta_1'(d)}{\theta_1(d)}-\frac{\theta_3'(d)}{\theta_3(d)}=\frac{1}{\gamma_0^2 u_0}=-iI_0 (1+v_2).
\end{equation}
\item[(c)] 
\begin{equation}\label{id3} 
\left(\frac{\theta_1(d)}{\theta_3(d)}\right)'''=\frac{3}{\gamma_0^2u_0}\left(\frac{\theta_1(d)}{\theta_3(d)}\right)''-\frac{6(2\gamma_1+u_1)}{\gamma_0^2u_0^3}\left(\frac{\theta_1(d)}{\theta_3(d)}\right).
\end{equation}
\item[(d)] For $z_0\in\{-1,v_1,v_2,1\}$, 
\begin{equation}\label{idd} 
\frac{\theta_1^2(u(z_0)+d)}{\theta_3^2(u(z_0)+d)} \left( \frac{\theta_3}{\theta_1'}\right)^2 h(z_0)=-\frac{1}{I_0^2},
\qquad h(z)=(z-1)(z-v_1)+(z-v_2)(z+1).
\end{equation}
\item[(e)]
\begin{equation}\label{idth4}
\theta_4(0;\tau)^4=\theta_4^4=\frac{I_0^2}{\pi^2}2(v_2-v_1).
\end{equation}
\item[(f)]
\begin{equation}\label{idth2} 
\theta_2(0;\tau)^4=\theta_2^4=\frac{I_0^2}{\pi^2}(1+v_1)(1-v_2).
\end{equation}
\item[(g)]
\begin{equation}\label{idth3} 
\theta_3(0;\tau)^4=\theta_3^4=\frac{I_0^2}{\pi^2}(1-v_1)(1+v_2).
\end{equation}
\end{itemize}
\end{Lemma}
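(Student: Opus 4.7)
The lemma's seven identities split into two groups: (e)--(g), and to a large extent (d), are classical Thomae-type theta-null identities for the elliptic curve $y^2 = p(z)$ with branch points $\{-1,v_1,v_2,1\}$; while (a), (b), (c) are algebraic relations that I will extract from the $\mathcal N$-RH problem of Section \ref{Inter} by local expansion near the branch point $z=v_2$, using the universal fact $\det \mathcal N \equiv 1$.

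For (e), (f), (g), the plan is to reduce to the standard Jacobi setup by a M\"obius transformation sending $\{-1,v_1,v_2,1\}$ to a symmetric configuration. Under the pullback, $I_0$ and $J_0$ become (up to a common rescaling factor) the complete elliptic integrals $K(k), K(k')$ for the modulus $k$ determined by the cross-ratio of the branch points, so that $\tau = iK(k')/K(k)$. Substituting the classical formulas $\theta_3(0|\tau)^2 = 2K/\pi$, $\theta_4(0|\tau)^2 = 2k'K/\pi$, $\theta_2(0|\tau)^2 = 2kK/\pi$ and tracking the M\"obius scaling yields (e)--(g). For (d), I use \eqref{uz0}: $u(z_0)$ is a half-period at each branch point $z_0$, so the quasi-periodicity of the four Jacobi theta functions (Appendix \ref{App1}) reduces $\theta_1(u(z_0)+d)/\theta_3(u(z_0)+d)$ to one of the canonical ratios $\theta_j(d)/\theta_k(d)$ times an exponential factor that cancels upon squaring. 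Jacobi's derivative identity $\theta_1'(0) = \pi\theta_2\theta_3\theta_4$ together with (e)--(g) then collapses the left-hand side of (d) to the universal constant $-1/I_0^2$.

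For (a), the key input is that $\det \mathcal N(z;\omega) \equiv 1$: the jump matrices in the $\mathcal N$-RHP have determinant $1$ and $\mathcal N \to I$ at infinity, and the would-be poles of the theta quotients are cancelled by zeros of $\gamma \pm \gamma^{-1}$. A direct computation gives
\begin{equation*}
\det \mathcal N = \tfrac{\gamma^2+\gamma^{-2}}{4}\bigl(m_{11}m_{22}-m_{12}m_{21}\bigr) + \tfrac12\bigl(m_{11}m_{22}+m_{12}m_{21}\bigr).
\end{equation*}
Evaluating the limit $z \to v_2$ and setting $A(u) := \theta_3(u+\omega+d)\theta_3(u-\omega+d)/\theta_3(u+d)^2$, $B(u) := \theta_3(u-\omega-d)\theta_3(u+\omega-d)/\theta_3(u-d)^2$, the evenness of $\theta_3$ gives $B(u)=A(-u)$. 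Hence $m_{11}m_{22}-m_{12}m_{21}$ is odd in $u$, vanishes linearly in $(z-v_2)^{1/2}$, and its leading coefficient $2 A'(0) u_0 \theta_3(0)^2/\theta_3(\omega)^2$ is precisely balanced by the singular prefactor $(\gamma^2+\gamma^{-2})/4 \sim \gamma_0^2/[4(z-v_2)^{1/2}]$. Matching with $\tfrac12(m_{11}m_{22}+m_{12}m_{21})(v_2) = \theta_3(0)^2 A(0)/\theta_3(\omega)^2$ in the equation $\det\mathcal N(v_2)=1$, and identifying $A'(0)/A(0)$ with the bracketed logarithmic-derivative combination, produces exactly (a).

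For (b), the second equality $1/(\gamma_0^2 u_0) = -iI_0(1+v_2)$ is immediate from the explicit local expansions of $\gamma$ and $u$ at $v_2$ via \eqref{def:gamma} and $u'(z) = -i/[2I_0 p(z)^{1/2}]$. The first equality can then be established by a Fay-type theta identity, or equivalently by comparing (d) at $z_0 = v_2$ against its derivative obtained by expanding $\mathcal N_{12}$ near a second branch point, where half-period translations turn $\theta_3$ into $\theta_4$ and $\theta_1$ into $\theta_2$; the resulting two-term relation, together with Jacobi's derivative formula, collapses to (b). Identity (c) is the most laborious: it comes from pushing the expansion of $\det\mathcal N=1$ at $v_2$ one order further in $(z-v_2)^{1/2}$, at which point the sub-leading coefficients $\gamma_1, u_1$ enter; after invoking the heat equation $\partial_z^2\theta_3 \propto \partial_\tau\theta_3$ to rewrite second $z$-derivatives as $\tau$-derivatives, the expansion reduces to the stated third-order relation for $\theta_1(d)/\theta_3(d)$. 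The main obstacle throughout is the careful book-keeping of signs and branches (especially for $\gamma_0$, a fourth root of a negative quantity) and the matching of higher-order expansion coefficients in (c).
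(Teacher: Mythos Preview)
Your argument for (a) via $\det\mathcal N\equiv 1$ and local expansion at $v_2$ is correct and is exactly what the paper does. Your route to (e)--(g) through a M\"obius reduction to the standard Jacobi modulus is a legitimate classical alternative to the paper's method (which builds yet another auxiliary entire function and evaluates it at $\infty$ and at branch points).

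There is, however, a real gap in your treatment of (b), (c), and (d). For (c) you propose to push the expansion of $\det\mathcal N=1$ at $v_2$ one order further and invoke the heat equation. But $\det\mathcal N$ is assembled entirely from $\theta_3$-ratios (through the matrix $m$ in \eqref{def:calN}), so its Taylor coefficients at $v_2$ produce only relations among derivatives of $\theta_3$ at $d\pm\omega$ and $d$; the heat equation $\theta_j''=4\pi i\,\partial_\tau\theta_j$ does not convert $\theta_3$-data into $\theta_1$-data. Identity (c) is a relation on $(\theta_1/\theta_3)(d)$ and its first three derivatives, and nothing in your scheme introduces $\theta_1$. The paper supplies the missing ingredient by constructing a second auxiliary function
\[
\eta_2(z)=\Bigl(\tfrac{\gamma+\gamma^{-1}}{2}\Bigr)^{2}\frac{\theta_1^2(u(z)+d)}{\theta_3^2(u(z)+d)}
-\Bigl(\tfrac{\gamma-\gamma^{-1}}{2}\Bigr)^{2}\frac{\theta_1^2(-u(z)+d)}{\theta_3^2(-u(z)+d)},
\]
shows by the same Liouville argument that $\eta_2\equiv 0$ (it is entire and vanishes at $\infty$ since $\theta_1(0)=0$), and then reads off (b) from the constant term and (c) from the coefficient of $(z-v_2)$ in its expansion at $v_2$. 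Your sketch for (b) (``a Fay-type identity, or equivalently comparing (d) at $v_2$ against an expansion of $\mathcal N_{12}$ at another branch point'') is not a proof as written. Finally, your reduction of (d) via half-period shifts turns $\theta_1(u(z_0)+d)/\theta_3(u(z_0)+d)$ into ratios $\theta_j(d)/\theta_k(d)$, but (e)--(g) give only the theta-nulls $\theta_j(0)$, not values at the non-trivial point $d=-u(\infty)$; an additional identity for $\theta_j(d)$ is needed, which the paper obtains through a third auxiliary entire function $\eta_3(z)$ built from the same $\theta_1^2/\theta_3^2$ combination (now with a $+$ sign and a polynomial prefactor).
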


\begin{proof}
We begin by proving (a). By \eqref{jumpsgamma} and \eqref{jumpsm}, it follows that $\eta_1(z)$ defined by
Consider $\eta_1(z)$ defined by
\begin{multline}\nonumber 1\equiv \det \mathcal{N}(z)=\eta_1(z)=
\left(\frac{\gamma(z)+\gamma^{-1}(z)}{2}\right)^2\frac{\theta_3^2\theta_3(u(z)+\omega+d)\theta_3(u(z)-\omega+d)}{\theta_3(\omega)^2\theta_3(u(z)+d)^2}\\-
\left(\frac{\gamma(z)-\gamma^{-1}(z)}{2}\right)^2\frac{\theta_3^2\theta_3(-u(z)+\omega+d)\theta_3(-u(z)-\omega+d)}{\theta_3(\omega)^2\theta_3(-u(z)+d)^2}
\end{multline}

Of course, we know that $\eta_1(z)=\det\mathcal{N}(z)=1$ for all $z$ from the Riemann-Hilbert problem. However, it is easy to provide a direct proof: By \eqref{jumpsgamma} and \eqref{jumpsm}, and the fact that $\frac{\theta_3(\xi+\omega+d)\theta_3(\xi-\omega+d)}{\theta_3(\xi+d)^2}$ is an elliptic function of $\xi$,
the function $\eta_1(z)$ no jumps on $A$ 
and is thus meromorphic. By the fact that $\theta_3(\pm u(z)+d)$ has the same zeros as $\gamma(z)\pm \gamma(z)^{-1}$, respectively, it follows that $\eta_1(z)$ has no singularities and is an entire function. By \eqref{uinftyd},  $\eta(z)\to 1$ as $z\to \infty$, and thus  
$\eta_1(z)=1$ for all $z\in\mathbb C$ by Liouville's theorem. 

The expansion of $\eta_1(z)$ as $z\to v_2$ (using \eqref{expgamma} below) shows that
\begin{equation}
\eta_1(z)\to \frac{\theta_3^2\theta_3(d+\omega)\theta_3(d-\omega)}{\theta_3(d)^2\theta_3(\omega)^2}\left(1-\frac{\gamma_0^2u_0}{2}\left[\frac{\theta_3'(d+\omega)}{\theta_3(d+\omega)}+\frac{\theta_3'(d-\omega)}{\theta_3(d-\omega)}-2\frac{\theta_3'(d)}{\theta_3(d)}\right]
\right),
\end{equation} 
and since $\eta_1(v_2)=1$, we obtain Part (a) of the proposition.

Now consider
\begin{equation}\label{Smartfun}
\eta_2(z)=\left(\frac{\gamma(z)+\gamma^{-1}(z)}{2}\right)^2\frac{\theta_1^2(u(z)+d)}{\theta_3^2(u(z)+d)}-
\left(\frac{\gamma(z)-\gamma^{-1}(z)}{2}\right)^2\frac{\theta_1^2(-u(z)+d)}{\theta_3^2(-u(z)+d)}.
\end{equation}
By the fact that $\frac{\theta_1^2(\xi)}{\theta_3^2(\xi)}$ is an elliptic function of $\xi$, and by \eqref{jumpsgamma} and \eqref{jumpsu}, it follows that $\eta_2(z)$ is a meromorphic function, and again by cancellation of the poles from $\theta_3(\pm u(z)+d)$ by the  zeros of $\gamma(z)\pm \gamma(z)^{-1}$, it follows that $\eta_2(z)$ in fact is entire. As $z\to \infty$, $\eta_2(z)\to 0$ by \eqref{uinftyd}
since $\theta_1(0)=0$, and thus, $\eta_2(z)\equiv 0$ by Liouville's theorem. We see from the expansion of $\eta_2(z)$ in powers of $z-v_2$
as $z\to v_2$ that
\begin{equation}
\eta_2(z)\to \frac{\theta_1(d)}{\theta_3(d)}\left[\frac{\theta_1(d)}{\theta_3(d)}-\gamma_0^2u_0\left(\frac{\theta_1(d)}{\theta_3(d)}\right)'\, \right],\qquad z\to v_2.
\end{equation}
Since this limit is zero, we obtain that
\begin{equation}\label{id20}
\left(\frac{\theta_1(d)}{\theta_3(d)}\right)'=\frac{1}{\gamma_0^2 u_0}\frac{\theta_1(d)}{\theta_3(d)},
\end{equation}
which gives Part (b) of the proposition. 

To prove part (c), we consider the coefficient of the first power $z-v_2$ in the expansion of $\eta_2(z)$ as $z\to v_2$. Denote here $g(z)=\frac{\theta_1(z)}{\theta_3(z)}$, then as $z\to v_2$, 
\begin{multline}\label{expeta2}
0=\eta_2(z)-\eta_2(v_2)=4(z-v_2)\Bigg[
-\gamma_0^2\frac{u_0^3}{6}\left(g'''(d)g(d)+3g''(d)g'(d)\right)\\-u_0g'(d)g(d)\left(u_1\gamma_0^2+\gamma_0^{-2}+2\gamma_1\gamma_0^2\right)+u_0^2\left(g''(d)g(d)+g'(d)^2\right)
\Bigg]+\mathcal O((z-v_2)^2).
\end{multline}
By substituting the identity for $g'(d)$ from Part (b) of the proposition into the right hand side of \eqref{expeta2} and setting the resulting coefficient of $z-v_2$ equal to zero, we obtain Part (c).

Finally, to prove Part (d), we consider
\begin{equation}\label{eta3}
\eta_3(z)=R(z)\left[\left(\frac{\gamma(z)+\gamma^{-1}(z)}{2}\right)^2\frac{\theta_1^2(u(z)+d)}{\theta_3^2(u(z)+d)}+
\left(\frac{\gamma(z)-\gamma^{-1}(z)}{2}\right)^2\frac{\theta_1^2(-u(z)+d)}{\theta_3^2(-u(z)+d)}\right].
\end{equation}
By the same arguments as for $\eta_1$ and $\eta_2$ (and in addition by the fact that $R_+=-R_-$ on $A$), it follows that $\eta_3$ is entire. By  recalling the definition of $u$ in \eqref{def:u}, by \eqref{uinftyd}, and by the definition of $\gamma$ in \eqref{def:gamma},   we obtain
\begin{equation} \label{asymeta3}
\lim_{z\to\infty}\eta_3(z)=-\frac{1}{4I_0^2}\left(\frac{\theta_1'}{\theta_3}\right)^2+\frac{(2+v_1-v_2)^2}{16}\left(\frac{\theta_1(2d)}{\theta_3(2d)}\right)^2,
\end{equation}
so that $\eta_3(z)$ is identically equal to this constant.
Now consider the asymptotics of $\eta_2(z)$ as $z\to \infty$. We have
\begin{equation}
0\equiv\eta_2(z)=-z^{-2}\left[\frac{1}{4I_0^2}\left(\frac{\theta_1'}{\theta_3}\right)^2+\frac{(2+v_1-v_2)^2}{16}\left(\frac{\theta_1(2d)}{\theta_3(2d)}\right)^2\,\right]+\mathcal O(z^{-3}),
\end{equation}
from which we conclude that
\begin{equation}
-\frac{1}{4I_0^2}\left(\frac{\theta_1'}{\theta_3}\right)^2=\frac{(2+v_1-v_2)^2}{16}\left(\frac{\theta_1(2d)}{\theta_3(2d)}\right)^2.\end{equation}
By substituting this into \eqref{asymeta3}, we obtain
\begin{equation}\label{eta3id}
 \eta_3(z)=-\frac{1}{2I_0^2}\left(\frac{\theta_1'}{\theta_3}\right)^2
\end{equation}
for all $z\in \mathbb C$. On the other hand,
for $z_0\in \{-1,v_1,v_2,1\}$, from (\ref{eta3}) by \eqref{uz0}
and ellipticity, 
\begin{equation}
\eta_3(z_0)=\frac{1}{2}\frac{\theta_1^2(u(z_0)+d)}{\theta_3^2(u(z_0)+d)}h(z_0). 
\end{equation}
Equating this to (\ref{eta3id}) we obtain Part (d).
To show Parts (e), (f), (g), we consider the function (as usual, theta functions written without argument stand for 
their values with argument zero)
\begin{equation}\label{newnu}
\frac{\theta_3(u(z))^2}{\theta_1(u(z))^2}+I_0^2(v_2-v_1)(v_2^2-1)\frac{\theta_3^2}{\theta_1'^2}\frac{1}{z-v_2}.
\end{equation}
As before, we see that this function is identically constant. By evaluating at infinity, it is equal to $\frac{\theta_3(d)^2}{\theta_1(d)^2}$.
On the other hand, part (d) at $z_0=v_2$ gives
\begin{equation}
\frac{\theta_3(d)^2}{\theta_1(d)^2}=-I_0^2(v_2-v_1)(v_2-1)\frac{\theta_3^2}{\theta_1'^2}.
\end{equation}
Equating this constant to \eqref{newnu} we obtain the identity for all $z$:
\begin{equation}\label{newnufinal}
\frac{\theta_3(u(z))^2}{\theta_1(u(z))^2}=-I_0^2(v_2-v_1)(v_2-1)\frac{\theta_3^2}{\theta_1'^2}\frac{z+1}{z-v_2}.
\end{equation}
Evaluating it at $z=1$ (recall from \eqref{uz0} that $u(1)=1/2\mod\mathbb Z$ and recall the definition of $\theta_j(z)$ from Appendix \ref{App1}),
and using the identity $\theta_1'=\pi\theta_2\theta_3\theta_4$, we obtain Part (e). We similarly obtain Part (f) by evaluating
\eqref{newnufinal} at $z=v_1$. Finally, we obtain Part (g)
by making use of the identity $\theta_3^4=\theta_2^4+\theta_4^4$.

\end{proof}

\subsection{Local parametrices} 
Our goal in this section is to construct a function $P$ on a neighborhood of each point of the set $\mathcal T=\{-1,v_1,v_2,1\}$, with the same jump conditions as $\mathcal S$ on these neighborhoods, and with an asymptotic behavior matching that of $\mathcal N$ to the main order on the boundaries of these neighborhoods.
The first step is to recall the following model RH problem from \cite{KMVV} with an explicit solution in terms of Bessel functions.
\subsubsection*{RH problem for $\Psi$}
\begin{itemize}
\item[(a)] $\Psi: \mathbb C \setminus \Gamma_{\Psi} \to \mathbb C ^{2\times 2}$ is analytic, where $\Gamma_{\Psi} =\mathbb R^- \cup \Gamma^\pm_{\Psi}$, with $\Gamma^\pm_{\Psi}=\{ xe^{\pm \frac{ 2\pi}{3}i}: x\in \mathbb R ^+\}$, and with orientation of $\mathbb R^-$, $\Gamma^\pm_{\Psi}$ towards zero.
\item[(b)] $\Psi$ satisfies the jump conditions:
\begin{align}\nonumber
\Psi_+(\zeta)&=\Psi_-(\zeta)\begin{pmatrix}0&1\\-1&0\end{pmatrix} \quad \textrm{for $\zeta \in \mathbb R^-$,}\\
\Psi_+(\zeta)&=\Psi_-(\zeta)\begin{pmatrix}1&0\\1&1\end{pmatrix}\quad \textrm{for $\zeta \in \Gamma^\pm_{\Psi}$.}\nonumber
\end{align}
\item[(c)] As $\zeta \to \infty$, 
\begin{equation}\nonumber
\Psi (\zeta)=\left(\pi \zeta^{\frac{1}{2}}\right)^{-\frac{\sigma_3}{2}} \frac{1}{\sqrt{2}} \begin{pmatrix} 1&i \\i &1\end{pmatrix} \left( I +\frac{1}{8\sqrt \zeta} \begin{pmatrix} -1&-2i \\-2i &1\end{pmatrix}
-\frac{3}{2^7\zeta} \begin{pmatrix} 1&-4i \\ 4i &1\end{pmatrix}
+\mathcal O\left( \zeta^{-\frac{3}{2}}\right)
\right)e^{\zeta^{\frac{1}{2}}\sigma_3}.  \end{equation}
\item[(d)] As $\zeta \to 0$, $
\Psi(\zeta)=\mathcal O(\log |\zeta|) .$
\end{itemize}

 For $|\arg \zeta| <2\pi/3$, we have
\begin{equation}\label{def:Psiout}
\Psi(\zeta)=\begin{pmatrix} I_0(\zeta^{1/2})& \frac{i}{\pi}K_0(\zeta^{1/2})\\
\pi i \zeta^{1/2} I'_0(\zeta^{1/2})&-\zeta^{1/2}K_0'(\zeta^{1/2})\end{pmatrix},
\end{equation}
where $I_0$ and $K_0$ are Bessel functions, $I'_0(x)=\frac{d}{dx}I_0(x)$,
$K'_0(x)=\frac{d}{dx}K_0(x)$.
For definitions and properties of Bessel functions see, e.g, \cite{GR}. Here the principal branch of $\zeta^{1/2}$
with the cut along the negative real axis is chosen.
For the explicit expression of the solution in the sector $|\arg \zeta|>2/3$, see \cite{KMVV}.

We have the following useful asymptotics as $z\to 0$ for $I_0$:
\begin{equation}
I_0(z)=1+\frac{z^2}{4}+\frac{z^4}{64} +\mathcal O(z^6)\label{asymI0}.
\end{equation}

We denote by $U^{(p)}$ fixed open nonintersecting balls containing $p\in\mathcal T=\{-1,v_1,v_2,1\}$. Recalling $\psi$ in \eqref{defpsi}, we define $\zeta=\zeta^{(p)}$ on $U^{(p)}$ by
\begin{equation}\label{def:zeta}
\zeta^{(p)}(z) =-\left( s\int_p^z\psi(\xi)d\xi \right)^2. \end{equation}
As $z\to p$, we have the expansion
\begin{equation}\label{expzeta2}
\zeta^{(p)}(z) =(z-p)s^2\widetilde{\zeta}_0(1+o(1)),\qquad 
\widetilde{\zeta}_0=-\frac{4(p-x_1)^2(p-x_2)^2}{\prod_{q\in\mathcal T\setminus\{p\}}(p-q)}.
\end{equation}
Note that $\zeta^{(p)}(z)$ is a conformal mapping of $U^{(p)}$ onto a neighborhood of zero.
Observe also that $\widetilde{\zeta}_0>0$ for $p=v_2,-1$, and 
$\widetilde{\zeta}_0<0$ for $p=v_1,1$, and so the contours in $U^{(p)}$ are mapped from the $z$-plane to the 
$\zeta$-plane accordingly. We choose the exact form of the contours in the $z$-plane so that their images are direct lines.

Keeping in mind our conventions for the root branches, we obtain
\begin{equation}\label{branches}
\begin{aligned}
(\zeta^{(p)}(z))^{1/2}+i(\phi(z)-\phi(p))=0,&\qquad \Im z>0\\
(\zeta^{(p)}(z))^{1/2}-i(\phi(z)-\phi(p))=0,&\qquad \Im z<0\\
\end{aligned}
\end{equation}

By \eqref{int0}, \eqref{def:phi} and the definition of $\Omega$ in \eqref{def:Omega},
\begin{equation}\label{pointsphi}
\phi(v_2)=\phi(1)=0,\qquad \phi(-1)=\phi(v_1)=-\pi \Omega.\end{equation}

Let
\begin{equation}\label{defnhatPsi}
X(z)=\begin{cases} \begin{pmatrix}0&-1\\1&0\end{pmatrix} &\textrm{for }\Im z>0,\\
I&\textrm{for }\Im z<0.\end{cases}\end{equation}
 For $p=-1,v_2$, we define the local parametrix on $U^{(p)}$ by
\begin{equation}\label{PE}
\begin{aligned}
P(z)&=E(z)\Psi\left(\zeta(z)\right)X(z)e^{-is\phi(z)\sigma_3},\\
E(z)&=\mathcal N(z;s\Omega)e^{ i s \phi(p)\sigma_3}X(z)^{-1}\frac{1}{\sqrt{2}}\begin{pmatrix}1&-i\\-i&1\end{pmatrix}\left(\pi\zeta^{\frac{1}{2}}\right)^{\frac{1}{2}\sigma_3},
\end{aligned}
\end{equation}
where we have suppressed the superscript in $\zeta=\zeta^{(p)}$, and the branch cut for $\zeta^{1/4}$ is the same one as for $\zeta^{1/2}$.

Using the jump conditions, it is straightforward to verify that $E(z)$ has no jumps
in $U^{(p)}$, and since its singularity at $p$ is removable, $E(z)$ is analytic in the neighborhood $U^{(p)}$, $p=-1,v_2$.

Furthermore, it is easy to verify that $P(z)$ satisfies the same jump conditions as 
$\mathcal S(z)$ in $U^{(p)}$, $p=-1,v_2$.

Finally, using the condition (c) in the $\Psi$-RHP and (\ref{branches}), we obtain for $w\in\partial U^{(p)}$ 
\begin{equation}\label{idPN}
P(z)\mathcal N(z;s\Omega)^{-1}=I+\Delta_1(z)+\mathcal O(1/s^2),
\qquad \Delta_1(z)=\mathcal O(1/s),
\end{equation}
uniformly on the boundary as $s\to \infty$, where 
\begin{equation}\label{Delta1}\begin{aligned}
\Delta_1(z)&\equiv\Delta_1(z;s\Omega);\\
\Delta_1(z;\omega)
&=\frac{\mp 1}{8\sqrt{\zeta(z)}}\mathcal N(z;\omega)e^{is\phi(p)\sigma_3}\begin{pmatrix}
-1&- 2i\\ - 2i &1
\end{pmatrix}e^{-is\phi(p)\sigma_3}\mathcal N^{-1}(z;\omega),\qquad p=-1,v_2,
\end{aligned}
\end{equation}
where $\mp$ is taken to be $-$ on $U^{(p)}\cap\mathbb C_+$, and
 $+$ on $U^{(p)}\cap\mathbb C_-$.
Note that $\Delta_1(z)$ is meromorphic in $U^{(p)}$, $p=-1,v_2$, with the first order pole 
at $z=p$.

Similarly, for $p=v_1,1$, we
define the local parametrix on $U^{(p)}$ by
\begin{equation}\begin{aligned}
P(z)&=E(z)\sigma_3\Psi\left(\zeta(z)\right)\sigma_3 X(z) e^{-is\phi(z)\sigma_3},\\
E(z)&=\mathcal N(z;s\Omega)e^{ i s \phi(p)\sigma_3}X(z)^{-1}\frac{1}{\sqrt{2}}\begin{pmatrix}1&i\\i&1\end{pmatrix}\left(\pi\zeta^{\frac{1}{2}}\right)^{\frac{1}{2}\sigma_3}.
\end{aligned}
\end{equation}

Here $E(z)$ is analytic on $U^{(p)}$, $P(z)$ has the same jumps as $\mathcal S(z)$ in
$U^{(p)}$, $p=v_1,1$, and the same condition (\ref{idPN}) holds with
\begin{equation}\label{Delta11}\begin{aligned}
\Delta_1(z)&\equiv\Delta_1(z;s\Omega);\\
\Delta_1(z;\omega)
&=\frac{\mp 1}{8\sqrt{\zeta(z)}}\mathcal N(z;\omega)e^{is\phi(p)\sigma_3} \begin{pmatrix}
-1& 2i\\  2i &1
\end{pmatrix}e^{-is\phi(p)\sigma_3}\mathcal N^{-1}(z;\omega),\qquad p=v_1,1,
\end{aligned}
\end{equation}
where $\mp$ is taken to be $-$ on $U^{(p)}\cap\mathbb C_+$, and
 $+$ on $U^{(p)}\cap\mathbb C_-$.
As at $v_2,-1$, $\Delta_1(z)$ in (\ref{Delta11}) is meromorphic in $U^{(p)}$, $p=v_1,1$, with the first order pole at $z=p$.

\subsection{Small norm RH problem. Solution of the $\Phi$-RH problem for fixed $v_1$, $v_2$}\label{SecR}
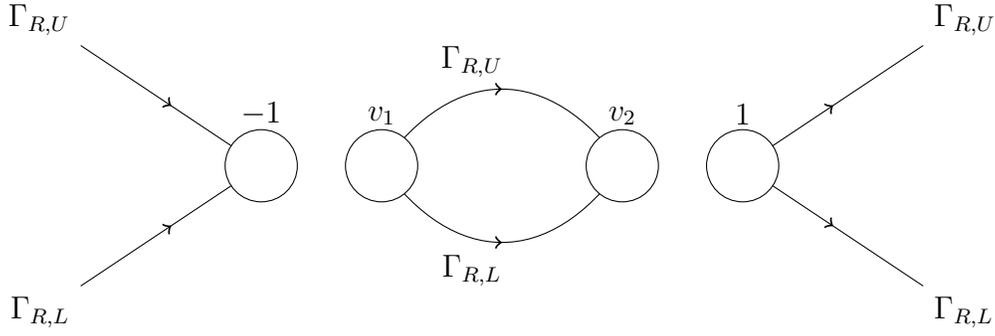
\begin{figure}\begin{center}
\begin{tikzpicture}[scale=0.8]
\draw [decoration={markings, mark=at position 0.5 with {\arrow[thick]{>}}},
        postaction={decorate}] (0,2) -- (3,0); 
\draw [decoration={markings, mark=at position 0.5 with {\arrow[thick]{>}}},
        postaction={decorate}] (11,0) -- (14,2); 
\draw [decoration={markings, mark=at position 0.5 with {\arrow[thick]{>}}},
        postaction={decorate}] (11,0) -- (14,-2); 
\draw  [decoration={markings, mark=at position 0.5 with {\arrow[thick]{>}}},
        postaction={decorate}] (5,0) .. controls (6.2,-1.7) and (7.8,-1.7) .. (9,0);     
\draw  [decoration={markings, mark=at position 0.5 with {\arrow[thick]{>}}},
        postaction={decorate}] (0,-2) -- (3,0);
\draw  [decoration={markings, mark=at position 0.5 with {\arrow[thick]{>}}},
        postaction={decorate}] (5,0) .. controls (6.2,1.7) and (7.8,1.7) .. (9,0);  
\node [above left] at (0,2) {$\Gamma_{R,U}$};
\node [above right] at (14,2) {$\Gamma_{R,U}$};
\node [above ] at (6.5,1.3) {$\Gamma_{R,U}$};
\node [below left] at (0,-2) {$\Gamma_{R,L}$};
\node [below right] at (14,-2) {$\Gamma_{R,L}$};
\node [below ] at (6.5,-1.3) {$\Gamma_{R,L}$};
\node [above ] at (9,0.5) {$v_2$};
\node [above ] at (11,0.5) {$1$};
\node [above ] at (5,0.5) {$v_1$}; 
\draw[black,fill=white] (9,0) circle [radius=0.6];
\draw[black,fill=white] (3,0) circle [radius=0.6]; 
\draw[black,fill=white] (5,0) circle [radius=0.6];   
\draw[black,fill=white] (11,0) circle [radius=0.6];  
\node [above ] at (3,0.5) {$-1$};
\end{tikzpicture}
\caption{The jump contour $\Gamma_{R}$}\label{ContR}
\end{center}
\end{figure}
Let 
\begin{equation}\label{defnR}
R(z)=\begin{cases}
\mathcal S(z)\mathcal N^{-1}(z;s\Omega) & \textrm{for $z \in \mathbb C \setminus \left(\cup_{p\in \mathcal T} U^{(p)} \right)$,}\\
\mathcal S (z)P^{-1}(z)& \textrm{for $z \in \cup_{p\in \mathcal T} U^{(p)} $.}
\end{cases}
\end{equation}
Then $R(z)$ is analytic for $z\in \mathbb C\setminus \Gamma_R$, where $\Gamma_R$ is as in Figure \ref{ContR}. 
We have
\begin{equation}
R_+(z)=R_-(z) J_R(z),\qquad J_R(z)=\begin{cases} P(z)\mathcal N^{-1}(z)& \textrm{for $z \in \cup_{p\in \mathcal T} 
\partial U^{(p)}$,}\\
\mathcal N(z)J_S(z)\mathcal N^{-1}(z)&  \textrm{for $z \in \Gamma_R \setminus \left(\cup_{p\in \mathcal T} \partial U^{(p)} \right)$.}
\end{cases}
\end{equation}

By \eqref{smallJhat} and \eqref{idPN}, it follows that
\begin{equation}
J_R(z)=I+\mathcal O(s^{-1}/(|z^2|+1))),
\end{equation}
 as $s\to \infty$, uniformly for $z\in \Gamma_R$, and by the definition of $\mathcal S$ and $\mathcal N$, we have
\begin{equation}
R(z)=I+\mathcal O(z^{-1}),
\end{equation}
as $z\to \infty$. By standard small norm analysis, it follows that there is a solution to the RH problem for $R$ for $s$ sufficiently large, and that
\begin{equation}
R(z)=I+\mathcal O(1/s),\end{equation}
uniformly on compact sets 
as $s\to \infty$. In particular, the existence of $R$ also implies the existence of a solution to the RH problem for $\mathcal S$ for sufficiently large $s$, which also implies the existence of a solution to the RH problem for $\Phi$.
As usual, we expand $R$ in the powers of the small parameter, $1/s$ in our case, to write
\begin{equation}\label{expandR}
R(z)=I+R_1(z)+\mathcal O(1/s^2),
\end{equation}
where $R_1$ solves the following RH problem. 
$R_1(z)$ is analytic outside the \textit{clockwise} oriented boundaries 
$\partial U^{(p)}$ of the neighborhoods $U^{(p)}$,
\[
R_{1+}(z)=R_{1-}(z)+\Delta_1(z),\qquad z\in\cup_{p\in \mathcal T}\partial U^{(p)},
\] 
and $R_1(z)\to 0$ as $z\to\infty$. The solution to this problem is given by
\begin{equation}\label{R1}
R_1(z)=\frac{1}{2\pi i}\int_{\cup_{p\in \mathcal T}\partial U^{(p)}}
\frac{\Delta_1(x;s\Omega)}{x-z}dx,\qquad z\in\mathbb C\setminus \cup_{p\in \mathcal T}\partial U^{(p)},
\end{equation}
where the integrals are taken with \textit{clockwise} orientation.

Taking \eqref{expandR} (one can obtain further terms in that expansion in a standard way) and
tracing back the transformations $R\rightarrow\mathcal S\rightarrow \Phi$, 
we obtain an asymptotic solution of the $\Phi$-RH problem.

\subsection{Extension of the solution to the regimes $v_2\to 1$, $s(1-v_2)\to \infty$;
$v_1\to -1$, $s(1+v_1)\to \infty$}\label{secextend}

In our solution of the previous section, the end-points $-1<v_1<v_2<1$ were fixed. In this section, we show that 
the solution can be extended to the regime where $v_2$ not only can 
be fixed but can also approach $1$ (and $v_1$ approach $-1$) sufficiently slowly as $s\to\infty$.
This will be needed for the proof of Theorem \ref{Thm} below.

More precisely,  we fix $\epsilon>0$ and assume 
\begin{equation}\label{partcass}
1-v_2\le 1+v_1,\qquad v_2-v_1\ge\epsilon,\qquad  s(1-v_2)\to \infty.
\end{equation}

We let $U^{(v_2)}$ and $U^{(1)}$ have radius equal to  $c(1-v_2)$, and similarly $U^{(v_1)}$ and $U^{(-1)}$ have radius equal to $c(1+v_1)$, for some fixed and sufficiently small $c>0$. Note that the neighborhoods can now contract with growing $s$.

 As $v_2\to 1$, $I_j\to \frac{\pi}{\sqrt{2(1-v_1)}}$, for $j=0,1,2$, and computing an additional term in the expansion we find by \eqref{x1x2eqn2} that
\begin{equation}\nonumber
x_1x_2=\frac{v_1-v_2}{2}-\frac{(1-v_2)(1+v_1)}{4}+\mathcal O\left((1-v_2)^2\right),\qquad v_2\to 1
\end{equation} 
uniformly in the regime \eqref{partcass}.  By \eqref{x1x2eqn1}, 
\begin{equation}\label{limx1x2}
x_1=\frac{v_1-1}{2}+\mathcal O\left((1-v_2)^2\right), \qquad x_2=\frac{1+v_2}{2}+\mathcal O\left((1-v_2)^2\right).
\end{equation}

By \eqref{limx1x2} and \eqref{def:zeta}, 
\begin{equation}
\label{asymzetav21}
\frac{1}{\sqrt{\zeta(z)}}=\mathcal O\left(\frac{1}{s(1-v_2)}\right),
\end{equation}  
uniformly in the regime \eqref{partcass} and also uniformly for $z\in \partial U^{(p)}$, $p\in \mathcal T=\{-1,v_1,v_2,1\}
$.

Next we will show that $\mathcal N$ and $\mathcal N^{-1}$ are uniformly bounded on $\partial U^{(p)}$ for $p\in \mathcal T$. 
As $v_2\to 1$ (under conditions \eqref{partcass}),  we see from \eqref{def:gamma} that
both $\gamma(z)$ and $\gamma^{-1}(z)$ are uniformly bounded also on $\partial U^{(p)}$ for $p\in \mathcal T$.
 
 We now consider $\theta$-functions, and start with $\tau$.
For $J_0$, we have
\begin{equation}
J_0=\int_{v_1}^{v_2-\sqrt{1-v_2}}\frac{dx\left(1+\mathcal O \left(\sqrt{1-v_2}\right)\right)}{(1-x)\sqrt{(x+1)(x-v_1)}}+\int_{v_2-\sqrt{1-v_2}}^{v_2}\frac{dx\left(1+\mathcal O \left(\sqrt{1-v_2}\right)\right)}{\sqrt{2(1-v_1)(1-x)(v_2-x)}},
\end{equation}
as $v_2\to 1$, and since
\begin{equation} \label{simpleint}
\frac{d}{dz}\log \left(\frac{\sqrt{z^2-1}+(it+\sqrt{1-t^2})z+i}{\sqrt{z^2-1}+\left(it-\sqrt{1-t^2}\right)z+i}\right)=\frac{\sqrt{1-t^2}}{\left(zt+1\right)\sqrt{z^2-1}},
\end{equation}
for any parameter $t$,
it follows that
\begin{equation}
J_0
=\frac{1}{\sqrt{2(1-v_1)}}\left[5\log 2+\log(1-v_2)^{-1}+\log \frac{1-v_1}{1+v_1}\right]\left(1+\mathcal O\left(\sqrt{1-v_2}\right)\right).
\end{equation}
Thus, since $I_0=\frac{\pi}{\sqrt{2(1-v_1)}}(1+\mathcal O(1-v_2))$,
\begin{equation}\label{limtau}
 \tau=i\frac{J_0}{I_0}=
 \frac{i}{\pi}\left[5\log 2+\log\frac{1}{1-v_2}+\log \frac{1-v_1}{1+v_1}\right]\left(1+\mathcal O\left(\sqrt{1-v_2}\right)\right),\qquad v_2\to 1
 \end{equation}
in the regime \eqref{partcass}, so that we have $-i\tau \to +\infty$. 

As $-i\tau \to +\infty$, $\frac{\theta_3}{\theta_3(\omega)}\to 1$ for any $\omega \in \mathbb R$.
We also observe that as $-i\tau \to +\infty$,
the fraction
\begin{equation}\label{fracthetav21}
\frac{\theta(\xi+\omega;\tau)}{\theta(\xi;\tau)}\end{equation}
is bounded uniformly under conditions \eqref{partcass} and over all $\omega\in [0,1)$,
for $\xi$ bounded away from the zero of the $\theta$-function 
$\frac{1+\tau}{2}$ modulo the lattice, and the same holds for derivatives of \eqref{fracthetav21} with respect to $\xi$, $\omega,$ and $\tau$. We now show that $\xi=u(z)\pm d$ remains bounded away from $\frac{1+\tau}{2}$ modulo the lattice for $z\in \partial U^{(p)}$, $p\in \mathcal T$.

We have by \eqref{uinftyd}, \eqref{def:u}, \eqref{uz0},
\begin{equation}\label{uinftyd1}
d=-u(\infty)=-\tau/2+1/2+\frac{i}{2I_0}\int_{-\infty}^{-1}\frac{dx}{\sqrt{p(x)}}\mod \mathbb Z. \end{equation}
As $v_2\to 1$,
\begin{equation} \int_{-\infty}^{-1}\frac{dx}{\sqrt{|p(x)|}}=\int_{-\infty}^{-1}\frac{dx(1+\mathcal O(1-v_2))}{(1-x)\sqrt{(-1-x)(v_1-x)}},
\end{equation}
and by using \eqref{simpleint}
\begin{equation}\label{uinftyint}
\frac{1}{2I_0}\int_{-\infty}^{-1}\frac{dx}{\sqrt{p(x)}}=
\frac{\sqrt{2(1-v_1)}}{(3-v_1)\pi\sqrt{1-\left(\frac{1+v_1}{3-v_1}\right)^2}}\log \left(\frac{1+\frac{1+v_1}{3-v_1}+\sqrt{1-\left(\frac{1+v_1}{3-v_1}\right)^2}}{1+\frac{1+v_1}{3-v_1}-\sqrt{1-\left(\frac{1+v_1}{3-v_1}\right)^2}}\right)(1+\mathcal O(1-v_2)),
\end{equation}
as $v_2\to 1$ in the regime \eqref{partcass}.
 We also have in the same regime by the definition \eqref{def:u} of $u(z)$, 
\begin{equation}\label{uint}
\begin{aligned}
u(z)&=-\frac{i}{2\pi} \int_{v_2}^z\frac{dz}{((z-v_2)(z-1))^{1/2}}\left(1+\mathcal O(1-v_2)\right), && z\in \partial U^{(v_2)}\cup \partial U^{(1)},\\
u(z)&=-\frac{\tau}{2}-\frac{i\sqrt{1-v_1}}{\sqrt{2}\pi} \int_{v_1}^z\frac{dz}{((z+1)(z-v_1))^{1/2}(z-1)}\left(1+\mathcal O(1-v_2)\right), && z\in \partial U^{(v_1)}\cup \partial U^{(-1)}.
\end{aligned}
\end{equation}

We note that \eqref{uinftyint} is bounded below by a fixed positive constant $c_1>0$
under conditions \eqref{partcass} and is uniformly to the main order $\frac{1}{2\pi}\log(1+v_1)^{-1}$,
which is less or equal to $|\tau|/4$, since $\tau\sim \frac{i}{\pi}(\log(1-v_2)^{-1}+\log(1+v_1)^{-1})$.
By \eqref{uint}, provided $c$ is sufficiently small (where we recall that the radii of $U^{(v_2)}$ and $U^{(1)}$ are equal to $c(1-v_2)$, and the radii of $U^{(v_1)}$ and $U^{(-1)}$ are equal to $c(1+v_1)$), $c_1/2<|\Im (u(z)-d-\tau/2+1/2)|<\tau/3$ for $z\in \overline{ U^{(v_2)}}$, and as a consequence $u(z)-d$ is bounded away from $\tau/2+1/2$ modulo the lattice. Similarly, it is straightforward to verify that $u(z)\pm d$ is bounded away from $\tau/2+1/2$ on $ \overline{ U^{(p)}}$, for $p\in \mathcal T$.
By the boundedness of \eqref{fracthetav21}, it follows that $m_{ij}(z;\omega)$ and $\frac{\partial m_{ij}(z;\omega)}{\partial \omega}$ are uniformly bounded for $i,j\in \{1,2\}$ and for $z\in \overline{U^{(p)}}$, with $p\in \mathcal T$, and for future reference we note that by the boundedness of the derivatives of \eqref{fracthetav21} with respect to $\xi, \omega, \tau$, 
\begin{equation}\label{diffmij}
\frac{\partial }{\partial v_2}\frac{\partial m_{ij}(z;\omega)}{\partial \omega},\,\,
\frac{\partial m_{ij}(z;\omega)}{\partial v_2}=\mathcal O\left(\max\left\{ \left|\frac{\partial d}{\partial v_2}\right|,\left|\frac{\partial u(z) }{\partial v_2}\right|, \left| \frac{\partial  \tau}{\partial v_2}\right|\right\}\right),
\end{equation}
as $v_2\to 1$ in the regime \eqref{partcass}, for $z\in  \overline{U^{(p)}}$.

 Combining the statements about boundedness of  $m$ and $\gamma$ and $\gamma^{-1}$, it follows that $\mathcal N(z)$ and $\mathcal N(z)^{-1}$ are uniformly bounded for $z\in U^{(p)}$, $p\in \mathcal T$, and thus by \eqref{asymzetav21}, the jump matrix $J_R(z)$ for $R(z)$ on $U^{(p)}$, $p\in \mathcal T$, has the form
 \begin{equation}
\label{PN-1v21}
P(z)\mathcal N(z;s\Omega)^{-1}=I+\mathcal O\left(\frac{1}{s(1-v_2)}\right) ,\end{equation} 
as $s\to \infty$,  uniformly under conditions \eqref{partcass} and also uniformly for $z\in U^{(p)}$, $p\in \mathcal T$.

The analysis of $J_R(z)$ on the rest of the jump contour is similar, and we obtain uniformly
for \eqref{partcass} and uniformly on this part of the contour
\begin{equation}
\label{NJN-1v21}
\mathcal N(z;s\Omega)
J_S(z)\mathcal N(z;s\Omega)^{-1}=I+\mathcal O\left(e^{-s(1-v_2)c'(1+|z|)}\right),\qquad c'>0.
\end{equation} 

Thus we have a small norm problem for $R$, and as in the previous section we now obtain 
\begin{equation}\label{extRexp}
R(z)=I+\mathcal O\left(\frac{1}{s(1-v_2)}\right),\qquad 
R'(z)\big|_{z=v_2}=\mathcal O\left(\frac{1}{s(1-v_2)^2}\right)
\end{equation}
uniformly on compact sets under conditions \eqref{partcass}.
Therefore the solution of the $\Phi$-Riemann-Hilbert problem for fixed $v_1$, $v_2$ extends
 to the regime \eqref{partcass}. Note, however, that the error terms are different from those in the previous section.

%

\section{Preliminary asymptotic formula for the determinant}\label{secPrelim}
For $\nu=z-v_2$ in a neighborhood of $0$, we write the expansions of $\zeta(z)$,
\begin{equation}\label{expzeta}
\sqrt{\zeta(\nu+v_2)}=s\zeta_0 \sqrt{\nu}(1+\zeta_1\nu+\mathcal O(\nu^2)), 
\qquad
\zeta_0=\frac{2(v_2-x_1)(x_2-v_2)}{\sqrt{(1-v_2^2)(v_2-v_1)}}>0, 
\end{equation}
where $-\pi<\arg\nu<\pi$, and the branch cut is on $(-\infty,0]$.  Similarly, we expand $ \gamma(z)$, $m(z)$, and $u(z)$,
\begin{equation}\label{expgamma}
\begin{aligned} \gamma(\nu+v_2)&=\gamma_0\nu^{-1/4}(1+\gamma_1\nu+\mathcal O(\nu^2)),
\qquad \gamma_0 e^{-\pi i /4}=\left(\frac{(1-v_2)(v_2-v_1)}{1+v_2}\right)^{1/4}>0,\\ 
u(\nu+v_2)&=-u_0\nu^{1/2}(1+u_1\nu+\mathcal O(\nu^{2})),
\qquad
u_0=\frac{1}{I_0\sqrt{(v_2-v_1)(1-v_2^2)}}>0,
\\ m_{jk}(\nu+v_2)&=m_{jk,0}+m_{jk,1}\nu^{1/2}+m_{jk,2}\nu+\mathcal O(\nu^{3/2}),
\end{aligned}
\end{equation}
but with branches chosen such that $0<\arg\nu<2\pi$, and the branch cut on $[0,+\infty)$. 
Here $m_{jk}$ are the matrix elements of $m$. Thus, $\arg\nu$ in \eqref{expzeta} and \eqref{expgamma} are the same for $\Im \nu>0$, but are different for $\Im \nu<0$.

Using the definition of $m$ and the jump conditions \eqref{jumpsm}, we easily obtain
the relations:
 \begin{equation}\begin{aligned}\label{idsmij}
m_{11,0}&=m_{12,0}, & m_{21,0}&=m_{22,0},\\
m_{11,1}&=-m_{12,1}, & m_{21,1}&=-m_{22,1},\\
m_{11,2}&=m_{12,2}, & m_{21,2}&=m_{22,2}.
\end{aligned}\end{equation}
We also find
\begin{equation}
\label{formulamjj}\begin{aligned}
m_{jj,0}&=m_{jj,0}(\omega)=\frac{\theta(0)\theta\left(  \pm\omega+d\right)}{\theta\left( \omega\right)\theta(d)}\\
m_{jj,1}&=-m_{jj,0}u_0 \left( \frac{\theta'\left( 
\pm\omega+d\right)}{\theta\left( 
\pm\omega+d\right)}-\frac{\theta'(d)}{\theta(d)}\right),\\
m_{jj,2}&=\frac{m_{jj,0}u_0^2}{2}\left(
\frac{\theta''(\pm\omega+d)}{\theta(\pm\omega+d)}-\frac{\theta''(d)}{\theta(d)}+2\left(\frac{\theta'(d)}{\theta(d)}\right)^2-2\frac{\theta'(\pm\omega+d)\theta'(d)}{\theta(\pm\omega+d)\theta(d)}
\right),
\end{aligned}
\end{equation}
where $\pm$ means $+$ for $j=1$ and $-$ for $j=2$.

Let 
\begin{equation}
\widehat P(z)=\mathcal N(z;s\Omega)\frac{1}{\sqrt{2}}\begin{pmatrix}0&1\\-1&0\end{pmatrix}\begin{pmatrix}1&-i\\-i&1\end{pmatrix}\left(\pi\zeta^{\frac{1}{2}}\right)^{\frac{1}{2}\sigma_3} \Psi\left(\zeta(z)\right)
\end{equation}
By the definition of $\mathcal S$ in \eqref{def:hatPhi}, $R$ in \eqref{defnR} and $X$ in \eqref{defnhatPsi}, and the fact that $\phi(v_2)=0$, 
\begin{equation}\label{eqninter}
\left[\Phi_+^{-1}(v_2)\Phi_+'(v_2)\right]_{12}=-\left[\widehat P_+^{-1}(v_2)\widehat P_+'(v_2)+\widehat P^{-1}_+(v_2)R^{-1}(v_2)R'(v_2)\widehat P_+(v_2)\right]_{21}.
\end{equation}
With the notation of \eqref{expzeta} and \eqref{expgamma} (where the branches of $\sqrt{\nu}$ coincide for $\Im \nu>0$), it is a straightforward calculation relying on the expansion of $I_0(z)$ in  \eqref{asymI0}, the definition of $\mathcal N$ in \eqref{def:calN}, and the identities for $m_{ij}$ in \eqref{idsmij}, to obtain
\begin{equation} \label{FormulahatP}
\begin{aligned}
\widehat P_+(v_2)&= -\gamma_0 \sqrt{\frac{ \pi s\zeta_0}{2}}\begin{pmatrix}im_{11,0}&*\\m_{22,0}&*\end{pmatrix},\\
\widehat P_+'(v_2) &=-\gamma_0 \sqrt{\frac{ \pi s\zeta_0}{2}}\begin{pmatrix}im_{11,0}\left[\frac{m_{11,2}}{m_{11,0}}+\frac{m_{11,1}}{m_{11,0}}\left(\gamma_0^{-2}-\frac{s\zeta_0}{2}\right)+\frac{\zeta_1}{2}+\frac{s^2\zeta_0^2}{4}+\gamma_1-\gamma_0^{-2}\frac{s\zeta_0}{2}\right]&*\\m_{22,0}\left[\frac{m_{22,2}}{m_{22,0}}+\frac{m_{22,1}}{m_{22,0}}\left(\gamma_0^{-2}+\frac{s\zeta_0}{2}\right)+\frac{\zeta_1}{2}+\frac{s^2\zeta_0^2}{4}+\gamma_1+\gamma_0^{-2}\frac{s\zeta_0}{2}\right]&*\end{pmatrix},
\end{aligned}\end{equation}
where we are uninterested in the entries $*$, and $\omega=s\Omega$ in $m_{jj,k}$.

We will now make use of the first identity (\ref{id1}) in Lemma \ref{ThmThetaids},
which, by the definitions of $m_{jk,\ell}$, we can write in the form
\begin{equation}\label{id1m}
m_{11,0}m_{22,0}+\frac{\gamma_0^2}{2}(m_{11,0}m_{22,1}+m_{22,0}m_{11,1})=1.
\end{equation}
Using this relation, we obtain by \eqref{eqninter} and \eqref{FormulahatP} for the r.h.s. of the differential identity of Lemma \ref{diffFred},
\begin{multline}\label{exact}
\mathcal F_s(v_1,v_2)=\frac{i}{2\pi}\left[\Phi_+^{-1}(v_2)\Phi_+'(v_2)\right]_{12}=\frac{ s^2\zeta_0^2}{4}-\frac{s\zeta_0}{4}
m_{11,0}m_{22,0}\left(\gamma_0^2 \Gamma_2+\Gamma_1\right) \\
+\frac{i s\zeta_0\gamma_0^2}{4}\begin{pmatrix}
im_{22,0}&m_{11,0}
\end{pmatrix}
R^{-1}(v_2)R'(v_2)\begin{pmatrix}
m_{11,0}\\ -im_{22,0}
\end{pmatrix} \end{multline}
where
\begin{equation}
\Gamma_j=\frac{m_{11,j}}{m_{11,0}}-\frac{m_{22,j}}{m_{22,0}},
\end{equation}
and we take $\omega=s\Omega$ in $m_{jj,k}$. 

Now the more explicit asymptotic expression of \eqref{exact} is different (in the error term)
for fixed $v_1$, $v_2$ (Section \ref{SecR}) and for the double scaling regime of Section \ref{secextend}. 

For fixed  $v_1$, $v_2$,
by \eqref{expandR}, \eqref{R1},
\begin{multline}\label{notexact}
\mathcal F_s(v_1,v_2)=\frac{i}{2\pi}\left[\Phi_+^{-1}(v_2)\Phi_+'(v_2)\right]_{12}=\frac{ s^2\zeta_0^2}{4}-\frac{s\zeta_0}{4}
m_{11,0}m_{22,0}\left(\gamma_0^2 \Gamma_2+\Gamma_1\right) \\
+\frac{i \zeta_0\gamma_0^2}{4}W(s\Omega)+\mathcal O\left(s^{-1}\right), \end{multline}
as $s\to \infty$ (uniformly for $v_1,v_2$ bounded away from each other and $\{-1,1\}$), where
\begin{equation}\label{def:W}
W(\omega)=\begin{pmatrix}
im_{22,0}(\omega)&m_{11,0}(\omega)
\end{pmatrix}\sum_{p\in\{-1,v_1,v_2,1\}}
\int_{\partial U^{(p)}}\frac{ s\Delta_1(z;\omega) dz}{2\pi i(z-v_2)^2}\begin{pmatrix}
m_{11,0}(\omega)\\ -im_{22,0}(\omega)
\end{pmatrix}\end{equation}
with integration in the clockwise direction.

For the regime \eqref{partcass} of Section \ref{secextend}, by \eqref{extRexp} and boundedness of $m_{jk}$,
\begin{equation}
s\begin{pmatrix}
im_{22,0}&m_{11,0}
\end{pmatrix}
R^{-1}(v_2)R'(v_2)\begin{pmatrix}
m_{11,0}\\ -im_{22,0}
\end{pmatrix}=W(s\Omega)+\mathcal O\left(\frac{1}{s(1-v_2)^3}\right),
\end{equation}
and since by \eqref{expzeta} and \eqref{expgamma}, and the formulas for $x_1,x_2$ in \eqref{limx1x2}, we have
\begin{equation}\label{zeta0gamma0v21}\zeta_0\gamma_0^2=\mathcal O(1-v_2),\qquad v_2\to 1,
\end{equation}
equation \eqref{exact} becomes
\begin{equation}\label{notexact2}
\mathcal F_s(v_1,v_2)=\frac{ s^2\zeta_0^2}{4}-\frac{\zeta_0 s}{4}
m_{11,0}m_{22,0}\left(\gamma_0^2 \Gamma_2+\Gamma_1\right) 
+\frac{i \zeta_0\gamma_0^2}{4} W(s\Omega)+\mathcal O\left(\frac{1}{s(1-v_2)^2}\right), 
\end{equation}
uniformly under conditions \eqref{partcass}.

\begin{Prop} \label{PropD}
Let 
\begin{equation}\label{DD}
D(v_1,v_2)=
\frac{ s^2\zeta_0^2}{4}-\frac{s\zeta_0}{4}
m_{11,0}m_{22,0}\left(\gamma_0^2 \Gamma_2+\Gamma_1\right) 
+\frac{i \zeta_0\gamma_0^2}{4}\int_0^1W(\omega)d\omega,\end{equation}
where $\zeta_0$ and $\gamma_0$ are given in \eqref{expzeta}, \eqref{expgamma}, $
\Gamma_j=\frac{m_{11,j}}{m_{11,0}}-\frac{m_{22,j}}{m_{22,0}}$, with $m_{jj,k}=m_{jj,k}(s\Omega)$ from \eqref{formulamjj}, and where $W$ is given in \eqref{def:W} (with $\Delta_1$ defined by \eqref{Delta1} and \eqref{Delta11}). 
\begin{itemize}
\item[(a)]
Let $V\in(0,1)$, and let $\widehat A=(-1,-V)\cup(V,1)$. Let $v_2=-v_1$, and denote $v=v_2$. Fix $\epsilon>0$.
Then
\begin{equation}\nonumber
\log \det (I-K_s)_{\widehat A}-\log \det (I-K_s)_{A_s}=2\int_{1-\frac{2t}{s}}^V D(-v,v)dv+\mathcal O\left(\frac{1}{t}\right),
\end{equation}
as $s\to \infty$, uniformly for $\epsilon\le V\le 1-\frac{2t}{s}$,
where $t(s)\to\infty$, $t\le\frac{1}{2}(\log s)^{1/4}$, and $A_s=(-1,-1+2t/s)\cup (1-2t/s,1)$.

\item[(b)]
 Let $-1<V_1<0$ and $V_2$ be fixed, $V_1<V_2<1$,
 and denote $A=(-1,V_1)\cup(V_2,1)$. Then
\begin{equation}\nonumber
\log \det (I-K_s)_ A-\log \det(I-K_s)_{(-1,V_1)\cup (-V_1,1)}=\int_{-V_1}^{ V_2}D(V_1,v_2)dv_2+\mathcal O\left(\frac{1}{s}\right),
\end{equation}
as $s\to \infty$.
\item[(c)] Let $A=(-1,V_1)\cup(V_2,1)$, 
 and a fixed $\epsilon>0$, and with $-1< V_1<\widehat V_2< 1$ and
\begin{equation}\label{cond}
1-V_2 \le 1-\widehat V_2 \le
1+V_1,\qquad  V_2-V_1\ge\epsilon,\qquad  s(1-V_2)\to \infty.
\end{equation}
Then
\begin{equation}\nonumber
\log \det (I-K_s)_ A-\log \det(I-K_s)_{(-1,V_1)\cup (\widehat V_2,1)}=\int_{\widehat V_2}^{ V_2}D(V_1,v_2)dv_2+\mathcal O\left(\frac{1}{(1-V_2)s}\right),
\end{equation}
as $s\to \infty$, uniformly in the regime \eqref{cond}.
\end{itemize}
\end{Prop}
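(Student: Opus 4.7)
The plan is to invoke the differential identity of Lemma \ref{diffFred}: in the symmetric setting of (a), $\frac{d}{dv}\log\det(I-K_s)_{(-1,-v)\cup(v,1)} = 2\mathcal{F}_s(-v,v)$, and in (b)--(c), $\frac{\partial}{\partial v_2}\log\det(I-K_s)_{(-1,V_1)\cup(v_2,1)} = \mathcal{F}_s(V_1,v_2)$. Integrating these over the stated ranges reduces each of (a)--(c) to bounding $\int_a^b[\mathcal{F}_s-D]\,dv_2$. Comparing \eqref{notexact} (for fixed endpoints, Section \ref{SecR}) or \eqref{notexact2} (in the regime of Section \ref{secextend}) with the definition \eqref{DD}, the integrand decomposes as
$$\mathcal F_s - D = \frac{i\zeta_0\gamma_0^2}{4}\left(W(s\Omega)-\int_0^1 W(\omega)\,d\omega\right) + E_s(v_2),$$
where $E_s$ is the Riemann-Hilbert analysis error and the parenthesized term has zero mean in $\omega$.

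The explicit error $E_s$ is handled by direct integration. For fixed endpoints $E_s=\mathcal{O}(1/s)$, which integrated over a bounded interval yields the $\mathcal{O}(1/s)$ bound of (b). In the double-scaling regime $E_s=\mathcal{O}(1/(s(1-v_2)^2))$, so $\int_{\widehat V_2}^{V_2}E_s\,dv_2 = \mathcal{O}(1/((1-V_2)s))$ for (c); for (a), splitting the integration range at a fixed $v=1-\delta$ gives an $\mathcal{O}(1/s)$ contribution from the bulk and $\int_{1-2t/s}^{1-\delta} dv/(s(1-v)^2) = \mathcal{O}(1/t)$ from the portion approaching the endpoint, which dominates the claimed $\mathcal{O}(1/t)$.

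For the oscillatory contribution, the essential fact is that $W$ is periodic of period $1$ and real-analytic on $\mathbb{R}$ (a consequence of the periodicity of $\mathcal N(z;\omega)$ in $\omega$), so its Fourier coefficients decay exponentially and the antiderivative $G(\omega):=\int_0^\omega(W(\omega')-\int_0^1 W)\,d\omega'$ is bounded and periodic. The chain rule $\frac{d}{dv_2}G(s\Omega)=s\Omega'(v_2)(W(s\Omega)-\int_0^1 W)$ then enables integration by parts:
$$\int_a^b h\!\left(W(s\Omega)-\int_0^1 W\right)dv_2 = \left.\frac{h(v_2)\,G(s\Omega)}{s\Omega'(v_2)}\right|_a^b - \int_a^b G(s\Omega)\,\frac{d}{dv_2}\!\left(\frac{h}{s\Omega'}\right)dv_2,$$
with $h=\zeta_0\gamma_0^2/4$. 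Provided $|\Omega'|$ is bounded below and $h$, $1/\Omega'$ are $C^1$ in $v_2$, this yields an $\mathcal{O}(1/s)$ contribution, dominated by each of the claimed errors; in the double-scaling regime the additional smallness $h=\mathcal{O}(1-v_2)$ from \eqref{zeta0gamma0v21} only improves matters.

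The principal obstacle is establishing uniformity of the above estimates in the double-scaling regime of (a) and (c). One must verify that $\partial\Omega/\partial v_2$ remains bounded away from $0$ as $v_2\to 1$ (via a computation in the spirit of Section \ref{secextend}), that the $v_2$-derivatives of $m_{ij}$ and $\mathcal N$ stay controlled (cf.\ \eqref{diffmij}), and that the Fourier coefficients of $W-\int_0^1 W$ remain summable with bounds independent of $\tau$ (in fact this is easier as $-i\tau\to\infty$, since then $W$ itself approaches a constant function of $\omega$). With these uniform bounds in hand, summing the $E_s$ and oscillatory contributions produces the three stated error estimates.
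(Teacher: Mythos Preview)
Your outline matches the paper's: integrate the differential identity, split $\mathcal F_s-D$ into the Riemann--Hilbert error $E_s$ and the mean-zero oscillatory piece, and bound each. Your handling of $E_s$ is fine. The gap is in the oscillatory estimate.

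The chain-rule identity you write is incorrect: $W(\omega)$ depends on $v_2$ not only through $\omega=s\Omega(v_2)$ but also \emph{explicitly}, since $\mathcal N(z;\omega)$, $\Delta_1$, and $m_{jj,0}$ are built from $u(z)$, $d$, $\tau$, $\gamma$, $\zeta$, all of which vary with $v_2$. Thus the antiderivative must be $G(\omega;v_2)$, and
\[
\frac{d}{dv_2}G(s\Omega;v_2)=s\Omega'(v_2)\bigl(W(s\Omega;v_2)-\textstyle\int_0^1W\bigr)+\partial_{v_2}G(s\Omega;v_2).
\]
Your IBP formula omits the last term, which produces an additional contribution $-\int_a^b\frac{h}{s\Omega'}\partial_{v_2}G\,dv_2$. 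In the double-scaling regimes of (a) and (c) this is \emph{of the same order} as the claimed error, not smaller. Relatedly, your claim that the Fourier coefficients of $W$ admit bounds ``independent of $\tau$'' and that the smallness $h=\mathcal O(1-v_2)$ ``only improves matters'' is not justified: the $v_2$-derivatives that arise in the IBP grow like $(1-v_2)^{-1}$ or worse, and these offset the factor $h$. The oscillatory contribution is in fact $\mathcal O\bigl(1/((1-V_2)s)\bigr)$ in (c) and $\mathcal O(1/t)$ in (a), not $\mathcal O(1/s)$.

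The paper avoids both issues by working with $f(\omega;v_2)=\zeta_0\gamma_0^2W(\omega)$ as a whole, expanding $f=\sum_j f_j(v_2)e^{2\pi ij\omega}$ with $v_2$-dependent Fourier coefficients, and integrating by parts each mode $\int f_j(v_2)e^{2\pi ijs\Omega}\,dv_2$. This automatically captures the explicit $v_2$-dependence via $\partial_{v_2}f_j$. The required uniform estimates, established in Section~\ref{secextend} and in the proof, are $\partial_\omega f=\mathcal O(1/(1-v_2))$ and $\partial_\omega\partial_{v_2}f=\mathcal O(1/(1-v_2)^2)$, hence $f_j=\mathcal O(1/(j(1-v_2)))$ and $\partial_{v_2}f_j=\mathcal O(1/(j(1-v_2)^2))$; after the mode-wise IBP (with $\Omega'$ bounded below and $\Omega''=\mathcal O(1/(1-v_2))$) these yield exactly the stated errors. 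For the symmetric case (a), the lower bound on $\frac{d}{dv}\Omega(v,-v)$ comes from the symmetry $\Omega(v_2,v_1)=\Omega(-v_1,-v_2)$, which gives $\frac{d}{dv}\Omega(v,-v)=2\,\partial_{v_2}\Omega\big|_{v_2=v}$.
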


\begin{remark}
In the proof, considering the effects of averaging w.r.t. $\omega=s\Omega$,
we will show that \eqref{DD} gives the main contribution, and the error terms are as presented.
\end{remark}

\begin{remark}\label{use of Prop}
Part (a) allows us to integrate over symmetric intervals from the position of 2 small ones at $1$ and $-1$ 
(where Lemma \ref{Lemmasep} holds) to  
general symmetric intervals with a fixed $0<V<1$. Part (b) allows then to move the $V_2$ edge to an arbitrary fixed
position $V_1\equiv -V<V_2<1$. Note that the condition $-1<V_1<0$ here is not a loss of generality for 
$\det (I-K_s)_ A$, since we can use the symmetry $x\rightarrow -x$ of the determinant. 

Part (c) allows us to to integrate to reach a scaling limit where $V_2=V_2(s)$ can
approach $1$ provided $s(1-V_2)\to\infty$ and $V_1$ is fixed.

Finally, choose a $V_1(s)=-V_2(s)$ such that $2t=(1+V_1)s\to\infty$ (in this case, Lemma \ref{Lemmasep} still holds
by Remark \ref{Lemmasepremark}),
and then, if needed, move $V_2$ closer to $1$ using Part (c). Then, if needed, one can use the symmetry 
$x\rightarrow -x$, to reach an arbitrary situation with $(1+V_1)s\to\infty$, $(1-V_2)s\to\infty$.
\end{remark}

\begin{proof}
We first prove Part (b) of the proposition, then Part (c), and finally Part (a).
By \eqref{notexact} and the differential identity \eqref{diffid new}, all we need to do for the proof of Part (b) is to show 
that, with $\widehat V_2=-V_1$,
\begin{equation}\label{134}
\int_{\widehat V_2}^{V_2}\zeta_0\gamma_0^2W(s\Omega)dv_2=
\int_{\widehat V_2}^{V_2}\zeta_0\gamma_0^2\int_0^1 W(\omega)d\omega dv_2+\mathcal O(s^{-1}),
\end{equation}
as $s\to \infty$. Denote $f(\omega;v_2,v_1)=\zeta_0\gamma_0^2W(\omega)$. This function is analytic in both $\omega$ and $v_2$ ($v_2$ is bounded away from $v_1$ and $1$). 
Let $f_j$ denote its Fourier coefficients with respect to $\omega$, so that
\begin{equation}f(\omega;v_2,v_1)=\zeta_0\gamma_0^2W(\omega)=
\sum_{j=-\infty}^\infty f_j(v_2,v_1)e^{2\pi ij\omega}.\end{equation}
For $j\neq 0$, it follows by integration by parts that
\begin{multline}\label{fjint}
\left|\int_{\widehat V_2}^{V_2} f_j(v_2,v_1)e^{2\pi ijs\Omega}dv_2\right|\\=\frac{1}{2\pi |j|s}\left|\left[ \frac{f_j(v_2,v_1)e^{2\pi ijs\Omega}}{\frac{\partial}{\partial v_2}\Omega(v_2,v_1)}\right]_{\widehat V_2}^{V_2}-\int_{\widehat V_2}^{V_2} \frac{\partial}{\partial v_2}\left(\frac{f_j(v_2,v_1)}{\frac{\partial}{\partial v_2}\Omega(v_2,v_1)}\right)e^{2\pi ijs\Omega}dv_2\right|.\end{multline}
In Proposition \ref{Propterm2} (b) below we give an explicit formula for $\frac{\partial}{\partial v_2}\Omega(v_2,v_1)$, and in particular it is a strictly positive differentiable function bounded away from zero when $v_2$ is bounded away from $v_1$ and $1$. Thus
\begin{equation}\begin{aligned}
\int_{\widehat V_2}^{V_2} f(s\Omega;v_2,v_1)dv_2&=\sum_{j=-\infty}^\infty \int_{\widehat V_2}^{V_2} f_j(v_2,v_1)e^{2\pi ijs\Omega}dv_2\\
&=\int_{\widehat V_2}^{V_2} f_0(v_2,v_1)dv_2+\mathcal O\left(\frac{1}{s}\right),\qquad  s\to \infty,
\end{aligned}\end{equation}
which yields \eqref{134} since $f_0(v_2,v_1)=\zeta_0\gamma_0^2\int_0^1W(\omega)d\omega$.

\bigskip

We now prove Part (c) of the proposition.

Substituting \eqref{limx1x2} into the expression  \eqref{dOmdv} 
for $\frac{\partial \Omega}{\partial v_2}$ in Proposition \ref{Propterm2} below, 
and also using \eqref{Partial5},
we obtain
\begin{equation}\label{dOmegav21}
\frac{\partial \Omega}{\partial v_2}=\frac{3-v_1}{4\pi \sqrt{2(1-v_1)}}+\mathcal O(1-v_2),\qquad
\frac{\partial^2\Omega}{\partial v_2^2}=\mathcal O((1-v_2)^{-1}),
\end{equation}
and, in particular, $\frac{\partial \Omega}{\partial v_2}$ remains bounded away from $0$. 

We now show that
\begin{equation} \label{fjv21}
f_j(v_2,v_1)=\mathcal O\left(\frac{1}{j(1-v_2)}\right), \qquad \frac{\partial}{\partial v_2}f_j(v_2,v_1)=\mathcal O\left(\frac{1}{j(1-v_2)^2}\right), \end{equation}
as $v_2\to 1$, for $j\neq 0$, uniformly under conditions \eqref{partcass}, which proves Part (c) of the  proposition 
by \eqref{notexact2} and  arguments similar to those we used in the proof of Part (b).

Since
\begin{equation}\label{Fdom} 
\left|f_j(v_2,v_1)\right|=\left|\int_0^1f(\omega;v_2,v_1)e^{-2\pi ij \omega}d\omega\right| 
=\left|\frac{1}{2\pi j}\int_0^1\frac{\partial}{\partial\omega}f(\omega;v_2,v_1)e^{-2\pi i j \omega}d\omega\right|,
\qquad j\neq 0,
\end{equation}
and similarly for $\frac{\partial}{\partial v_2}f_j(v_2,v_1)$, it suffices to show that
\begin{equation}\label{v21suff}
\frac{\partial}{\partial \omega}f(\omega;v_2,v_1)=\mathcal O\left(\frac{1}{1-v_2}\right), \qquad \frac{\partial}{\partial \omega}\frac{\partial}{\partial v_2}f(\omega;v_2,v_1)=\mathcal O\left(\frac{1}{(1-v_2)^2}\right),
\end{equation}
as $v_2\to 1$.

It follows by the definition of $W$ in \eqref{def:W}, \eqref{PN-1v21}, \eqref{zeta0gamma0v21},
and the arguments of the previous section that 
\begin{equation} \label{asymfv21}
f(\omega;v_2,v_1)=\zeta_0\gamma_0^2W(\omega)=
\mathcal O\left(\frac{1}{1-v_2}\right)
\end{equation}
as $v_2\to 1$ under conditions \eqref{partcass}.

We recall that $\frac{\partial m_{ij}(z)}{\partial \omega}$, $i,j\in\{1,2\}$,  are uniformly bounded for $z\in \overline{U^{(p)}}$, $p\in \mathcal T$, and so $\frac{\partial f(\omega;v_2,v_1)}{\partial \omega}$ satisfies the same upper bound as $f(\omega;v_2,v_1)$ given in \eqref{asymfv21}, proving the first bound in  \eqref{v21suff}.

To obtain the second one,
we observe first  that by \eqref{limtau}, 
\begin{equation}\label{dtau}
\frac{\partial }{\partial v_2}\tau=\mathcal O\left(\frac{1}{1-v_2}\right),\qquad v_2\to 1,
\end{equation}  
and by \eqref{uint},
  \begin{equation} \label{partialuv21}
  \frac{\partial u(z)}{\partial v_2}=\mathcal O\left(\frac{1}{1-v_2}\right),\end{equation} as $v_2\to 1$, uniformly for $z\in \partial U^{(p)}$, $p\in \mathcal T$.  
 
It follows by \eqref{uinftyd1} and \eqref{uinftyint} that $\frac{\partial}{\partial v_2}d=\mathcal O\left(\frac{1}{1-v_2}\right)$, as $v_2\to 1$.
  Thus, by \eqref{diffmij},
\begin{equation}\frac{\partial m_{ij}(z;\omega)}{\partial v_2}=\mathcal O\left(\frac{1}{1-v_2}\right),\end{equation}
as $v_2\to 1$, uniformly for $z\in \overline{ U^{(p)}}$, $p\in \mathcal T$. Furthermore, by the definition \eqref{def:gamma},
\begin{equation}
\frac{\partial}{\partial v_2}\gamma(z),\, \frac{\partial}{\partial v_2}\gamma^{-1}(z)=\mathcal O\left(\frac{1}{1-v_2}\right).
\end{equation}
By \eqref{limx1x2} and \eqref{def:zeta},
\begin{equation}\label{asymzetav21'}\frac{\partial}{\partial v_2}\left(\frac{1}{\sqrt{\zeta(z)}}\right)=\mathcal O\left(\frac{1}{s(1-v_2)^2}\right).\end{equation}
The above bounds taken together imply
\begin{equation}\label{v21derivDelta}
s\frac{\partial}{\partial v_2}\Delta_1(z)=\mathcal O\left(\frac{1}{(1-v_2)^2}\right).
\end{equation}
It follows by the definition of $W$ in \eqref{def:W}, \eqref{zeta0gamma0v21}, and boundedness of $m_{jk}$ that
\begin{equation}
\frac{\partial}{\partial v_2}f(\omega;v_2,v_1)=\mathcal O\left(\frac{1}{(1-v_2)^2}\right), \end{equation}
as $v_2\to 1$, uniformly under conditions \eqref{partcass}.
Since $\frac{\partial}{\partial \omega}\frac{\partial m_{ij}(z)}{\partial v_2}=\mathcal O\left(\frac{\partial m_{ij}(z)}{\partial v_2}\right)$, it follows that $\frac{\partial}{\partial \omega}\frac{\partial}{\partial v_2}f(\omega;v_2,v_1)=\mathcal O\left(\frac{\partial}{\partial v_2}f(\omega;v_2,v_1)\right)$, which proves the second bound in \eqref{v21suff}, completing
the proof of Part (c) of the proposition.

\bigskip

To show Part (a), we let $v_2=-v_1=v$, and take the limit $s\to \infty$ such that $\epsilon<v<1-\frac{M}{s}$ for some $\epsilon>0$ and a sufficiently large $M$. By \eqref{diffidsym},
\begin{equation}
\frac{\partial}{\partial v}\det(I-K_s)_{(-1,-v)\cup(v,1)}=2\mathcal F_s(-v,v). 
\end{equation}
We observe that \eqref{notexact2} is valid also for $v_2=-v_1=v$, and all that remains to finish the proof of Part (a) of the proposition is to consider the Fourier coefficients of $f$.
 In place of \eqref{fjint}, we have
\begin{equation}\label{fjint2}
\left|\int_{\widehat V}^{V} f_j(v,-v)e^{2\pi ijs\Omega}dv\right|\\=\frac{1}{2\pi |j|s}\left|\left[ \frac{f_j(v,-v)e^{2\pi ijs\Omega}}{\frac{\partial}{\partial v}\Omega(v,-v)}\right]_{\widehat V}^{V}-\int_{\widehat V}^{V} \frac{\partial}{\partial v}\left(\frac{f_j(v,-v)}{\frac{\partial}{\partial v}\Omega(v,-v)}\right)e^{2\pi ijs\Omega}dv\right|.\end{equation}
By above arguments, it suffices to show that the right hand side of \eqref{fjint2} is of order $\frac{1}{j^2 s (1-v)^2}$.
To do this we need the first bound in \eqref{v21suff}, which holds also for $v_2=-v_1=v$, and additionally we need to prove that
\begin{equation}\label{vsuff}
 \frac{\partial}{\partial \omega}\frac{\partial}{\partial v}f(\omega;v,-v)=\mathcal O\left(\frac{1}{(1-v)^2}\right),
\end{equation}
as $v\to 1$, and that $\frac{d}{dv}\Omega(v,-v)$ remains bounded away from $0$. 
Note that, using contour integration, 
\[
\Omega^{-1}(v_2,v_1)=
I_0=\int_{v_2}^{1}\frac{dx}{\sqrt{|p(x)|}}=\int_{-1}^{v_1}\frac{dx}{\sqrt{|p(x)|}},
\]
and therefore
\begin{equation}\label{Ominlimit}
\Omega(v_2,v_1)=\Omega(-v_1,-v_2),\qquad \frac{\partial}{\partial v}\Omega(v,-v)=2\frac{\partial }{\partial v_2}\Omega(v_2,-v)\Big|_{v_2=v}.
\end{equation}
The last derivative is thus bounded away from $0$ by \eqref{dOmegav21}. 

In order to prove \eqref{vsuff}, we simply observe that the bounds obtained in \eqref{partialuv21}--\eqref{asymzetav21'}  also hold for the derivatives with respect to $v$ instead of $v_2$, which yields 
\begin{equation} \frac{\partial }{\partial v}f(\omega;v,-v)=\mathcal O\left(\frac{1}{(1-v)^2}\right),
\end{equation}
as $v\to 1$. Since $\frac{\partial}{\partial \omega}\frac{\partial m_{ij}(z)}{\partial v}=\mathcal O\left(\frac{\partial m_{ij}(z)}{\partial v}\right)$, it follows that $\frac{\partial}{\partial \omega}\frac{\partial}{\partial v}f(\omega;v,-v)=\mathcal O\left(\frac{\partial}{\partial v}f(\omega;v,-v)\right)$, which proves \eqref{vsuff} and thus Part (a) of the proposition.

\end{proof}

\section{Proof of Theorems \ref{Thm} and \ref{Thmtau1}}\label{secProof1}
In the next 3 sections, we show that \eqref{DD} in Proposition \ref{PropD} can be written as
\begin{equation}\label{newD}
D(v_1,v_2)=\frac{\partial}{\partial v_2}\mathcal G(s;v_1,v_2)
+
\frac{\partial \tau}{\partial v_2}\int_0^1\frac{\partial }{\partial \tau} \log \theta_3(\omega;\tau)d\omega
-\frac{\partial \tau}{\partial v_2}\frac{\partial }{\partial \tau} \log \theta_3(s\Omega;\tau), 
\end{equation}
where
\begin{equation}\label{mathcalG}
\mathcal G=
s^2\left(\frac{I_2-\frac{v_2+v_1}{2}I_1}{I_0}-\frac{(v_2-v_1)^2}{8}\right)
+\log \theta(s\Omega;\tau)-\frac{1}{2}\log I_0-\frac{1}{8}\sum_{y\in\{-1,v_1,v_2,1\}}\log |q(y)|.
\end{equation}

We now use \eqref{newD} to prove Theorems  \ref{Thm} and \ref{Thmtau1}.
First, we show that with $\widehat V_2$ fixed, and in all asymptotic regimes of 
Proposition \ref{PropD},
\begin{equation}\label{averagelogOm}
\int_{\widehat V_2}^{V_2}\left(
\frac{\partial \tau}{\partial v_2}\int_0^1\frac{\partial }{\partial \tau} \log \theta_3(\omega;\tau)d\omega
-\frac{\partial \tau}{\partial v_2}\frac{\partial }{\partial \tau} \log \theta_3(s\Omega;\tau)
\right) dv_2
=
\mathcal O\left(\frac{1}{s(1-V_2)}\right),\qquad
s\to\infty,
\end{equation}
uniformly in integration regimes of Proposition \ref{PropD}, 
and so this part only contributes to the error term.

Using the differential equation \eqref{thdiff} and \eqref{dtau}, we write
\begin{equation}\label{logtauinf1}
\frac{\partial}{\partial\omega}\left(\frac{\partial \tau}{\partial v_2}\frac{\partial }{\partial \tau} \log \theta_3(\omega;\tau)\right)=\frac{1}{4\pi i} 
\frac{\partial \tau}{\partial v_2}
\left(\frac{\theta''_3}{\theta_3}\right)'(\omega)=
\mathcal O\left(\frac{1}{1-v_2}\right).
\end{equation}
Also since by \eqref{limtau},
\begin{equation}
\frac{\partial^2 \tau}{\partial v_2^2}=
\mathcal O\left(\frac{1}{(1-v_2)^2}\right),
\end{equation}
we similarly obtain
\begin{equation}\label{logtauinf2}
\frac{\partial}{\partial\omega}\frac{\partial}{\partial v_2}\left(\frac{\partial \tau}{\partial v_2}\frac{\partial }{\partial \tau} \log \theta_3(\omega;\tau)\right)= \mathcal O\left(\frac{1}{(1-v_2)^2}\right).
\end{equation}
The estimates \eqref{logtauinf1} and \eqref{logtauinf2} imply, 
by similar arguments to \eqref{Fdom}, \eqref{v21suff}, \eqref{fjint}, the estimate \eqref{averagelogOm}.

We now apply Part (a) of Proposition \ref{PropD} to integrate \eqref{newD} from the position of 2 symmetric small
intervals $v=-v_1=v_2=1-\frac{2t}{s}$, $t=\frac{1}{2}\log(s)^{1/4}$,
where Lemma \ref{Lemmasep} can be applied, to the case of $V=-v_1=v_2>0$ {\it fixed}.
If $-v_1=v_2=v$, by symmetry under the exchange $v_2\rightarrow -v_1$, $v_1\rightarrow -v_2$,
\[
2\frac{\partial}{\partial v_2}\mathcal G(s;v_1,v_2)=
\frac{\partial}{\partial v}\mathcal G(s;-v,v).
\]
Thus, applying Part (a) of Proposition \ref{PropD} and using Lemma \ref{Lemmasep},
we obtain
\begin{equation}\label{almostthere}
\log\det(I-K_s)_{\widehat A}=
\mathcal G(s;-V,V)-\mathcal G\left(s;-1+\frac{2t}{s},1-\frac{2t}{s}\right)
-t^2-\frac{1}{2}\log t +2c_0+\mathcal O(1/t).
\end{equation}
To finish the proof of Theorem \ref{Thm} in the symmetric case, we need to estimate 
$\mathcal G\left(s;-1+\frac{2t}{s},1-\frac{2t}{s}\right)$. 
Using formulae \eqref{I02}, \eqref{KEexp}, we obtain 
  in our case $v=1-\frac{2t}{s}$ (recall that $v'^2=1-v^2$) 
\begin{equation}\label{first}
I_0(-v,v)=\frac{\pi}{2}\left(1
+\frac{t}{s}+\frac{5t^2}{4s^2}+
\mathcal O((t/s)^3)\right), \qquad 
\frac{I_2(-v,v)}{I_0(-v,v)}=1
-\frac{2t}{s}+\frac{t^2}{s^2}+
\mathcal O((t/s)^3),
\end{equation}  
and so the term with $s^2$ in $\mathcal G\left(s;-1+\frac{2t}{s},1-\frac{2t}{s}\right)$ becomes
\begin{equation}\label{second}
\frac{I_2(-v,v)}{I_0(-v,v)}-\frac{v^2}{2}=\frac{1}{2}-\frac{t^2}{s^2}+
\mathcal O((t/s)^3).
\end{equation}
The term $\log\theta$ gives a contribution only to the error term, indeed, since by \eqref{KEexp}
\[
J_0(-v,v)=2K(v)=\left(\log\frac{4s}{t}\right)\left(1+\mathcal O(t/s)\right),\qquad
\tau=i\frac{J_0}{I_0}=\frac{2i}{\pi}\left(\log\frac{4s}{t}\right)\left(1+\mathcal O(t/s)\right),
\]
we have that
\begin{equation}\label{third}
\log\theta(s\Omega)=\log\left(1+\mathcal O((t/s)^2)\right)=\mathcal O((t/s)^2),\qquad -v_1=v_2=v=
1-\frac{2t}{s}.
\end{equation}
Finally, in this case 
\begin{equation}
|q(1)|=|q(-1)|=1-\frac{I_2}{I_0}=\frac{2t}{s}\left(1+\mathcal O(t/s)\right),\qquad
|q(-v)|=|q(v)|=\frac{I_2}{I_0}-v^2=\frac{2t}{s}\left(1+\mathcal O(t/s)\right),
\end{equation}
and so
\begin{equation}\label{fourth}
-\frac{1}{8}\sum_{y\in\{-1,v_1,v_2,1\}}\log |q(y)|=
-\frac{1}{2}\log\frac{2t}{s}+\mathcal O(t/s).
\end{equation}

Substituting \eqref{first}, \eqref{second}, \eqref{third}, \eqref{fourth} into the expression
\eqref{mathcalG} for $\mathcal G\left(s;-1+\frac{2t}{s},1-\frac{2t}{s}\right)$, and
that, in turn, into \eqref{almostthere}, we obtain asymptotics \eqref{FormDIZ} with an error term $o(1)$ and with
$\widehat G_1$ and $c_1$ as in \eqref{constant} in the case
$-v_1=v_2=V>0$. We then extend it to the general case of fixed $-1<v_1<v_2<1$ by now a
straightforward application of Part (b) of Proposition \ref{PropD}. (In fact, for $v_1<0$, but the general case follows by a symmetry argument: see Remark \ref{use of Prop}.) 
Now since by \cite{DIZ}, \eqref{FormDIZ} (with the error term $\mathcal O(s^{-1})$)
holds for {\it some} constants $\widehat G_1$, $c_1$, these must be equal to those in \eqref{constant}.
This completes the proof of Theorem \ref{Thm}.

Given Theorem \ref{Thm}, we immediately obtain Theorem \ref{Thmtau1} by applying Part (c) of 
Proposition \ref{PropD} and a symmetry argument as discussed in Remark \ref{use of Prop}.


Thus, all that remains now is to verify \eqref{newD}.
In Section \ref{SecLead} we consider the leading order term in \eqref{DD}, in Section \ref{SecFluc} we consider the term involving $(\gamma_0^2\Gamma_2+\Gamma_1$), which yields the derivative of $\log \theta(s\Omega)$, and in Section \ref{SecConst} we consider the term with $\int_0^1W(\omega)d\omega$, which yields the constant.
Thus we will prove the following 3 lemmata, which taken together imply \eqref{newD}.

\begin{Lemma}\label{leading}
\begin{equation}
\frac{\zeta_0^2}{4}=\frac{\partial}{\partial v_2}\left(\frac{I_2-\frac{v_2+v_1}{2}I_1}{I_0}-\frac{(v_2-v_1)^2}{8}\right).
\end{equation}
\end{Lemma}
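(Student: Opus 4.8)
\emph{Plan.} The plan is to prove the identity by a direct computation that reduces everything to the single fact
\[
\frac{\partial I_0}{\partial v_2}=-\frac{q(v_2)}{\widehat p(v_2)}\,I_0,\qquad \widehat p(z):=\frac{p(z)}{z-v_2}=(z^2-1)(z-v_1).
\]
The left-hand side of the lemma is elementary: by the explicit formula for $\zeta_0$ in \eqref{expzeta}, together with $(v_2-x_1)(x_2-v_2)=-q(v_2)$ and $(1-v_2^2)(v_2-v_1)=-\widehat p(v_2)$, one gets at once $\tfrac{\zeta_0^2}{4}=\tfrac{q(v_2)^2}{(1-v_2^2)(v_2-v_1)}=-\tfrac{q(v_2)^2}{\widehat p(v_2)}$, so it suffices to show that $\partial_{v_2}$ of the bracket equals this.

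First I would differentiate the elliptic integrals using the contour representation $I_j=\tfrac i2\int_{A_1}z^jp(z)^{-1/2}\,dz$, with the loop $A_1$ kept fixed (permissible since $v_1$ is frozen and $A_1$ may be taken to enclose $[v_2,1]$ over a whole range of $v_2$). Since $\partial_{v_2}p=-\widehat p=-p/(z-v_2)$, one has $\partial_{v_2}(p^{-1/2})=\tfrac1{2(z-v_2)p^{1/2}}$, hence $\partial_{v_2}I_j=\tfrac i4\int_{A_1}\tfrac{z^j\,dz}{(z-v_2)p^{1/2}}$. Expanding $\tfrac{z^j}{z-v_2}$ in partial fractions and using $\int_{A_1}z^jp^{-1/2}\,dz=\tfrac2iI_j$ expresses the cases $j=1,2$ in terms of $I_0,I_1$ and the single period $\partial_{v_2}I_0$, namely $\partial_{v_2}I_1=\tfrac12I_0+v_2\partial_{v_2}I_0$ and $\partial_{v_2}I_2=\tfrac12I_1+\tfrac{v_2}2I_0+v_2^2\partial_{v_2}I_0$.

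The main step — and the only one that is not pure bookkeeping — is to reduce the remaining period $\int_{A_1}\tfrac{dz}{(z-v_2)p^{1/2}}$ back to $I_0,I_1,I_2$. Here I would use that $g(z):=p(z)^{1/2}/(z-v_2)$ is single-valued on a neighbourhood of $A_1$: the loop encircles exactly the two branch points $v_2$ and $1$ of $p^{1/2}$, so $p^{1/2}$ returns to its value, and hence $\oint_{A_1}dg=0$. Computing $dg=\tfrac12\big[\tfrac{\widehat p'(z)}{p^{1/2}}-\tfrac{\widehat p(z)}{(z-v_2)p^{1/2}}\big]\,dz$ and Taylor-expanding $\widehat p$ about $v_2$, so that the $\tfrac1{(z-v_2)p^{1/2}}$-part splits off cleanly, yields
\[
\widehat p(v_2)\int_{A_1}\frac{dz}{(z-v_2)p^{1/2}}=\int_{A_1}\Big(\tfrac12\widehat p''(v_2)(z-v_2)+\tfrac13\widehat p'''(v_2)(z-v_2)^2\Big)p(z)^{-1/2}\,dz .
\]
With $\widehat p''(v_2)=6v_2-2v_1$, $\widehat p'''(v_2)=6$ and the reduction of $(z-v_2)^k$ to $I_j$'s this becomes $2\widehat p(v_2)\partial_{v_2}I_0=2I_2-(v_1+v_2)I_1-v_2(v_2-v_1)I_0$; now \eqref{x1x2eqn2} gives $2I_2-(v_1+v_2)I_1=-2x_1x_2I_0$ and \eqref{x1x2eqn1} gives $q(v_2)=v_2^2-\tfrac{v_1+v_2}2v_2+x_1x_2=x_1x_2+\tfrac12v_2(v_2-v_1)$, so the right-hand side is $-2q(v_2)I_0$, which is precisely the displayed identity for $\partial_{v_2}I_0$.

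It then remains only to assemble. With $N:=I_2-\tfrac{v_1+v_2}2I_1$ the bracket is $\tfrac{N}{I_0}-\tfrac{(v_2-v_1)^2}8$ and $\tfrac{N}{I_0}=-x_1x_2$ by \eqref{x1x2eqn2}; substituting the formulas for $\partial_{v_2}I_1,\partial_{v_2}I_2$ gives $\partial_{v_2}N=\tfrac{v_2-v_1}4I_0+\tfrac{v_2(v_2-v_1)}2\partial_{v_2}I_0$, and hence
\begin{align*}
\frac{\partial}{\partial v_2}\Big(\frac{N}{I_0}-\frac{(v_2-v_1)^2}{8}\Big)
&=\frac{\partial_{v_2}N}{I_0}-\frac{N}{I_0}\cdot\frac{\partial_{v_2}I_0}{I_0}-\frac{v_2-v_1}{4}
=\Big(x_1x_2+\frac{v_2(v_2-v_1)}{2}\Big)\frac{\partial_{v_2}I_0}{I_0}\\
&=q(v_2)\,\frac{\partial_{v_2}I_0}{I_0}=-\frac{q(v_2)^2}{\widehat p(v_2)}=\frac{\zeta_0^2}{4},
\end{align*}
as required. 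I expect the single-valuedness/exact-differential argument of the third paragraph to be the only subtle point; everything else is algebra with \eqref{x1x2eqn1}--\eqref{x1x2eqn2}. (One can also sanity-check the result by matching the $s^2$-coefficients of the differential identity \eqref{exact} and of the DIZ asymptotics \eqref{FormDIZ}, noting that the bracket equals $\tfrac12-G_0$.)
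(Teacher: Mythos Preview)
Your proof is correct and follows essentially the same route as the paper's Section \ref{SecLead}: the exact-differential identity $\oint_{A_1}dg=0$ with $g=p^{1/2}/(z-v_2)$ is precisely the paper's \eqref{zv-deriv} (since $p^{1/2}/(z-v_2)=\sqrt{\widehat p/(z-v_2)}$), and your formula $\partial_{v_2}I_0=-q(v_2)I_0/\widehat p(v_2)$ is the paper's \eqref{Partial5} rewritten via \eqref{x1x2eqn2}. Your packaging in terms of $q(v_2)$ and $\widehat p(v_2)$ makes the final assembly a touch cleaner, but the underlying argument is identical.
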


\begin{Lemma}\label{subleading}
\begin{equation}
-\frac{s\zeta_0}{4}
m_{11,0}m_{22,0}\left(\gamma_0^2 \Gamma_2+\Gamma_1\right)=
\frac{\partial}{\partial v_2}\log \theta_3(s\Omega;\tau)-\frac{\partial \tau}{\partial v_2}\frac{\partial }{\partial \tau} \log \theta_3(s\Omega;\tau).
\end{equation}
Note that the r.h.s. here equals the partial derivative $
s\frac{\partial\Omega}{\partial v_2}
\frac{\partial}{\partial (s\Omega)}\log \theta_3(s\Omega;\tau)$ with $\tau$ fixed.
\end{Lemma}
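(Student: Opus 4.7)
The plan is to substitute the formulas \eqref{formulamjj} for $m_{jj,k}$ into $\Gamma_1+\gamma_0^2\Gamma_2$, eliminating $\theta_3'(d)/\theta_3(d)$ in favour of $\theta_1'(d)/\theta_1(d)$ by means of \eqref{id2}. A short calculation yields $\Gamma_1+\gamma_0^2\Gamma_2=\gamma_0^2u_0^2\,T(\omega)$ with
\[
T(\omega)=\tfrac{1}{2}\bigl(\psi_3(d+\omega)-\psi_3(d-\omega)\bigr)-\tfrac{\theta_1'(d)}{\theta_1(d)}\bigl(\phi_3(d+\omega)-\phi_3(d-\omega)\bigr),
\]
where $\phi_3=\theta_3'/\theta_3$ and $\psi_3=\theta_3''/\theta_3$. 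Using the equivalent form of the right-hand side noted in the statement, the lemma reduces to the functional identity
\[
-\frac{\zeta_0\gamma_0^2u_0^2}{4}\,m_{11,0}(\omega)m_{22,0}(\omega)\,T(\omega)=\frac{\partial\Omega}{\partial v_2}\cdot\frac{\theta_3'(\omega;\tau)}{\theta_3(\omega;\tau)}
\]
as meromorphic functions of $\omega\in\mathbb C$ (with $v_1,v_2$ fixed), to which one then specialises $\omega=s\Omega$.

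I would prove this identity by comparing the two sides as meromorphic functions on the torus $\mathbb C/(\mathbb Z+\tau\mathbb Z)$. Both sides are manifestly $1$-periodic. The product $m_{11,0}m_{22,0}$ is genuinely $\tau$-periodic, because the exponential factors from $\theta_3(d\pm\omega+\tau)$ in the numerator exactly cancel those from $\theta_3(\omega+\tau)^2$ in the denominator. Using the standard shifts $\phi_3(z+\tau)=\phi_3(z)-2\pi i$ and $\psi_3(z+\tau)=\psi_3(z)-4\pi i\phi_3(z)-4\pi^2$, one computes $T(\omega+\tau)-T(\omega)=-2\pi i\bigl[\phi_3(d+\omega)+\phi_3(d-\omega)-2\theta_1'(d)/\theta_1(d)\bigr]$; rewriting $\theta_1'(d)/\theta_1(d)=\phi_3(d)+1/(\gamma_0^2u_0)$ via \eqref{id2} and then applying \eqref{id1} to the resulting bracket produces the crucial simplification
\[
T(\omega+\tau)-T(\omega)=\frac{4\pi i}{\gamma_0^2u_0\,m_{11,0}(\omega)m_{22,0}(\omega)},
\]
so that $(m_{11,0}m_{22,0}\,T)(\omega+\tau)-(m_{11,0}m_{22,0}\,T)(\omega)=4\pi i/(\gamma_0^2u_0)$ is a genuine constant. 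The LHS therefore shifts by $-i\pi\zeta_0u_0$ under $\omega\mapsto\omega+\tau$, which matches the RHS shift $-2\pi i\,\partial\Omega/\partial v_2$ precisely because $\partial\Omega/\partial v_2=\zeta_0u_0/2$ by Proposition \ref{Propterm2}.

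Consequently LHS$-$RHS is truly doubly periodic. The RHS has a single simple pole in the fundamental domain, at $z_0:=(1+\tau)/2$, with residue $\partial\Omega/\partial v_2$. On the LHS, the simple zeros of $m_{11,0}m_{22,0}$ at $\omega=\pm(d-z_0)$ cancel the simple poles of $T$ arising from $\phi_3(d\pm\omega)$ and $\psi_3(d\pm\omega)$, while at $\omega=z_0$ the double pole of $m_{11,0}m_{22,0}$ meets a zero of $T$: using the half-period transformation $\theta_3(z+z_0)=ie^{-i\pi\tau/4-i\pi z}\theta_1(z)$ to compute $\phi_3(d\pm z_0)$ and $\psi_3(d\pm z_0)$ in terms of $\theta_1'(d)/\theta_1(d)$ and $\theta_1''(d)/\theta_1(d)$, a direct check gives $T(z_0)=0$. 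Hence $m_{11,0}m_{22,0}\,T$ has at most a simple pole at $z_0$, and LHS$-$RHS is an elliptic function with at most one simple pole in a fundamental domain; the sum-of-residues theorem forces that residue to vanish, leaving a constant. Evaluating at $\omega=0$, where $T(0)=0$ by oddness of the $\Delta\phi$ and $\Delta\psi$ combinations, $m_{11,0}(0)m_{22,0}(0)=1$ and $\phi_3(0)=0$, the constant is zero. The main obstacle is engineering the $\tau$-shift identity for $T$: the precise appearance of the factor $1/(m_{11,0}m_{22,0})$, which is what makes the $\tau$-shift of $m_{11,0}m_{22,0}\,T$ independent of $\omega$, relies essentially on the particular combination of \eqref{id1} and \eqref{id2}, and without it the LHS and RHS would not share the same quasi-periodic structure.
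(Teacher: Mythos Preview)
Your proof is correct and follows essentially the same strategy as the paper's Section~\ref{SecFluc}: reduce to a functional identity in $\omega$, match the $\tau$-quasi-periodicity of both sides, show the difference is elliptic with at most a single simple pole (hence constant), and evaluate at $\omega=0$. The paper packages the same computation as $T_1(\omega)=2\theta_3'(\omega)/\theta_3(\omega)$ with $T_1=-\gamma_0^2u_0\,m_{11,0}m_{22,0}\,T$; the only noteworthy variation is that you use identity \eqref{id1} directly to obtain the explicit $\tau$-shift $4\pi i/(\gamma_0^2u_0\,m_{11,0}m_{22,0})$ for $T$, whereas the paper instead shows the shift function $f(\omega)$ is elliptic with a removable singularity and then evaluates $f(0)=-4\pi i$ using only \eqref{id2}.
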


\begin{Lemma}\label{constantlemma}
\begin{equation}
\begin{aligned}
\frac{i \zeta_0\gamma_0^2}{4}\int_0^1W(\omega)d\omega=
-\frac{\partial}{\partial v_2}\left(\frac{1}{2}\log I_0+\frac{1}{8}\sum_{y\in\{-1,v_1,v_2,1\}}\log |q(y)|\right)\\
+\frac{\partial \tau}{\partial v_2}\int_0^1\frac{\partial }{\partial \tau} \log \theta_3(\omega;\tau)d\omega,
\end{aligned}
\end{equation}
where $W(\omega)$ is given in \eqref{def:W}.
\end{Lemma}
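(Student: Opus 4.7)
The plan is to evaluate $\int_0^1 W(\omega)d\omega$ by computing $W(\omega)$ via residues on each $\partial U^{(p)}$, then integrating the resulting $\theta$-function expressions over $\omega\in[0,1]$ using the heat equation, and finally matching against the right-hand side through the identities of Lemma \ref{ThmThetaids}.

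\textbf{Step 1 (Residues).} The integrand $s\Delta_1(z;\omega)/(2\pi i (z-v_2)^2)$ is meromorphic inside each $U^{(p)}$. For $p\in\{-1,v_1,1\}$ the factor $(z-v_2)^{-2}$ is regular at $z=p$, so the only singularity is the simple pole of $\Delta_1$ at $z=p$ coming from $1/\sqrt{\zeta(z)}$; by the expansion \eqref{expzeta2} and the formula \eqref{def:calN} for $\mathcal N$, the residue is an explicit scalar expression in $\mathcal N(p;\omega)$ and the prefactor $(p-v_2)^{-2}$. For $p=v_2$ we have a double pole, so the relevant residue requires the expansions \eqref{expzeta}--\eqref{formulamjj} to order $\nu^{1/2}$; this introduces derivatives of $\theta_1(d)/\theta_3(d)$ up to third order, which can be reduced by Lemma \ref{ThmThetaids}(b),(c) to expressions in $1/(\gamma_0^2 u_0)$ and $2\gamma_1+u_1$. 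Sandwiching each residue between $(im_{22,0}(\omega),m_{11,0}(\omega))$ and the transpose vector then yields $W(\omega)$ as a sum of four explicit terms involving $\theta_3(u(p)\pm d\pm\omega)$.

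\textbf{Step 2 ($\omega$-integration).} After multiplying by $\frac{i\zeta_0\gamma_0^2}{4}$ and applying Lemma \ref{ThmThetaids}(d) to each of the three residues at $p\in\{-1,v_1,1\}$, the factor $\theta_1^2(u(p)+d)/\theta_3^2(u(p)+d)$ becomes $-1/(I_0^2 h(p))$, converting these terms into $\omega$-dependent ratios of $\theta_3$. After averaging in $\omega$, the constant (in $\omega$) pieces reproduce exactly the combinations $\partial_{v_2}\log I_0$ and $\partial_{v_2}\log |q(p)|$ for $p\neq v_2$. The remaining $\omega$-dependent piece, together with the double-pole contribution at $v_2$, organizes into an average of a quantity of the form $\partial_\omega^2\log\theta_3(\omega;\tau)$; using the heat equation $\partial_\tau\theta_3=(4\pi i)^{-1}\partial_\omega^2\theta_3$ and periodicity of $\theta_3$, this converts into $\frac{\partial \tau}{\partial v_2}\int_0^1\partial_\tau\log\theta_3(\omega;\tau)d\omega$, matching the second term on the right-hand side.

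\textbf{Step 3 (Matching).} To close the identification, compute $\partial_{v_2}\log I_0$ and $\partial_{v_2}\log|q(y)|$ for each $y\in\{-1,v_1,v_2,1\}$ directly from \eqref{IJ}, \eqref{x1x2eqn1}, \eqref{x1x2eqn2}, and translate between elliptic-integral form and $\theta$-constant form via parts (d), (e), (f), (g) of Lemma \ref{ThmThetaids}. The coefficient $\partial\tau/\partial v_2$ is computed from $\tau=iJ_0/I_0$. Each of the four identities in Lemma \ref{ThmThetaids} corresponds to one of the four endpoints in the sum $\sum_y \log|q(y)|$, which is why these four identities suffice for the matching; the factor $\tfrac{1}{2}\log I_0$ arises as the common part produced by all four residues.

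\textbf{Main obstacle.} The hardest part is the algebraic bookkeeping in Step 2: tracking the signs of the $\mp$ prefactors in $\Delta_1$ (upper vs.\ lower half-plane), the conjugation factors $e^{\pm is\phi(p)\sigma_3}$ (which at $p\in\{-1,v_1\}$ produce $e^{\mp i\pi s\Omega\sigma_3}$ and hence shift the $\theta$-arguments), and reconciling the double-pole contribution at $v_2$ (which has no analogue at the other endpoints) so that it exactly produces the $y=v_2$ term in the $\log|q(y)|$ sum and the $\omega$-dependent $\tau$-derivative combination. This reconciliation rests on the simultaneous application of all four $\theta$-identities in Lemma \ref{ThmThetaids}, which are specifically tailored to this purpose.
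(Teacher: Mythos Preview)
Your overall strategy matches the paper's: split $W$ into the three simple-pole contributions at $p\in\{-1,v_1,1\}$ and the double-pole contribution at $p=v_2$, compute residues, integrate in $\omega$, and identify with $v_2$-derivatives through Lemma \ref{ThmThetaids}. For the three simple poles your plan is close enough to what the paper does (it packages those residues as functions $F_1,F_2,F_4(\omega)$ via the values \eqref{uz0}, then integrates them using the addition formulae \eqref{theta23}, \eqref{theta34} together with Lemma \ref{LemmaComp}, and converts via parts (e),(f),(g) rather than (d) of Lemma \ref{ThmThetaids}); either route lands on \eqref{avgT2}.

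The genuine gap is at $p=v_2$. Your phrase ``organizes into an average of a quantity of the form $\partial_\omega^2\log\theta_3$'' conceals the main difficulty. The double pole forces you to expand the full matrix $\mathcal N$ to order $(z-v_2)^{3/2}$, and the sandwich with $(im_{22,0},m_{11,0})$ does \emph{not} by itself produce a clean $(\theta_3'/\theta_3)'$-type expression. The paper's key device, which your outline does not contain, is the decomposition $\mathcal N=A+B$ of \eqref{defB}, chosen so that $(z-v_2)^{1/4}A$ and $(z-v_2)^{-1/4}B$ are each analytic at $v_2$. The $A$-part then collapses (via the already-proved Proposition \ref{Propterm2}(c)) to a multiple of $(\theta_3'(\omega)/\theta_3(\omega))^2$, and it is \emph{this} piece that, through $u_0^2=\pi^{-1}\partial_{v_2}|\tau|$ and $\int_0^1(\theta_3'/\theta_3)^2=\int_0^1\theta_3''/\theta_3$, delivers the $\partial_\tau\log\theta_3$ term. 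For the $B$-part the paper swaps the order of the $z$-residue and the $\omega$-integral (equation \eqref{ChangeT5}), evaluates the $\omega$-integral first by the closed formula \eqref{thint2} of Lemma \ref{2integrals}, and only then expands in $z-v_2$; the resulting combinations $H_0+u_0\gamma_0^2H_1$ and $u_0^2H_2-u_0^3\gamma_0^2H_3$ are what parts (b),(c) of Lemma \ref{ThmThetaids} were designed to collapse. Without the $A$/$B$ split and the interchange of integration order, expanding first in $z-v_2$ and then trying to integrate in $\omega$ produces third- and fourth-derivative $\theta$-combinations that do not visibly simplify, and the bookkeeping you allude to in ``Main obstacle'' does not close.
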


\section{The leading order term. Proof of Lemma \ref{leading}} \label{SecLead}
Recall from \eqref{IJ} the notation for $I_j$, $J_j$, $j=0,1,2$. 
We will calculate the derivatives $\frac{\partial}{\partial v_2}I_j$, $j=0,1,2$ in terms of the integrals themselves. The crucial identity here
is \eqref{zv-deriv} below.

First, we have
\begin{equation}
\frac{\partial I_j}{\partial v_2}=
\frac{i}{4}\int_{A_1} \frac{z^j}{(z-v_2)\sqrt{p(z)}}dz,\qquad j=0,1,2.
\end{equation}
Therefore,
\begin{equation} \label{Partial2}
\frac{\partial I_1}{\partial v_2}=
\frac{i}{4}\int_{A_1} \frac{z-v_2+v_2}{(z-v_2)\sqrt{p(z)}}dz=
I_0/2+v_2\frac{\partial I_0}{\partial v_2},
\end{equation}
and similarly,
\begin{equation} \label{Partial3}
\frac{\partial I_2}{\partial v_2}=
I_1/2+v_2\frac{\partial I_1}{\partial v_2}.
\end{equation}
The last 2 equations imply
\begin{equation} \label{Partial4}
\frac{\partial I_2}{\partial v_2}=v_2^2 \frac{\partial I_0}{\partial v_2}+I_1/2+v_2 I_0/2.
\end{equation}
From here and \eqref{Partial2},
\begin{equation} \label{Partial6}
\frac{\partial}{\partial v_2}\left(2I_2-(v_2+v_1)I_1\right)=(v_2-v_1)\left[v_2\frac{\partial I_0}{\partial v_2}+I_0/2\right].
\end{equation}

We will now express the derivative $\partial I_0/\partial v_2$ in terms of $I_j$'s.
To this end, observe that 
\begin{equation}\label{zv-deriv}
\begin{aligned}
0=\frac{i}{2} \int_{A_1} \frac{d}{dz}\sqrt{\frac{(z^2-1)(z-v_1)}{z-v_2}}dz &=
-i\frac{v_2-v_1}{4}\int_{A_1} \frac{z^2-1}{(z-v_2)\sqrt{p(z)}}dz+I_2-v_1I_1\\
&=-(v_2-v_1) \frac{\partial}{\partial v_2}\left(I_2-I_0\right)  +I_2-v_1I_1,
\end{aligned}
\end{equation}
so that
\begin{equation} \label{Partial1}
\frac{\partial}{\partial v_2}\left(I_2-I_0\right)=\frac{I_2-v_1I_1}{v_2-v_1}.
\end{equation}
Using this equation and \eqref{Partial4} we have
\begin{equation} \label{Partial5}
\frac{\partial I_0}{\partial v_2}=\frac{-I_2+\frac{v_2+v_1}{2}I_1+\frac{v_2(v_2-v_1)}{2}I_0}{(1-v_2^2)(v_2-v_1)}. 
\end{equation}
This and \eqref{Partial6} imply
\begin{equation}\label{derivativex1x2}
\frac{\partial}{\partial v_2}\left(\frac{I_2-\frac{v_2+v_1}{2}I_1}{I_0}\right)
=\frac{v_2-v_1}{4}+\frac{\left(2I_2-(v_1+v_2)I_1+v_2(v_1-v_2)I_0\right)^2}{4I_0^2(1-v_2^2)(v_2-v_1)}.
\end{equation}
By the formulas for $x_1$ and $x_2$ in \eqref{x1x2eqn1} and \eqref{x1x2eqn2}, and the formula for $\zeta_0$ in \eqref{expzeta}, we finish the proof of Lemma \ref{leading}.

\begin{remark}
We also observe for future reference that the arguments may be copied line for line and applied to the integrals 
$J_j=\int_{v_1}^{v_2}\frac{x^jdx}{\sqrt{|p(x)|}}$
 (by instead considering an integral over a closed loop containing $(v_1,v_2)$ and different branches of the roots), and we obtain the analogues to \eqref{Partial5} and \eqref{Partial6}:
\begin{align}\label{Partial7}
&\frac{\partial J_0}{\partial v_2}=\frac{-J_2+\frac{v_2+v_1}{2}J_1+\frac{v_2(v_2-v_1)}{2}J_0}{(1-v_2^2)(v_2-v_1)},\\
&\frac{\partial}{\partial v_2}\left(2J_2-(v_2+v_1)J_1\right)=(v_2-v_1)\left[v_2\frac{\partial J_0}{\partial v_2}+J_0/2\right]. \label{Partial8}
\end{align}
\end{remark}

\section{The fluctuations. Proof of Lemma \ref{subleading}}\label{SecFluc}
We write the first subleading term in \eqref{DD} in the form, using \eqref{expzeta}, \eqref{expgamma} for $\zeta_0$, $u_0$,
\begin{equation}\label{def:T1}
\frac{ s\zeta_0 u_0}{4}T_1(s\Omega),\qquad
\frac{ \zeta_0 u_0}{4}=\frac{(v_2-x_1)(x_2-v_1)}{2I_0(v_2-v_1)(1-v_2^2)},\qquad
T_1(\omega)=-\frac{m_{11,0}m_{22,0}}{u_0}\left(\gamma_0^2\Gamma_2+\Gamma_1\right).
\end{equation}

Our goal in this section is to prove the following proposition,
of which Lemma \ref{subleading} is an immediate corollary.
\begin{Prop}\label{Propterm2} There hold the identities:
\begin{itemize}
\item[(a)]
\begin{equation}\label{derivativetau}
-i\frac{\partial\tau}{\partial v_2}=\frac{\partial |\tau|}{\partial v_2}=
\frac{\pi}{I_0^2(1-v_2^2)(v_2-v_1)}=\pi u_0^2,
\end{equation}
\item[(b)]
\begin{equation}\label{dOmdv}
\frac{\partial\Omega}{\partial v_2}=\frac{(v_2-x_1)(x_2-v_2)}{I_0(1-v_2^2)(v_2-v_1)},
\end{equation}
\item[(c)] 
\begin{equation}\label{T1th}
T_1(\omega)=2\frac{\theta_3'(\omega)}{\theta_3(\omega)}.
\end{equation}
\end{itemize}
\end{Prop}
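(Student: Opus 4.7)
The proposition has three parts. Parts (a) and (b) are direct computations with elliptic integrals built on the derivative formulas of Section \ref{SecLead}, while part (c) is a theta-function identity that I will obtain via the Liouville residue-sum argument for elliptic functions.

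\emph{Part (b)} follows from $\Omega = 1/I_0$ (the last equation of \eqref{def:Omega}): one writes $\partial_{v_2}\Omega = -I_0^{-2}\partial_{v_2}I_0$, substitutes \eqref{Partial5} for $\partial_{v_2}I_0$, and uses the expansion $(v_2-x_1)(x_2-v_2) = v_2(x_1+x_2)-v_2^2-x_1x_2$ together with the values of $x_1+x_2$ and $x_1x_2$ in \eqref{x1x2eqn1}--\eqref{x1x2eqn2}. \emph{Part (a)}: differentiate $\tau = iJ_0/I_0$ using \eqref{Partial5} and \eqref{Partial7}; after cancellations,
\[
-i\frac{\partial \tau}{\partial v_2} \;=\; \frac{(J_0I_2-I_0J_2)+\tfrac{v_1+v_2}{2}(I_0J_1-J_0I_1)}{I_0^{2}(1-v_2^2)(v_2-v_1)}.
\]
I will show the numerator equals $\pi$ by starting from the $B_1$-period $\int_{B_1}\psi\,dz = -2\pi/I_0$ recorded in \eqref{def:Omega}: plugging in $\psi = (z^2-\tfrac{v_1+v_2}{2}z+x_1x_2)/p^{1/2}$, evaluating each term through the $J_j$, and eliminating $x_1x_2$ via \eqref{x1x2eqn2} produces exactly this Legendre-type relation. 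Then $-i\partial_{v_2}\tau = \pi u_0^2$ follows from the definition of $u_0$ in \eqref{expgamma}.

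\emph{Part (c)}: view both $T_1(\omega)$ and $F(\omega):=2\theta_3'(\omega)/\theta_3(\omega)$ as meromorphic functions of $\omega\in\mathbb C$ with $v_1,v_2$ (hence $\tau,d$) held fixed. Writing out \eqref{def:T1} using \eqref{formulamjj} expresses $T_1$ through $\theta_3$, $\phi:=\theta_3'/\theta_3$ and $\phi'$ at $d$ and $d\pm\omega$, and the factor $M := m_{11,0}m_{22,0}$. I then establish three points: (i) both $T_1$ and $F$ are odd in $\omega$; (ii) both are invariant under $\omega\mapsto\omega+1$ and satisfy $H(\omega+\tau)=H(\omega)-4\pi i$ under $\omega\mapsto\omega+\tau$, where for $T_1$ this quasi-periodicity uses \eqref{id1} (equivalently \eqref{id1m}) to rewrite the shift contribution as $M(A_++A_--2\phi(d))=\tfrac{2}{\gamma_0^2u_0}(M-1)$; and (iii) the only pole of $T_1$ modulo the lattice is a simple pole at $\omega=(1+\tau)/2$. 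For (iii), the apparent double pole of $M$ at $(1+\tau)/2$ coming from $\theta_3(\omega)^{-2}$ is reduced to a simple pole by identity \eqref{id2}, which via the Jacobi shift $\phi(d+(1+\tau)/2)+i\pi=\phi_1(d)$ is precisely the condition forcing $\gamma_0^2\Gamma_2(\omega)+\Gamma_1(\omega)$ to vanish at $\omega=(1+\tau)/2$; the apparent poles at $\omega=\pm((1+\tau)/2-d)$ coming from $\phi(d\pm\omega)$ inside $\Gamma_j$ are killed by the simple zeros of $M$ at those points (the double poles coming from $\phi'(d\pm\omega)$ in $B_\pm$ cancel against $A_\pm^2$, leaving $\Gamma_j$ with at most simple poles). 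With (i)--(iii) in place, $T_1-F$ is elliptic with at most a single simple pole per fundamental domain; by the vanishing of the residue sum of an elliptic function this residue must equal $0$, so $T_1-F$ is entire, hence constant, and oddness forces the constant to be zero.

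The main obstacle is organising the pole analysis at $\omega=(1+\tau)/2$ in part (c): tracking how the double pole of $M$ and the zero of $\gamma_0^2\Gamma_2+\Gamma_1$ are matched exactly by identity \eqref{id2}, and verifying that the candidate singularities at $\pm((1+\tau)/2-d)$ really do cancel. Once the pole picture is clean, the Liouville residue-sum argument delivers the identity without any need to compute the residue at $(1+\tau)/2$ explicitly, so identity \eqref{id3} is not called upon at this stage.
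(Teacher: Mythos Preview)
Your proposal is correct. Parts (a) and (b) are essentially the same as the paper's proof: for (a) the paper cites the Riemann bilinear relation \eqref{RBL} directly, while you rederive it from $\int_{B_1}\psi=-2\pi/I_0$ and \eqref{x1x2eqn2}, which amounts to the same identity; for (b) both arguments are identical.

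For part (c) your route differs organisationally from the paper's, and the difference is worth noting. The paper first uses \eqref{id2} to rewrite $T_1$ into the symmetric form \eqref{T1mid}, then proves the $\tau$-quasi-periodicity by showing the shift function $f(\omega)=T_1(\omega+\tau)-T_1(\omega)$ is itself elliptic with at most a simple pole (hence constant), evaluating the constant via \eqref{id2} again; the pole analysis at $(1+\tau)/2$ is then carried out on the rewritten form by passing to $\theta_1$. You instead keep the original expression for $T_1$ and obtain the quasi-periodicity in one step from \eqref{id1} (your shift computation $M\gamma_0^2u_0(A_++A_--2\phi(d))=2(M-1)$ is exactly \eqref{id1m}), while reserving \eqref{id2} for the pole reduction at $(1+\tau)/2$. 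Both approaches establish the same three ingredients (ellipticity of $T_1-F$, at most a simple pole modulo the lattice, vanishing at $\omega=0$), and both then conclude by Liouville. Your approach trades the intermediate rewriting for a slightly more hands-on pole check at $\pm((1+\tau)/2-d)$; the paper's rewriting makes that check more transparent but requires the extra Liouville step to pin down the shift constant. Neither route needs \eqref{id3}, as you remark.
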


\begin{proof}
To show Part (a) note that in the notation of \eqref{IJ}
\[
|\tau|=\frac{J_0}{I_0},
\]
and therefore, using \eqref{Partial5}, \eqref{Partial7}, we have
\begin{equation}
\frac{\partial |\tau|}{\partial v_2}=
\frac{I_2J_0-J_2I_0-\frac{v_1+v_2}{2}(I_1J_0-I_0J_1)}{I_0^2(1-v_2^2)(v_2-v_1)},
\end{equation}
which gives Part (a) of the proposition by Riemann's bilinear relations \eqref{RBL}.

Part (b) follows from \eqref{def:Omega} and \eqref{Partial5} by using \eqref{x1x2eqn1}, \eqref{x1x2eqn2}:
\begin{equation}
\frac{\partial \Omega}{\partial v_2}=-\frac{1}{I_0^2}\frac{\partial I_0}{\partial v_2}
=-\frac{x_1x_2+v_2(v_2-v_1)/2}{I_0 (1-v_2^2)(v_2-v_1)}=
\frac{(v_2-x_1)(x_2-v_2)}{I_0(1-v_2^2)(v_2-v_1)}.
\end{equation}


We will now show Part (c).  Substituting the definitions of $m_{jj,k}$ and $\Gamma_j$
into $T_1$ in \eqref{def:T1}, and using the identity \eqref{id2} of Lemma \ref{ThmThetaids}, we write $T_1$ in the form
\begin{equation}\label{T1mid}
T_1(\omega)=\gamma_0^2 u_0
\frac{\theta(0)^2\theta(\omega+d)\theta(\omega-d)}{\theta(d)^2\theta(\omega)^2}\left[ \frac{\theta_1'(d)}{\theta_1(d)}
\left(\frac{\theta'(\omega+d)}{\theta(\omega+d)}
+\frac{\theta'(\omega-d)}{\theta(\omega-d)}\right)
-\frac{1}{2}
\left(\frac{\theta''(\omega+d)}{\theta(\omega+d)}
-\frac{\theta''(\omega-d)}{\theta(\omega-d)}\right)
\right].
\end{equation}

We now show that $T_1(\omega)$ has the same behavior as $2\theta'(\omega)/\theta(\omega)$
under the shift $\omega\to \omega+\tau$, and therefore their difference is an elliptic function.
We obtain using \eqref{almost_elliptic}
\[
T_1(\omega+\tau)=T_1(\omega)+f(\omega),
\]
where
\begin{equation}\label{def: of f}
f(\omega)=2\pi i \gamma_0^2 u_0
\frac{\theta(0)^2\theta(\omega+d)\theta(\omega-d)}{\theta(d)^2\theta(\omega)^2}\left[ 
\frac{\theta'(\omega+d)}{\theta(\omega+d)}
-\frac{\theta'(\omega-d)}{\theta(\omega-d)}-
2\frac{\theta_1'(d)}{\theta_1(d)}\right].
\end{equation}

It is easily seen that $f(\omega)=f(\omega+\tau)=f(\omega+1)$, so that $f$ is elliptic.
Furthermore, at the zero $(1+\tau)/2$ of $\theta(\omega)$, by \eqref{theta'/theta},
\[
\frac{\theta'(\omega+d)}{\theta(\omega+d)}
-\frac{\theta'(\omega-d)}{\theta(\omega-d)}=
\frac{\theta_1'(d+\nu)}{\theta_1(d+\nu)}
+\frac{\theta_1'(d-\nu)}{\theta_1(d-\nu)}=2\frac{\theta_1'(d)}{\theta_1(d)}+\mathcal{O}(\nu),
\qquad \nu=\omega-\frac{1+\tau}{2},
\]
and thus the expression in the square brackets in \eqref{def: of f} vanishes
as $\omega\to (1+\tau)/2$. So the pole of $f$ at $(1+\tau)/2$ cannot have the order larger then 1.
Thus $f$ is an elliptic function with at most single first-order pole
modulo the lattice, which means $f$ is a constant. At $\omega=0$,
\[
f(0)=4\pi i \gamma_0^2 u_0 \left(\frac{\theta'(d)}{\theta(d)}-\frac{\theta_1'(d)}{\theta_1(d)}\right)=-4\pi i
\]
by \eqref{id2} of Lemma \ref{ThmThetaids}.
Thus
\[
f(\omega)\equiv -4\pi i.
\]
This immediately implies that the function
\[
T_1(\omega)-2\frac{\theta'(\omega)}{\theta(\omega)}
\]
is elliptic. To analyze its behavior at the pole, it is convenient to write
$T_1$ in terms of $\theta_1$ by \eqref{theta'/theta}, \eqref{theta''/theta}
with $\nu=\omega-\frac{1+\tau}{2}$:
\begin{equation}
\begin{aligned}
T_1(\omega)=&-\gamma_0^2 u_0
\frac{\theta(0)^2\theta_1(d+\nu)\theta_1(d-\nu)}{\theta(d)^2\theta_1(\nu)^2}\left[ \frac{\theta_1'(d)}{\theta_1(d)}
\left(\frac{\theta_1'(d+\nu)}{\theta_1(d+\nu)}
-\frac{\theta_1'(d-\nu)}{\theta_1(d-\nu)}\right)\right.\\
&\left.
-\frac{1}{2}
\left(\frac{\theta_1''(d+\nu)}{\theta_1(d+\nu)}
-\frac{\theta_1''(d-\nu)}{\theta_1(d-\nu)}\right)
-2\pi i \frac{\theta_1'(d)}{\theta_1(d)} +\pi i 
\left(\frac{\theta_1'(d+\nu)}{\theta_1(d+\nu)}
+\frac{\theta_1'(d-\nu)}{\theta_1(d-\nu)}\right)
\right].
\end{aligned}
\end{equation}
It is obvious from this representation that the expression in the square brackets vanishes
at $\nu=0$, and therefore the order of the pole of $T_1$ at $\nu=0$ is no larger than 1.
Since the same is true for $\theta'(\omega)/\theta(\omega)=\theta_1'(\nu)/\theta_1(\nu)-i\pi$, 
\[
T_1(\omega)-2\frac{\theta'(\omega)}{\theta(\omega)}\equiv\mathrm{const}.
\]
The value of this constant is easy to obtain by setting $\omega=0$:
since both $T_1(0)=0$ (see \eqref{T1mid}) and $\theta'(0)=0$, this value is 0, which proves Part (c).

\end{proof}

\section{The constant. Proof of Lemma \ref{constantlemma}}\label{SecConst}
Recalling \eqref{def:W}, we write the term with $W$ in \eqref{DD}
\begin{equation}\label{T2T3}
\frac{i \zeta_0\gamma_0^2}{4}\int_0^1W(\omega)d\omega=\frac{i \zeta_0\gamma_0^2}{4}\int_0^1\left( T_2(\omega)+
T_3(\omega)\right)d\omega,
\end{equation}
where 
\begin{equation}\label{defT2}
\begin{aligned}
 T_2&=\begin{pmatrix}
im_{22,0}&m_{11,0}
\end{pmatrix}\sum_{p\in \{-1,v_1,1\}}\int_{\partial U^{(p)}}\frac{ s\Delta_1(z) dz}{2\pi i(z-v_2)^2}\begin{pmatrix}
m_{11,0}\\ -im_{22,0}
\end{pmatrix}, 
\\ T_3&=\begin{pmatrix}
im_{22,0}&m_{11,0}
\end{pmatrix}\int_{\partial U^{(v_2)}}\frac{ s\Delta_1(z) dz}{2\pi i(z-v_2)^2}\begin{pmatrix}
m_{11,0}\\ -im_{22,0}
\end{pmatrix},\end{aligned}
\end{equation}
with the integrals traversed \textit{clockwise}.

In this section we show (in subsection \ref{91}) that
\begin{equation}\label{avgT2}
 \frac{i\gamma_0^2\zeta_0}{4}\int_0^1 T_2(\omega)d\omega=\frac{1}{8}\sum_{y\in\{-1,v_1,1\}}
 \frac{\partial}{\partial v_2}\left[-\log|q(y)|+\frac{1}{2}\log|(y-v_2)|\right],
 \end{equation}
and (in subsection \ref{92}) that
 \begin{multline}\label{SolnT3}
  \frac{i\gamma_0^2\zeta_0}{4}\int_0^1T_3(\omega)d\omega-\frac{\partial \tau}{\partial v_2}\int_0^1
  \frac{\partial }{\partial \tau} \log \theta(\omega;\tau)d\omega
  \\=-\left(\frac{1}{16}\frac{\partial}{\partial v_2}
 \log \left[(1-v_2^2)(v_2-v_1)\right]+\frac{1}{2}\frac{\partial}{\partial v_2}\log I_0+\frac{1}{8}\frac{\partial}{\partial v_2}\log |q(v_2)|\right).
  \end{multline}

Substituting the last 2 equations into  \eqref{T2T3}, we prove Lemma \ref{constantlemma}.

\subsection{Evaluation of $T_2$}\label{91}
Our goal in this section is to obtain \eqref{avgT2}. We first compute $T_2(\omega)$.
By the definition of $\mathcal N$ in \eqref{def:calN} and by  \eqref{uz0} and \eqref{th124}, with $\omega=\pi\Omega$,
\begin{equation}\begin{aligned}\mathcal N(z)e^{-i\pi\omega\sigma_3}&=\frac{\gamma(z)\theta_3}{2\theta_3(\omega)}\begin{pmatrix}\frac{\theta_1(\omega+d)}{\theta_1(d)}&i\frac{\theta_1(\omega+d)}{\theta_1(d)}\\
-i\frac{\theta_1(d-\omega)}{\theta_1(d)}&\frac{\theta_1(d-\omega)}{\theta_1(d)}\end{pmatrix}+o(1),\qquad z\to -1\\
\mathcal N(z)e^{-i\pi\omega\sigma_3}&=\frac{\gamma(z)^{-1}\theta_3}{2\theta_3(\omega)}\begin{pmatrix}\frac{\theta_2(\omega+d)}{\theta_2(d)}&-i\frac{\theta_2(\omega+d)}{\theta_2(d)}\\
i\frac{\theta_2(d-\omega)}{\theta_2(d)}&\frac{\theta_2(d-\omega)}{\theta_2(d)}\end{pmatrix}+o(1),\qquad z\to v_1\\
\mathcal N(z)&=\frac{\gamma(z)^{-1}\theta_3}{2\theta_3(\omega)}\begin{pmatrix}\frac{\theta_4(\omega+d)}{\theta_4(d)}&-i\frac{\theta_4(\omega+d)}{\theta_4(d)}\\
i\frac{\theta_4(d-\omega)}{\theta_4(d)}&\frac{\theta_4(d-\omega)}{\theta_4(d)}\end{pmatrix}+o(1),\qquad z\to 1
\end{aligned}\end{equation}
(Note that $\theta_j(d)\neq 0$, $j=1,2,3,4$, by the argument following \eqref{def:d}. Moreover, $\theta_3(\omega)\neq 0$ for $\omega\in\mathbb R$.)
Thus, by \eqref{pointsphi}, \eqref{Delta1}, \eqref{Delta11}, and the definition of $m_{jj,0}$ in \eqref{formulamjj}, a straightforward calculation yields
\begin{equation}\label{resDelta1}
\begin{pmatrix}
im_{22,0}&m_{11,0}
\end{pmatrix}s\Delta_1(z)
\begin{pmatrix}
m_{11,0}\\ -im_{22,0}
\end{pmatrix}
=\begin{cases}\pm\frac{is\gamma(z)^2}{16\sqrt{\zeta(z)}}F_1(\omega)+o(1),&\textrm{ as }z\to -1,\\
\mp\frac{is\gamma(z)^{-2}}{16\sqrt{\zeta(z)}}F_2(\omega)+o(1),&\textrm{ as }z\to v_1,\\
\mp\frac{is\gamma(z)^{-2}}{16\sqrt{\zeta(z)}}F_4(\omega)+o(1),&\textrm{ as }z\to 1,
\end{cases}
\end{equation}
 where the upper sign is taken if $\Im z >0$, the lower if $\Im z <0$, and
$F_j$ is given by
\begin{equation}\label{defFj}
F_j(\omega)=
\frac{\theta_3^4\left[
\theta_j(\omega+d)\theta_3(\omega-d)+\theta_j(\omega-d)\theta_3(\omega+d)\right]^2}{\theta_3(\omega)^4\theta_3(d)^2\theta_j(d)^2},\qquad j=1,2,4.
\end{equation}

To compute the residue of \eqref{resDelta1}
at $-1$, we need to analyze $\pm\frac{\gamma(z)^2}{\sqrt{\zeta(z)}}$ at $-1$. It is meromorphic, and we need to determine
the sign of its residue (the absolute value follows straightforwardly from the expansions of $\gamma(z)$ and $\zeta(z)$).  
Let $x\in U^{(-1)}$, with $x=-1+\epsilon$, $\epsilon>0$, and $x$ lying on the positive side of the cut. For such $x$, 
$\gamma(x)^2=i\left|\gamma(x)^2\right|$ by \eqref{def:gamma}, and  by the expansion \eqref{expzeta2}, and in particular the fact that $\widetilde \zeta_0$ is positive, we have that $\sqrt{\zeta(x)}$ is positive. Thus
 $\frac{\gamma(x)^2}{\sqrt{\zeta(x)}}=i\left|\frac{\gamma(x)^2}{\sqrt{\zeta(x)}}\right|$, and by \eqref{def:gamma} and \eqref{expzeta2}, 
\begin{equation}\label{resDelta2}
\pm\frac{is\gamma(z)^2}{\sqrt{\zeta(z)}}=-\frac{1}{z+1}\frac{1+v_1}{|q(-1)|}(1+\varepsilon_1(z))
\end{equation}
in a neighborhood of $-1$, where $\varepsilon_1(z)$ is an analytic function uniformly $o(1)$ as $z\to -1$.

Similar analysis in the neighborhoods $U^{(v_1)}$, $U^{(1)}$ yields
\begin{equation}\label{resDelta3}
\mp \frac{is\gamma(z)^{-2}}{\sqrt{\zeta(z)}}=\begin{cases}
\frac{1}{z-v_1}\frac{(v_2-v_1)(1+v_1)}{2 |q(v_1)|}(1+\varepsilon_2(z))&\textrm{for } z\in U^{(v_1)},\\
\frac{1}{z-1}\frac{1-v_2}{|q(1)|}(1+\varepsilon_4(z))&\textrm{for } z\in U^{(1)},\end{cases}
\end{equation}
where $\varepsilon_2(z)$,  $\varepsilon_4(z)$ are analytic function uniformly $o(1)$ as $z\to v_1,1$, respectively.

Thus by the definition of $T_2$ in \eqref{defT2}, computing the residue by \eqref{resDelta1} (note the negative orientation of the contours), we obtain
\begin{equation}\label{formulaT2}
 T_2=\frac{1+v_1}{16 (1+v_2)^2|q(-1)|}F_1(\omega)
-\frac{1+v_1}{32 (v_2-v_1)|q(v_1)|}F_2(\omega)-
\frac{1}{16 (1-v_2)|q(1)|}F_4(\omega).
\end{equation}

We now evaluate $\int_0^1F_j(\omega)d\omega$. 
It is easily seen that $F_j(\omega)$, $j=1,2,4$, are elliptic functions.
We start with $F_1$. 
Note first that since $\theta_3$ is even and $\theta_1$ is odd, we have $F_1(0)=0$. By the definition of $\theta_1$ and $\theta_3$, we have
\begin{equation}
\frac{\left(\theta_3(\omega-d)\theta_1(\omega+d)+\theta_3(\omega+d)\theta_1(\omega-d)\right)^2}{\theta_3(\omega)^4}=
-\frac{\left(\theta_1(\nu-d)\theta_3(\nu+d)+\theta_1(\nu+d)\theta_3(\nu-d)\right)^2}{\theta_1(\nu)^4},
\end{equation}
where $\nu=\omega-\frac{1+\tau}{2}$. Thus, as $\nu\to 0$, the r.h.s. of this equation becomes
\begin{equation}
-4\frac{\left(\theta_1'(d)\theta_3(d)-\theta_1(d)\theta_3'(d)\right)^2}{\left(\theta_1'\right)^4\nu^2}+\mathcal O\left(\nu^{-1}\right).
\end{equation}
Thus we can apply Lemma \ref{LemmaComp} in Appendix \ref{App1} to $F_1$, which gives
\begin{equation}
\int_0^1F_1(\omega)d\omega=-4\left(\frac{\theta_3}{\theta_1'}\right)^4\frac{\theta_3''}{\theta_3}\left(\frac{\theta_1'(d)}{\theta_1(d)}-\frac{\theta_3'(d)}{\theta_3(d)}\right)^2.
\end{equation}
Using here the identity \eqref{id2} of Lemma \ref{ThmThetaids}, and then the equation $\theta_1'=\pi\theta_2\theta_3\theta_4$, we finally obtain
\begin{equation}\label{intF1}
\int_0^1F_1(\omega)d\omega=4\left(\frac{\theta_3}{\theta_1'}\right)^4\frac{\theta_3''}{\theta_3}I_0^2(1+v_2)^2=4\left(\frac{1}{\pi\theta_2\theta_4}\right)^4\frac{\theta_3''}{\theta_3}I_0^2(1+v_2)^2.
\end{equation}

We now evaluate the integrals of $F_2$ and $F_4$.
Applying  the summation formulae \eqref{theta23} and \eqref{theta34} to the definition of  $F_2$ and $F_4$, respectively, in \eqref{defFj}, we obtain
\begin{equation}\label{F2F4}
F_2(\omega)=\frac{4\theta_3^2}{\theta_2^2}\frac{\theta_2(\omega)^2}{\theta_3(\omega)^2},\qquad
F_4(\omega)=\frac{4\theta_3^2}{\theta_4^2}\frac{\theta_4(\omega)^2}{\theta_3(\omega)^2}.
\end{equation}
 By the definitions of $\theta_j$ for $j=1,2,3,4$, we have with $\nu=\omega-\frac{1+\tau}{2}$
\begin{equation}\begin{aligned}
\frac{\theta_2(\omega)^2}{\theta_3(\omega)^2}&=\frac{\theta_4(\nu)^2}{\theta_1(\nu)^2}=
\frac{\theta_4^2}{\left(\theta_1'\right)^2}\nu^{-2}+\mathcal O\left(\nu^{-1}\right),\qquad \nu\to 0,\\
\frac{\theta_4(\omega)^2}{\theta_3(\omega)^2}&=-\frac{\theta_2(\nu)^2}{\theta_1(\nu)^2}
=-\frac{\theta_2^2}{\left(\theta_1'\right)^2}\nu^{-2}+\mathcal O\left(\nu^{-1}\right),\qquad \nu\to 0; \\
\end{aligned}\end{equation}
and applying Lemma \ref{LemmaComp},  we obtain
\begin{equation}
\int_0^1\frac{\theta_2(\omega)^2}{\theta_3(\omega)^2}d\omega=\frac{\theta_4^2}{\left(\theta_1'\right)^2}\frac{\theta_3''}{\theta_3}+\frac{\theta_2^2}{\theta_3^2},\qquad
\int_0^1\frac{\theta_4(\omega)^2}{\theta_3(\omega)^2}d\omega=-\frac{\theta_2^2}{\left(\theta_1'\right)^2}\frac{\theta_3''}{\theta_3}+\frac{\theta_4^2}{\theta_3^2}.
\end{equation}
From here, by \eqref{F2F4} and the equation $\theta_1'=\pi\theta_2\theta_3\theta_4$,
\begin{equation}\label{intF2F4}
\int_0^1F_2(\omega)d\omega=4\left(\frac{1}{\pi^2\theta_2^4}\frac{\theta_3''}{\theta_3}+1\right),\qquad
\int_0^1F_4(\omega)d\omega=4\left(-\frac{1}{\pi^2\theta_4^4}\frac{\theta_3''}{\theta_3}+1\right).
\end{equation}

Integrating \eqref{formulaT2} by \eqref{intF1}, \eqref{intF2F4}, we obtain
\begin{multline}\label{formulaT22}
 \int_0^1T_2(\omega)d\omega=\frac{(1+v_1)I_0^2}{4|q(-1)|}\left(\frac{1}{\pi\theta_2\theta_4}\right)^4\frac{\theta_3''}{\theta_3}
-\frac{1+v_1}{8 (v_2-v_1)|q(v_1)|}\left(\frac{1}{\pi^2\theta_2^4}\frac{\theta_3''}{\theta_3}+1\right)\\-
\frac{1}{4 (1-v_2)|q(1)|}\left(-\frac{1}{\pi^2\theta_4^4}\frac{\theta_3''}{\theta_3}+1\right).
\end{multline}

We now express all the $\theta$-constants here in terms of elliptic integrals. For $\theta_2^4$, $\theta_2^4$,
this was already done in \eqref{idth2}, \eqref{idth4} of Lemma \ref{ThmThetaids}.
To obtain an expression for $\frac{\theta_3''}{\theta_3}$, we first note that by the differential equation \eqref{thdiff} satisfied by $\theta$-functions, and then by \eqref{derivativetau} and \eqref{idth3},
\begin{equation}
\frac{\theta_3''}{\theta_3}=4\pi i \frac{\partial}{\partial \tau}\log \theta_3=\pi i \frac{1}{\partial \tau/\partial v_2}\frac{\partial}{\partial v_2}\log \theta_3^4=I_0^2(1-v_2)^2(v_2-v_1)\frac{\partial}{\partial v_2}\log (I_0^2(1+v_2)).
\end{equation}
We now use \eqref{Partial5}, \eqref{x1x2eqn2}, and then the expression 
$|q(v_2)|=(v_2-x_1)(x_2-v_2)=-x_1x_2-v_2(v_2-v_1)/2$,
 to obtain from here
\begin{equation}\label{dtheta33}
\frac{\theta_3''}{\theta_3}=2I_0^2\left(x_1x_2+\frac{v_2-v_1}{2}\right)=
2I_0^2\left(-|q(v_2)|+\frac{(1-v_2)(v_2-v_1)}{2}\right).
\end{equation}
Substituting this expression as well as \eqref{idth2}, \eqref{idth4} into \eqref{formulaT22},
and using the fact that by \eqref{expgamma}, \eqref{expzeta},
\[
\frac{i\gamma_0^2\zeta_0}{4}=-\frac{|q(v_2)|}{2(1+v_2)},
\]
we obtain
\begin{multline}\label{formulaT23}
 \frac{i\gamma_0^2\zeta_0}{4}\int_0^1T_2(\omega)d\omega=
 \frac{1}{8}\frac{q(v_2)^2}{(1-v_2^2)(v_2-v_1)}\left(\frac{1}{|q(-1)|}-\frac{1}{|q(v_1)|}+\frac{1}{|q(1)|}\right)\\
 +\frac{1}{16}\frac{|q(v_2)|}{(1-v_2^2)(v_2-v_1)}\left(-\frac{(1-v_2)(v_2-v_1)}{|q(-1)|}+\frac{1-v_2^2}{|q(v_1)|}+
 \frac{(1+v_2)(v_2-v_1)}{|q(1)|}\right).
\end{multline}
In the last 3 terms here, we express $|q(v_2)|$ by $|q(-1)|$, $|q(v_1)|$, $|q(1)|$, respectively, e.g. for the last term
we write (recall \eqref{x1x2eqn1})
\begin{equation}\label{qv2}
|q(v_2)|=-x_1x_2-v_2(v_2-v_1)/2=-|q(1)|+1-(v_1+v_2)/2-v_2(v_2-v_1)/2.
\end{equation}
This allows us to write \eqref{formulaT23} in the form
\begin{multline}\label{formulaT24}
 \frac{i\gamma_0^2\zeta_0}{4}\int_0^1T_2(\omega)d\omega=
 \frac{1}{8}\frac{q(v_2)^2}{(1-v_2^2)(v_2-v_1)}\left(\frac{1}{|q(-1)|}-\frac{1}{|q(v_1)|}+\frac{1}{|q(1)|}\right)\\
 +\frac{1}{16}\left(\frac{1}{1+v_2}+\frac{1}{v_2-v_1}-\frac{1}{1-v_2}
 -\frac{2-(v_2-v_1)}{2|q(-1)|}-\frac{v_2+v_1}{2|q(v_1)|}+
 \frac{2+v_2-v_1}{2|q(1)|}\right).
\end{multline}

On the other hand, by \eqref{x1x2eqn2} and \eqref{derivativex1x2},
\begin{equation}
\frac{\partial}{\partial v_2}|q(-1)|=\frac{\partial}{\partial v_2}\left(1+x_1x_2+\frac{v_1+v_2}{2}\right)=
-\frac{v_2-v_1}{4}-\frac{q(v_2)^2}{(1-v_2^2)(v_2-v_1)}+\frac{1}{2},
\end{equation}
and
\begin{equation}\label{qv11}
\begin{aligned}
\frac{\partial}{\partial v_2}|q(v_1)|&=
\frac{v_2-v_1}{4}+\frac{q(v_2)^2}{(1-v_2^2)(v_2-v_1)}+\frac{v_1}{2},\\
\frac{\partial}{\partial v_2}|q(1)|&=
-\frac{v_2-v_1}{4}-\frac{q(v_2)^2}{(1-v_2^2)(v_2-v_1)}-\frac{1}{2}.
\end{aligned}
\end{equation}
We therefore easily obtain the expression for $\frac{\partial}{\partial v_2}\log|q(-1)q(v_1)q(1)|$. Comparing it 
with \eqref{formulaT24} shows \eqref{avgT2}.

  \subsection{Evaluation of $T_3$}\label{92}
 Now consider $T_3$. Our aim in this section is to prove \eqref{SolnT3}.
We write  $\mathcal N$ in \eqref{def:calN} in the form
\begin{equation}\label{defB}\begin{aligned}
\mathcal N(z)&=A(z;s\Omega)+B(z;s\Omega),\qquad
A=\frac{1}{2}\begin{pmatrix}
A_1&iA_1\\
-iA_2&A_2
\end{pmatrix}, \qquad 
B=\frac{1}{2}\begin{pmatrix}
B_1&-iB_1\\
iB_2&B_2
\end{pmatrix},\\
A_j(z;\omega)&=\frac{\theta_3}{2\theta_3(\omega)}\left[\left(\gamma(z)+\gamma(z)^{-1}\right)\frac{\theta_3(u(z)\pm \omega+d)}{\theta_3(u(z)+d)}
+\left(\gamma(z)-\gamma(z)^{-1}\right)\frac{\theta_3(-u(z)\pm \omega+d)}{\theta_3(-u(z)+d)}\right],\\
B_j(z;\omega)&=\frac{\theta_3}{2\theta_3(\omega)}\left[\left(\gamma(z)+\gamma(z)^{-1}\right)\frac{\theta_3(u(z)\pm \omega+d)}{\theta_3(u(z)+d)}
-\left(\gamma(z)-\gamma(z)^{-1}\right)\frac{\theta_3(-u(z)\pm \omega+d)}{\theta_3(-u(z)+d)}\right],
\end{aligned}
\end{equation}
where $\pm$ means $+$ for $j=1$ and $-$ for $j=2$.
Using the jump conditions \eqref{jumpsgamma}, \eqref{jumpsu}, we observe that
 $(z-v_2)^{1/4}A(z)$ and $(z-v_2)^{-1/4}B(z)$ are analytic on $U^{(v_2)}$. 

Since $\Delta_1(z)$ in \eqref{Delta1} for $p=v_2$ is meromorphic on $U^{(v_2)}$, all 
odd powers of roots $(z-v_2)^{1/2}$ in the expansion of 
\eqref{Delta1} cancel, and it follows that for $z\in U^{(v_2)}$ and $\Im z>0$,
\begin{equation}\begin{aligned}
\Delta_1(z)=-\frac{1}{32\sqrt{\zeta(z)}}\left[
\begin{pmatrix}
A_1&iA_1\\
-iA_2&A_2
\end{pmatrix}
\begin{pmatrix}
-1&- 2i\\ -2i &1
\end{pmatrix} 
\begin{pmatrix}
A_2&-iA_1\\iA_2&A_1
\end{pmatrix}\right.\\
\left.
+
\begin{pmatrix}
B_1&-iB_1\\
iB_2&B_2
\end{pmatrix}
\begin{pmatrix}
-1&-2i\\ - 2i &1
\end{pmatrix} 
\begin{pmatrix}
B_2&iB_1\\-iB_2&B_1
\end{pmatrix}
\right].\end{aligned}
\end{equation}
Therefore
\begin{equation}\nonumber
\begin{pmatrix}
im_{22,0}&m_{11,0}
\end{pmatrix}
\Delta_1(z)
\begin{pmatrix}
m_{11,0}\\-im_{22,0}
\end{pmatrix}=\frac{i}{16\sqrt{\zeta(z)}}\left[
(m_{22,0}A_1-m_{11,0}A_2)^2+3(m_{22,0}B_1+m_{11,0}B_2)^2
\right].\end{equation}
Expanding $A_1(z)$ and $A_2(z)$ as $z\to v_2$, we obtain using \eqref{expzeta}, \eqref{expgamma}, and \eqref{idsmij},
\begin{equation}m_{22,0}A_1(z)-m_{11,0}A_2(z)=-\gamma_0^{-1}u_0 T_1(\omega)(z-v_2)^{3/4}+\mathcal O\left((z-v_2)^{5/4}\right)
\end{equation}
with $T_1(\omega)$ as defined in \eqref{def:T1}. By \eqref{T1th} 
in Proposition \ref{Propterm2}, this equals
\begin{equation}-\frac{2u_0}{\gamma_0}\frac{\theta_3'(\omega)}{\theta_3(\omega)}(z-v_2)^{3/4}+\mathcal O\left((z-v_2)^{5/4}\right).
\end{equation}
So that by the definition of $T_3$ in \eqref{defT2}, we obtain computing the residue for the first term,
\begin{equation}\label{formulaT31}
T_3(\omega)=-\frac{iu_0^2}{4\gamma_0^2\zeta_0}\left(\frac{\theta_3'(\omega)}{\theta_3(\omega)}\right)^2+\int_{\partial U^{(v_2)}}\frac{3i[m_{22,0}B_1(z)+m_{11,0}B_2(z)]^2}{16(z-v_2)^2\sqrt{\zeta(z)}}\frac{dz}{2\pi i},
\end{equation}
where the integration is in the negative direction around $v_2$, and where $\sqrt{\zeta} $ and $B_1,B_2$ are understood to be the analytic continuation from $\Im z>0$.  

We now write the average
\begin{equation}  \label{SolnT32}
\frac{i\gamma_0^2\zeta_0}{4}\int_0^1T_3(\omega)d\omega
  =\frac{u_0^2}{16}\int_0^1\left(\frac{\theta_3'(\omega)}{\theta_3(\omega)}\right)^2d\omega+\frac{i\gamma_0^2\zeta_0Q}{4},
  \end{equation}
  where
  \begin{equation}\label{def:Q}
Q=\int_0^1d\omega \int_{\partial U^{(v_2)}}\frac{3is[m_{22,0}B_1(z;\omega)+m_{11,0}B_2(z;\omega)]^2}{16(z-v_2)^2\sqrt{\zeta(z)}}\frac{dz}{2\pi i}.
\end{equation}

To compare with Lemma \ref{subleading}, we will now single out a contribution from
\begin{equation}
\delta=\frac{\partial \tau}{\partial v_2}\int_0^1 \frac{\partial }{\partial \tau}\log\theta_3(\omega)d\omega.
\end{equation}
Using the differential equation \eqref{thdiff} and the fact that
\[
0=\int_0^1\left(\frac{\theta_3'(\omega)}{\theta_3(\omega)}\right)'d\omega=\int_0^1\left[ \frac{\theta_3''(\omega)}{\theta_3(\omega)}-\left(\frac{\theta_3'(\omega)}{\theta_3(\omega)}\right)^2
\right]
d\omega,
\]
we can write
\begin{equation}
\delta=i\frac{\partial |\tau|}{\partial v_2}\int_0^1 \frac{\theta_3''(\omega) }{\theta_3(\omega)}\frac{d\omega}{4\pi i}=
\frac{\partial |\tau|}{\partial v_2}\int_0^1 \left(\frac{\theta_3'(\omega) }{\theta_3(\omega)}\right)^2\frac{d\omega}{4\pi}.
\end{equation}
Since, by \eqref{derivativetau}, $\pi u_0^2=\frac{\partial |\tau|}{\partial v_2}$, we can rewrite \eqref{SolnT32}
in the form
\begin{equation}  \label{SolnT33}
\frac{i\gamma_0^2\zeta_0}{4}\int_0^1T_3(\omega)d\omega
  =-\frac{3}{16\pi}\frac{\partial |\tau|}{\partial v_2}
  \int_0^1\left(\frac{\theta_3'(\omega)}{\theta_3(\omega)}\right)^2d\omega+\frac{i\gamma_0^2\zeta_0Q}{4}+\delta.
  \end{equation}

Now by \eqref{intlogder4},
\[
\int_0^1 \left(\frac{\theta_3'(\omega)}{\theta_3(\omega)}\right)^2 d\omega=\frac{\pi^2}{3}+\frac{\theta_1'''}{3\theta_1'}.
\]
Using the identity $\theta_1'=\pi\theta_2\theta_3\theta_4$, and the identities \eqref{idth4}--\eqref{idth3}
of Lemma \ref{ThmThetaids}, we write here
\begin{equation}
\frac{\theta_1'''}{\theta_1'}=4\pi i \frac{\partial}{\partial \tau} \log \left(\theta_1'\right)=\frac{\pi i}{
\frac{\partial \tau}{\partial v_2}}\frac{\partial}{\partial v_2}\log \left(\theta_1'\right)^4=
\frac{\pi}{
\frac{\partial |\tau|}{\partial v_2}}\frac{\partial}{\partial v_2}\log \left(\theta_2\theta_3\theta_4\right)^4=
\frac{\pi}{
\frac{\partial |\tau|}{\partial v_2}}
\frac{\partial}{\partial v_2}\log\left[I_0^6(1-v_2^2)(v_2-v_1)\right],
\end{equation}
so that we can rewrite \eqref{SolnT33} in the form
\begin{equation}  \label{SolnT34}
\frac{i\gamma_0^2\zeta_0}{4}\int_0^1T_3(\omega)d\omega
  =-\frac{1}{16}\left(
  \pi \frac{\partial |\tau|}{\partial v_2}+
  \frac{\partial}{\partial v_2}\log\left[I_0^6(1-v_2^2)(v_2-v_1)\right]\right)
  +\frac{i\gamma_0^2\zeta_0Q}{4}+\delta.
  \end{equation}

It remains to evaluate $Q$ defined in \eqref{def:Q}. To simplify the computations, 
we first do the averaging over $\omega$ and only then compute the residue in this case.

As above for $A(z)$, we expand $B(z)$ to obtain
\begin{equation}
\gamma(z)[m_{22,0}B_1(z;\omega)+m_{11,0}B_2(z;\omega)]- 2 =\mathcal O(z-v_2),\qquad z\to v_2,
\end{equation}
and therefore
\begin{equation}
\gamma(z)^2[m_{22,0}B_1(z;\omega)+m_{11,0}B_2(z;\omega)]^2=-4+4\gamma(z)[m_{22,0}B_1(z;\omega)+m_{11,0}B_2(z;\omega)]+\mathcal O\left((z-v_2)^2\right),
\end{equation}
as $z\to v_2$. Thus, upon changing the order of integration,
\begin{equation}\label{ChangeT5}
Q=\int_{\partial U^{(v_2)}}\frac{dz}{2\pi i} \frac{3i s}{4\gamma^{2}(z)(z-v_2)^2\sqrt{\zeta(z)}}\left[-1+\gamma(z)\int_0^1d\omega\left[m_{22,0}B_1(z;\omega)+m_{11,0}B_2(z;\omega)\right]\right].
\end{equation}
By the definition of $B_1$ and $B_2$ in \eqref{defB} and the formula for $m_{11,0}$ and $m_{22,0}$ in \eqref{formulamjj}, we have
\begin{equation}\label{formulaT37}
\int_0^1d\omega\left[m_{22,0}B_1(z;\omega)+m_{11,0}B_2(z;\omega)\right]=\int_0^1
\left(\widetilde q(\omega)+\widetilde q(-\omega)\right)d\omega,
\end{equation}
where
\begin{multline}\label{formulaT35}
\widetilde q(\omega)=\frac{\theta_3^2\theta(-\omega+d)}{2\theta(d)\theta(\omega)^2}\Bigg(\left(\gamma(z)+\gamma(z)^{-1}\right)\frac{\theta(u(z)+\omega+d)}{\theta(u(z)+d)}\\-\left(\gamma(z)-\gamma(z)^{-1}\right)\frac{\theta(-u(z)+\omega+d)}{\theta(-u(z)+d)}\Bigg).\end{multline}
Since $\widetilde q(-\omega)=\widetilde q(1-\omega)$, we have that
\begin{equation}\label{formulaT36}
\int_0^1\widetilde q(-\omega)d\omega=\int_0^1\widetilde q(\omega)d\omega,
\end{equation}

Applying \eqref{thint2} to evaluate $\int_0^1\widetilde q(\omega)d\omega$, we obtain:
\begin{multline}\label{formulaT38}
\gamma(z)\int_0^1d\omega\left[m_{22,0}B_1(z;\omega)+m_{11,0}B_2(z;\omega)\right]=
\frac{\pi \theta_3^2 g(d)}{\left(\theta_1'\right)^2\sin(\pi u)}\\
\times\left\{ \left(\gamma(z)^2+1\right) 
g(d+u)[f(d)-f(d+u)]
+\left(\gamma(z)^2-1\right)g(d-u)[f(d)-f(d-u)]
\right\},
\end{multline}
where
\begin{equation}\label{formulaT39}
g(x)=\frac{\theta_1(x)}{\theta_3(x)},\qquad
f(x)=\frac{\theta_1'(x)}{\theta_1(x)}.
\end{equation}
Note that \eqref{ellfuns5} gives for the derivative of $f(z)$
\begin{equation}
f'(x)=-\left(\frac{\theta_1'}{\theta_3}\right)^2\frac{1}{g(x)^2}+\frac{\theta_3''}{\theta_3}.
\end{equation} 
Using this, we have, in particular, as $z\to v_2$, i.e. $u\to 0$,
\begin{multline}\label{formulaT39a}
 f(d)-f(d\pm u)=\left(\frac{\theta_1'}{\theta_3}\right)^2\Bigg[\pm u\left(\frac{1}{g(d)^2}-\frac{\theta_3''}{\theta_3}\left(\frac{\theta_3}{\theta_1'}\right)^2
 \right)-u^2\frac{g'(d)}{g(d)^3}\\ \pm \frac{u^3}{3}\left(-\frac{g''(d)}{g(d)^3}+3\frac{g'(d)^2}{g(d)^4}\right)+\frac{u^4}{12}\left(
-\frac{g'''(d)}{g(d)^3}+\frac{9g''(d)g'(d)}{g(d)^4}-\frac{12g'(d)^3}{g(d)^5}\right)\Bigg]+\mathcal O \left(u^5\right).
\end{multline}
Expanding also the other terms in \eqref{formulaT38}, and also expanding $u$ by \eqref{expgamma},
we  obtain that, as $z\to v_2$,
\begin{multline}\label{T39c}
\gamma(z)\int_0^1d\omega\left[m_{22,0}B_1(z;\omega)+m_{11,0}B_2(z;\omega)\right]=
g(d)\left(1+\frac{\pi^2}{6}u_0^2(z-v_2)+ \mathcal O((z-v_2)^2)\right)\\ \times\left[H_0+u_0\gamma_0^2(1+(z-v_2)(u_1+2\gamma_1))H_1+(z-v_2)(u_0^2H_2-u_0^3\gamma_0^2H_3)+\mathcal O((z-v_2)^{3/2})\right],
\end{multline}
 where
\begin{equation}\begin{aligned}
H_0&=2g(d)\left(\frac{1}{g(d)^2}-\frac{\theta_3''}{\theta_3}\left(\frac{\theta_3}{\theta_1'}\right)^2\right),&
H_1&=2g'(d)\frac{\theta_3''}{\theta_3}\left(\frac{\theta_3}{\theta_1'}\right)^2,\\
H_2&=g''(d)\left(\frac{1}{3g(d)^2}-\frac{\theta_3''}{\theta_3}\left(\frac{\theta_3}{\theta_1'}\right)^2\right),&
H_3&=\frac{g'''(d)}{6}\left(\frac{1}{g(d)^2}-2\frac{\theta_3''}{\theta_3}\left(\frac{\theta_3}{\theta_1'}\right)^2\right)-\frac{1}{6}\frac{g''(d)g'(d)}{g(d)^3}.
\end{aligned}\end{equation}
By applying 
\eqref{id20} and \eqref{id3} 
of Proposition \ref{ThmThetaids}, we simplify the combinations of the $H_j$ as follows:
\begin{equation} 
H_0+u_0\gamma_0^2H_1=\frac{2}{g(d)},\qquad u_0^2H_2-u_0^3\gamma_0^2H_3=\frac{2\gamma_1+u_1}{g(d)}\left(1-2g(d)^2\frac{\theta_3''}{\theta_3}\left(\frac{\theta_3}{\theta_1'}\right)^2\right),
\end{equation}
which allows us to write \eqref{T39c} in the form
\begin{equation}
\gamma(z)\int_0^1d\omega\left[m_{22,0}B_1(z;\omega)+m_{11,0}B_2(z;\omega)\right]=
2+\left(\frac{\pi^2}{3}u_0^2+(2\gamma_1+u_1)\right)(z-v_2)+\mathcal O((z-v_2)^{3/2}).
\end{equation}
Substituting this into \eqref{ChangeT5} and calculating the residue, we obtain
\begin{equation} \label{FormulaQ}
Q=\frac{3i}{4\gamma_0^2\zeta_0}\left(\zeta_1-u_1-\frac{\pi^2}{3}u_0^2\right)=
\frac{3i}{4\gamma_0^2\zeta_0}\left(\zeta_1-u_1-\frac{\pi}{3}\frac{\partial|\tau|}{\partial v_2}\right).
\end{equation}

For the coefficients $\zeta_1$, $u_1$ in expansions  \eqref{expzeta}  and \eqref{expgamma}
we easily obtain:
\begin{equation}\begin{aligned}
\zeta_1&=\frac{1}{3}\frac{d}{dx}\log q(x)|_{x=v_2}-
\frac{1}{6}\frac{\partial}{\partial v_2}\log (v_2^2-1)(v_2-v_1),\\
u_1&=-\frac{1}{6}\frac{\partial}{\partial v_2}\log (v_2^2-1)(v_2-v_1),
\end{aligned} \end{equation}
so that
\[
\zeta_1-u_1=\frac{1}{3}\frac{d}{dx}\log q(x)|_{x=v_2}=\frac{2v_2-(v_1+v_2)/2}{-3|q(v_2)|}.
\]
On the other hand, by \eqref{qv2} and \eqref{qv11},
\begin{equation}
\frac{\partial}{\partial v_2}|q(v_2)|=-\frac{3}{4}v_2+\frac{v_1}{4}+\frac{q(v_2)^2}{(1-v_2^2)(v_2-v_1)},
\end{equation}
and by \eqref{Partial5}, \eqref{x1x2eqn2},
\begin{equation}
\frac{\partial}{\partial v_2}\log I_0=
-\frac{|q(v_2)|}{(1-v_2^2)(v_2-v_1)}.
\end{equation}
These equations imply
\begin{equation}
\zeta_1-u_1=\frac{2}{3}\frac{\partial}{\partial v_2}\log (|q(v_2)| I_0).
\end{equation}
Substituting this into \eqref{FormulaQ} for $Q$, and that, in turn, into \eqref{SolnT34},
we obtain \eqref{SolnT3}.

\section{Slow merging of gaps. Proof of Theorem \ref{Thmtau0}}\label{sec-tau0}

\subsection{Solution of the $\Phi$-RH problem as $v_2-v_1\to 0$.}

We consider the asymptotics of the $\Phi$-RH problem 
%
%
%
%
%
in the double-scaling regime where $\nu=\frac{v_2-v_1}{2}$ can approach zero with $s\to\infty$ at a rate such that 
$2\nu>\frac{1}{s^{2-\varepsilon}}$, for any fixed $\varepsilon>0$.

Let
\begin{equation}\label{def:alphabeta}
-\alpha=1+\frac{v_2+v_1}{2}>0, \qquad \beta=1-\frac{v_2+v_1}{2}>0,\qquad \gamma=
\frac{\beta^{-1}+|\alpha|^{-1}}{8}.
\end{equation}
We need to evaluate the integrals $I_j$ in the limit $\nu\to 0$. To do this (and to make a comparison with \cite{FKduke}
easier), we first change integration variable $x=t+\frac{v_1+v_2}{2}$, which maps $(v_2,1)$ to $(\nu,\beta)$; we then
split this interval into $(\nu,\sqrt{\nu})\cup [\sqrt{\nu},\beta)$ and use a change of variable $y=t/\sqrt{\nu}$ for integration over the first one. We then obtain:\footnote{Cf. equations (278)--(280) in \cite{FKduke}.}
\begin{align}\label{I2tau0}
I_2-\frac{v_2+v_1}{2}I_1&=\sqrt{|\alpha\beta|}+\mathcal O\left(\nu^2\log \nu^{-1}\right),\\
\label{I0tau0}
I_0&=\frac{\log \left(\gamma\nu\right)^{-1}}{\sqrt{|\alpha \beta|}}+\mathcal O\left(\nu^2\log \nu^{-1}\right).
\end{align}
Hence, by \eqref{x1x2eqn2},
\begin{equation}\label{tau0x1x2}
x_1x_2=\left(-I_2+\frac{v_1+v_2}{2}I_1\right)\frac{1}{I_0}=
-\frac{|\alpha\beta|}{\log \left(\gamma\nu\right)^{-1}}+\mathcal O\left(\nu^2\right).
\end{equation}

Let the neighborhoods $U^{(v_1)}$,  $U^{(v_2)}$ have radius $\nu/3$; they will be therefore contracting as 
$\nu\to 0$.
 We now evaluate the jumps $J_S(z)$ of $S$ on the edges of the lenses $\Gamma_{\Phi,L}\cup \Gamma_{\Phi,U}$. Recall from \eqref{smallJhat} that these jumps were exponentially close to the identity, in the case where $v_1$ and $v_2$ were fixed. For $z\in\Gamma_{\Phi,L}\cup \Gamma_{\Phi,U}$ and $z$ bounded away from the points $v_1,v_2$, it is clear that the jumps are still exponentially close to the identity so that \eqref{smallJhat} holds, and we consider the case where $z\to \frac{v_1+v_2}{2}$ along the edges of the lenses. We substitute \eqref{tau0x1x2} into the definition of $\phi$ in \eqref{def:phi} and obtain (taking $u=\frac{z-v_2}{v_2-v_1}$)
 \begin{equation}
 \phi(z)=\frac{\pm i\sqrt{|\alpha\beta|}}{\log \left(\gamma\nu\right)^{-1}}
 \int_{0}^{\frac{z-v_2}{2\nu}}\frac{du}{\sqrt{u(u+1)}}\left(1+\mathcal O\left(z-\frac{v_1+v_2}{2}\right)\right),
 \end{equation}
 as $\nu \to 0$,  and $z\to \frac{v_1+v_2}{2}$. Here $'+'$ sign is taken on $\Gamma_{\Phi,U}$, and 
 $'-'$ sign is taken on $\Gamma_{\Phi,L}$, and thus
 $\Im \phi(z)<0$, $\Im \phi(z)>0$ on $\Gamma_{\Phi,L}$ and $\Gamma_{\Phi,U}$ respectively. 
Worsening somewhat the error term, we have that 
\begin{equation}\label{JS0}
J_S(z)=I+\mathcal O\left(e^{-c\sqrt{s}(|z|+1)}\right),\qquad c>0,
\end{equation} 
 as $s\to \infty$, uniformly for $2\nu>s^{-2+\varepsilon}$ and $z\in\Gamma_{\Phi,L}\cup \Gamma_{\Phi,U}$.

 Next we consider the jumps of $R$ on the boundary $\partial U^{(p)}$ for $p\in \mathcal T=\{-1,v_1,v_2,1\}$. 
 Estimating $\phi(z)$ as above but now in the definition of $\zeta$ in \eqref{def:zeta}, we obtain that as $s\to \infty$, uniformly for $2\nu>s^{-2+\varepsilon}$,
\begin{equation}\label{tau0zeta}
\frac{1}{\zeta(z)^{1/2}}= \begin{cases}
\mathcal O\left(\frac{\log \nu^{-1}}{s}\right), &\textrm{uniformly on $\partial U^{(v_1)}$ and $\partial U^{(v_2)}$,}\\
\mathcal O\left(\frac{1}{s}\right), &\textrm{uniformly on $\partial U^{(1)}$ and $\partial U^{(-1)}$.}
\end{cases}
\end{equation}
To estimate $\Delta(z)$, we need to consider $\mathcal N$. We first observe that by the definition \eqref{def:gamma}, $\gamma(z), \gamma(z)^{-1}=\mathcal O(1)$ uniformly on $\partial U^{(p)}$ for $p\in \mathcal T$ as $\nu\to0$ . 
Using \eqref{I0tau0} and a simpler expansion for $J_0$, we obtain
\begin{equation}
\tau=i\frac{J_0}{I_0}=\frac{\pi i}{\log \left(\gamma\nu\right)^{-1}}(1+\mathcal O\left(\nu^2\right)),
\qquad \nu \to 0, \label{tau0tau}
\end{equation}
and define
\begin{equation}\label{asymkappa}
\kappa=e^{-\pi i/\tau}=\left[\gamma\nu\right]^{1+\mathcal O(\nu^2)}.\end{equation}
By the inversion formula \eqref{tau1overtau} for $\theta$-functions,
\begin{equation}\label{tau0asymtheta}
\theta(z)=\frac{1}{\sqrt{-i\tau}}\sum_k e^{-\frac{i\pi}{\tau}(k-z)^2}=
\frac{\kappa^{\langle z\rangle^2}}{\sqrt{-i\tau}}\left(1+\kappa^{1-2\langle z\rangle}+\kappa^{1+2\langle z\rangle}\right)+\mathcal O\left(\frac{\kappa^{9/4}}{\sqrt{|\tau|}}\right),
\end{equation}
where
\begin{equation}
z=j+\langle z \rangle , \qquad-1/2<\langle \Re z \rangle \leq 1/2, \quad j\in \mathbb Z.
\end{equation}

We now show, in \eqref{LargeDelta} below,  that $\Delta(z)$, which enters the jump matrix for $R$, 
may be too large for certain parameter sets, which makes it necessary to modify the solution of the RH problem.
First, a simple analysis of \eqref{def:d} shows that $d\rightarrow -1/2$ as $\nu\to 0$.
On the boundary of $U^{(v_1)}$, $U^{(v_2)}$, we have  $|u(z)|\rightarrow 0$, uniformly in $z$.
Therefore, using the boundedness of $\gamma,\gamma^{-1}$ on $\partial U^{(p)}$ for $p\in \mathcal T$, and
applying \eqref{tau0asymtheta}, we have for the $11$ element of $\mathcal N$ on $\partial U^{(v_1)}$
if $\langle\omega\rangle>0$ (and thus $u(z)+d+\langle\omega\rangle=\langle u(z)+d+
\langle\omega\rangle\rangle$ for $\nu$ sufficiently small):
\begin{equation}\label{N0N}
|\mathcal N_{11}|\le C\left|\frac{\theta(0)}{\theta (\omega)}
\frac{\theta(u(z)+d+\omega)}{\theta (u(z)+d)}\right|\le
C_1\frac{1}{\kappa^{\langle \omega\rangle^2}}\frac{\kappa^{\langle \omega\rangle^2-\langle \omega\rangle+1/4}}
{\kappa^{1/4}}=C_1\kappa^{-\langle \omega\rangle}\le C_2\nu^{-\langle \omega\rangle},\qquad \omega=s\Omega,
\end{equation}
for some constants $C,C_1,C_2>0$.
Similarly, we analyze the behavior of $\mathcal N_{11}$ for $\langle\omega\rangle<0$, the behavior of other matrix elements of $\mathcal N$ on $\partial U^{(v_1)}$, as well as the behavior of $\mathcal N$ on
$\partial U^{(v_2)}$ and $\partial U^{(\pm 1)}$. We find that the estimate \eqref{N0N} is the worst
(note that, in fact, the estimates for $\mathcal N$ on $\partial U^{(\pm 1)}$ are much better), and thus
recalling \eqref{tau0zeta}, we have 
\begin{equation}\label{LargeDelta}
\begin{aligned}
\Delta(z)&=\mathcal N(z)\mathcal O\left(\frac{\log \nu^{-1}}{s}\right)\mathcal N(z)^{-1}\\
&=\mathcal O\left(\frac{\log \nu^{-1}}{s}\nu^{-2|\langle s\Omega\rangle|}\right),
\end{aligned}
\end{equation}
as $s\to \infty$ and $\nu \to 0$, for $z\in\partial U^{(v_p)}$. Thus if, for example, $\nu = \frac{1}{s}$ and $|\langle s\Omega\rangle |=1/2$ (which is a case we need to deal with since the splitting of the gap regime described in \cite{FKduke} breaks down in this limit), we cannot say that $\Delta$ is small, and 
so the corresponding jump of $R$ is not guaranteed to be close to the identity, and so we cannot claim solvability of the $R$-RH problem. However, it was shown in \cite{FKduke} for the case of the RH problem of \cite{DIZ}
that we can modify the solution to ensure solvability for the range
$2\nu>s^{-5/4}$. We now provide more details of that construction in the present case, and apply it for all values of 
$\langle s\Omega\rangle $. 

Let 
\begin{equation}
t=\langle s\Omega\rangle +k/2, \end{equation}
where $k=\pm 1$ is chosen such that $-1/2< t\le 1/2$. Consider the following function:
\begin{equation}\label{def:tildeN}
\begin{aligned}
\widetilde {\mathcal N}(z)&=\begin{pmatrix}
\frac{ \delta+ \delta^{-1}}{2} \widetilde m_{11}&
\frac{ \delta- \delta^{-1}}{2i}\widetilde m_{12}\\
- \frac{\delta-\delta^{-1}}{2i}\widetilde m_{21}&
\frac{\delta+\delta^{-1}}{2} \widetilde m_{22}
\end{pmatrix},\\
 \widetilde m(z)&=
\begin{pmatrix}\frac{\theta(u(z_-)+d')}{\theta(u(z_-)+t+d')}&0\\ 0& \frac{\theta(u(z_-)+d')}{\theta(u(z_-)-t+d')}\end{pmatrix}
\begin{pmatrix}
\frac{\theta(u(z)+t+d')}{\theta (u(z)+d')}&\frac{\theta(u(z)-t-d')}{\theta (u(z)-d')}\\
\frac{\theta(u(z)+t-d')}{\theta (u(z)-d')}&\frac{\theta(u(z)-t+d')}{\theta (u(z)+d')}
\end{pmatrix},
\end{aligned}
\end{equation}
where the constant $d'$ will be fixed later on, and
we now take 
\begin{equation}\label{def:delta}
\delta(z)=\nu^{-1/4}\left(\frac{(z-v_1)(z-v_2)}{z^2-1}\right)^{1/4},
\end{equation}
with branch cuts on $(-1,v_1)\cup (v_2,1)$, and
positive as $z\to \infty$ on the first sheet of the Riemann surface $\Sigma$.
We have
\[
\delta(z)_+=\begin{cases} i\delta(z)_-&\mathrm{on}\quad  (-1,v_1)\\
-i\delta(z)_-&\mathrm{on}\quad  (v_2,1)\end{cases}
\]
It is easy to verify that $\widetilde {\mathcal N}(z)$ satisfies the same jump conditions as $\mathcal N$:
\begin{equation}
\begin{aligned}
\widetilde{\mathcal N}_+(z)&=\widetilde{\mathcal N}_-(z)\begin{pmatrix}
0&-1\\1&0
\end{pmatrix}&\textrm{for }z\in (v_2,1),\\
\widetilde{\mathcal N}_+(z)&=\widetilde{\mathcal N}_-(z)\begin{pmatrix}
0& e^{-2\pi i(s\Omega+k/2)}\\-e^{2\pi i(s\Omega+k/2)}&0
\end{pmatrix}=
\widetilde{\mathcal N}_-(z)
\begin{pmatrix}
0&-e^{-2\pi is\Omega}\\e^{2\pi is\Omega}&0
\end{pmatrix}
&\textrm{for }z\in (-1,v_1).
\end{aligned}
\end{equation}

Furthermore, one verifies that
$\delta(z)-\delta(z)^{-1}$ has two zeros at $z_+$, $z_-$ located on the first sheet and such that
$\delta(z_+)=\delta(z_-)=1$ and
\begin{equation}
z_{\pm}=\frac{v_1+v_2}{2}\pm i\sqrt{\nu |\alpha \beta|}+\mathcal O(\nu),\qquad  \nu \to 0.
\end{equation}

Set 
\[
d'=u(z_+)+1/2+\tau/2,
\]
then it follows by the properties of the Abel map $u(z)$ \eqref{def:u}
that $\theta(u(z)-d')$ has a single zero at $z_+$, and $\theta(u(z)+d')$ has no zeros on 
the first sheet
$\mathbb C\setminus A$. 
Thus $\widetilde {\mathcal N}(z_-)=I$, and since $\det \widetilde {\mathcal N}$ extends to an entire function, $\det \widetilde {\mathcal N}(z)=1$ for $z\in \mathbb C$.
Considering the zeros and poles of the meromorphic function $\delta^{-2}-1$ on $\Sigma$,
and using the Abel theorem, we have
\begin{equation}
u(v_1)+u(v_2)-u(z_-)-u(z_+)\equiv 0,
\end{equation}
modulo the lattice.
Since $u(v_1)+u(v_2)\equiv u(v_1)\equiv\frac{\tau}{2}$, 
\begin{align}u(z_+)+u(z_-)&\equiv -\tau/2, \label{u(z+)+u(z-)} \\ u(z_-)+d'&\equiv 1/2.\label{u(z-)+d'}\end{align} 

Using the change of integration variable $x=t+\frac{v_1+v_2}{2}$ as above, we obtain (from now on always on the first sheet, so modulo $\mathbb{Z}$)
\begin{equation}\begin{aligned}
u(z_+)=-\frac{i}{2I_0}\int_{v_2}^{z_+}\frac{dx}{p(x)^{1/2}}&=
-\frac{1}{2I_0\sqrt{|\alpha\beta|}}\int_{\nu}^{i\sqrt{|\alpha\beta|\nu}}\frac{dt}{(t^2-\nu^2)^{1/2}}\left(1+\mathcal O(\sqrt{\nu})\right)\\
&=-\frac{1+\tau}{4}-\frac{\hat \epsilon}{2}+\mathcal O(\sqrt \nu),\end{aligned}
\end{equation}
as $\nu \to 0$, where $\hat \epsilon$ is real,  satisfying $\hat \epsilon \to 0$ as $\nu \to 0$.
Similarly,
\begin{equation}
u(z_-)=\frac{1-\tau}{4}+\frac{\hat \epsilon}{2}+\mathcal O(\sqrt \nu).
\end{equation}
Therefore by the definition of $d'$, 
\begin{equation}d'=\frac{1+\tau}{4}-\frac{\hat \epsilon}{2}+\mathcal O(\sqrt \nu), \label{tau0asymd'}
\end{equation}
and 
\begin{equation}\nonumber
u(z_-)+d'=1/2.
\end{equation}
Thus, since $\theta$ is an even function,
\begin{equation}
 \widetilde m(z)=\frac{\theta(1/2)}{\theta(t+1/2)}\begin{pmatrix}
\frac{\theta(u(z)+t+d')}{\theta (u(z)+d')}&\frac{\theta(-u(z)+t+d')}{\theta (-u(z)+d')}\\
\frac{\theta(u(z)+t-d')}{\theta (u(z)-d')}&\frac{\theta(-u(z)+t-d')}{\theta (-u(z)-d')}
\end{pmatrix}.
\end{equation}
By \eqref{asymkappa}, $\nu \to 0$ corresponds to  $\kappa \to 0$.  By \eqref{tau0asymtheta},
\begin{equation}\frac{\theta(1/2)}{\theta(1/2+t)}=\frac{2\kappa^{|t|-|t|^2}}{1+\kappa^{2|t|}}\left(1+\mathcal O(\kappa)\right)=
\mathcal O\left(\nu^{|t|-|t|^2}\right),\qquad \nu \to 0. \end{equation}
As $\nu \to 0$, we have $u(z)\to 0$ uniformly for $z$ in the closure of $ U^{(v_1)}\cup U^{(v_2)}$, and
by \eqref{tau0asymd'},
\begin{equation}d'\pm u(z)= \langle d' \pm u(z)\rangle \to 1/4.
\end{equation}

Consider first the case $0<t\le 1/4$. 
Pick $0<\epsilon<\varepsilon/8$.
Then, uniformly on the closure of $U^{(v_1)}\cup U^{(v_2)}$,
\[
\frac{\theta(1/2)\theta(d'\pm u(z)+t)}{\theta(1/2+t)\theta(d'\pm u(z))}=
\mathcal O\left(\kappa^{t-t^2}\kappa^{t^2+2t(\pm u(z)+d')}\right)=
\mathcal O\left(\nu^{3t/2-\epsilon}\right),
\]
which is the asymptotics of $\widetilde m(z)_{11}$, $\widetilde m(z)_{12}$. Moreover,
\[
\frac{\theta(1/2)\theta(-d'\pm u(z)+t)}{\theta(1/2+t)\theta(-d'\pm u(z))}=
\mathcal O\left(\kappa^{t-t^2}\kappa^{t^2+2t(\pm u(z)-d')}\right)=
\mathcal O\left(\nu^{t/2-\epsilon}\right),
\]
which is the asymptotics of $\widetilde m(z)_{21}$, $\widetilde m(z)_{22}$.

For $1/4<t\le 1/2$, we have $\langle\pm u(z)+t+d'\rangle=\pm u(z)+t+d'-1$ so that
\[
\frac{\theta(1/2)\theta(d'\pm u(z)+t)}{\theta(1/2+t)\theta(d'\pm u(z))}=
\mathcal O\left(\kappa^{t-t^2}\kappa^{t^2+2t(\pm u(z)+d'-1)+(3/4)^2-(1/4)^2-\epsilon}\right)=
\mathcal O\left(\nu^{(1-t)/2-\epsilon}\right),
\]
which is the asymptotics of $\widetilde m(z)_{11}$, $\widetilde m(z)_{12}$, and finally
\[
\frac{\theta(1/2)\theta(d'\pm u(z)+t)}{\theta(1/2+t)\theta(d'\pm u(z))}=
\mathcal O\left(\nu^{t/2-\epsilon}\right),
\]
which is the asymptotics of $\widetilde m(z)_{21}$, $\widetilde m(z)_{22}$.

Similarly, we analyze the case of $-1/2<t\le 0$. Collecting the results together, we obtain
\begin{equation}\label{mest}
\widetilde m(z)=\mathcal O\left(\nu^{|t|/2-\epsilon}\right)+\mathcal O\left(\nu^{(1-|t|)/2-\epsilon}\right)=
\mathcal O\left(\nu^{-\epsilon}\right),\qquad  \nu\to 0,
\end{equation}
uniformly on the closure of $U^{(v_1)}\cup U^{(v_2)}$. By similar arguments, we obtain the same estimate also 
on the closure of $U^{(1)}\cup U^{(-1)}$ (in this case, $|u(z)+1/2|\le\epsilon'$, $\epsilon'>0$.)

On the other hand, the definition of $\delta$ gives
\begin{equation}\label{tau0delta1}
\delta(z)+\delta(z)^{-1}, \,\, \delta(z)-\delta(z)^{-1}=\mathcal O\left(\nu^{-1/4}\right), 
\end{equation} 
uniformly
for $z\in \partial U^{(p)}$  as $\nu\to 0$, for $p\in \mathcal T=\{-1,v_1,v_2,1\}$.

Thus, 
\begin{equation}\label{tau0orderN}
\widetilde {\mathcal N}(z), \widetilde {\mathcal N}(z)^{-1}=\mathcal O\left(\frac{1}{\nu^{1/4+\epsilon}}\right),
\end{equation}
as $\nu\to 0$, uniformly on $\partial U^{(p)}$ for $p\in \mathcal T$.

Since the solution to the RH problem for $\mathcal N$ is unique, we have
\begin{equation}\label{NtildeN}
 \mathcal N(z)=\widetilde{\mathcal N}(\infty)^{-1}\widetilde {\mathcal N}(z).\end{equation}

Define the new local parametrices by
\begin{equation}\label{NtildeP}
\widetilde P(z)=\widetilde{\mathcal N}(\infty)P(z),
\end{equation}
and let 
\begin{equation}
\widetilde R(z)=\begin{cases}\widetilde{\mathcal N}(\infty)S(z)\widetilde {\mathcal N}(z)^{-1} &z\in \mathbb C \setminus \cup_{p\in \mathcal T}U^{(p)},\\
\widetilde{\mathcal N}(\infty)S(z)\widetilde P(z)^{-1} &z\in \cup_{p\in \mathcal T}U^{(p)},\end{cases}
\end{equation}
Then $\widetilde R(z)\to 1$ as $z\to\infty$; and $\widetilde R(z)$ has jumps on $\Sigma_R$, see Figure \ref{ContR}.
By \eqref{PE} and the expansion of $\zeta$ in \eqref{tau0zeta}, 
the jumps of $\widetilde R(z)$ on $\partial U^{(p)}$ have the form
\begin{equation}\label{tau0PN-1}
\widetilde P(z)\widetilde{\mathcal N}^{-1}(z)=I+\widetilde \Delta(z), \qquad \widetilde \Delta(z)=\mathcal O\left(\frac{\log \nu^{-1}}{s\nu^{1/2+2\epsilon}}\right),\end{equation}
uniformly for $z\in U^{(p)}$ as $s\to \infty$ for $2\nu>s^{-2+\varepsilon}$.

For the proof of Lemma \ref{Lemtau0} below, we will also require the finer estimate 
\begin{equation}\label{tau0Delta}
\widetilde \Delta(z)=\widetilde \Delta_1(z)+\widetilde{\mathcal N}(z)\mathcal O\left(\frac{\left(\log \nu^{-1}\right)^2}{s^2}\right)\widetilde {\mathcal N}(z)^{-1},
\end{equation}
where 
\begin{equation}\label{tau0Delta1}
\begin{aligned}
\widetilde \Delta_1(z)
&=\frac{\mp 1}{8\sqrt{\zeta(z)}}\widetilde{\mathcal N}(z)e^{is\phi(p)\sigma_3}\begin{pmatrix}
-1&- 2i\\ - 2i &1
\end{pmatrix}e^{-is\phi(p)\sigma_3}\widetilde{\mathcal N}^{-1}(z),\qquad p=-1,v_2,\\
\widetilde \Delta_1(z)
&=\frac{\mp 1}{8\sqrt{\zeta(z)}}\widetilde{\mathcal N}(z)e^{is\phi(p)\sigma_3}\begin{pmatrix}
-1& 2i\\  2i &1
\end{pmatrix}e^{-is\phi(p)\sigma_3}\widetilde{\mathcal N}^{-1}(z),\qquad p=v_1,1,
\end{aligned}
\end{equation}
where $\mp$ means $+$ for $ \Im z<0$ and $-$ for $\Im z>0$.

By \eqref{JS0}, 
the jumps of $\widetilde R(z)$ on the rest of the contour  are estimated as follows (we descrease
$c>0$ somewhat)
\begin{equation}\label{tau0PN-inf}
\widetilde{\mathcal N}(z)J_S(z)\widetilde{\mathcal N}(z)^{-1}=I+
\mathcal O\left(e^{-c\sqrt{s}(|z|+1)}\right),\qquad c>0,
\end{equation} 
 as $s\to \infty$, uniformly for $2\nu>s^{-2+\varepsilon}$ and for $z\in\Gamma_{R,L}\cup \Gamma_{R,U}$.
Thus $\widetilde R$ satisfies a small-norm problem and therefore has a solution
for $s$ sufficiently large and $2\nu>s^{-2+\varepsilon}$, and 
\begin{equation} \label{smallRtilde}
\widetilde R(z)=I+\mathcal O\left(\frac{\log \nu^{-1}}{s\nu^{1/2+2\epsilon}}\right),\end{equation}
as $s\to \infty$, uniformly for $2\nu>s^{-2+\varepsilon}$, and uniformly for $z\in \mathbb C \setminus \Sigma_R$.

Since the RH problem for $\widetilde R$ has a unique solution, the RH problem for $S$ 
(and hence for $\Phi$) has a unique solution obtained by tracing back the transformations.

\subsection{Integration of the differential identity}

We now prove

\begin{Lemma} \label{Lemtau0}
Let $-1<V_1<\widehat V_2<1$ be fixed, and $V_1<V_2<\widehat V_2$ be such that $|V_2-V_1|>s^{-5/4}$.
Then, uniformly for such $V_2$ as $s\to \infty$,
\begin{equation}
\log \det (I-K_s)_ A-\log \det(I-K_s)_{(-1,V_1)\cup (\widehat V_2,1)}=\int_{\widehat V_2}^{ V_2}D(V_1,v_2)dv_2+\mathcal O(s^{-1/9}),
\end{equation}
where $D$ is defined in \eqref{DD} of Proposition \ref{PropD}.
\end{Lemma}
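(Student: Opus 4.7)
The proof will follow the template of Proposition \ref{PropD}(c), adapted to the modified parametrix built from $\widetilde{\mathcal N}$ and $\widetilde R$ in the previous subsection. By the differential identity of Lemma \ref{diffFred},
\[
\log\det(I-K_s)_A - \log\det(I-K_s)_{(-1,V_1)\cup(\widehat V_2,1)} = \int_{\widehat V_2}^{V_2} \mathcal F_s(V_1, v_2)\, dv_2,
\]
so the task reduces to bounding $\int_{\widehat V_2}^{V_2}(\mathcal F_s - D)(V_1,v_2)\,dv_2$ by $\mathcal O(s^{-1/9})$.

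The first step is to redo Section \ref{secPrelim} with $\mathcal N \to \widetilde{\mathcal N}$ and $R \to \widetilde R$, noting that since $\mathcal N = \widetilde{\mathcal N}(\infty)^{-1}\widetilde{\mathcal N}$, the local algebraic expansion at $v_2$ producing \eqref{exact} carries over unchanged (the constant factor $\widetilde{\mathcal N}(\infty)$ drops out of the conjugation in the last term). Inserting the finer decomposition \eqref{tau0Delta}--\eqref{tau0Delta1} then yields
\[
\mathcal F_s(V_1,v_2) = \tfrac{s^2\zeta_0^2}{4} - \tfrac{s\zeta_0}{4}m_{11,0}m_{22,0}\left(\gamma_0^2\Gamma_2+\Gamma_1\right) + \tfrac{i\zeta_0\gamma_0^2}{4}W(s\Omega) + E(v_2),
\]
where $E(v_2)$ collects the contribution of $\widetilde R - I$ and of the subleading $\widetilde{\mathcal N}\,\mathcal O((\log\nu^{-1})^2/s^2)\,\widetilde{\mathcal N}^{-1}$ piece. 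Two ingredients compensate for the degeneration of the model solutions as $\nu \to 0$: the small-norm bound \eqref{smallRtilde} together with \eqref{tau0orderN}, and the vanishing $\gamma_0^2\zeta_0 = \mathcal O(\nu^2)$ which dampens both the $W$-term and the analogue of $E$ built out of $\widetilde{\mathcal N}$-quadratic expressions.

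The second step is the averaging argument of Proposition \ref{PropD}. I would write $W(s\Omega) - \int_0^1 W(\omega)\,d\omega$ as a Fourier series in $\omega$ (and similarly the oscillatory part of the $\Gamma$-term via Proposition \ref{Propterm2}(c)) and integrate by parts in $v_2$ against $e^{2\pi i j s\Omega}$. This is legitimate uniformly in the regime $\nu > s^{-5/4}$ because \eqref{dOmdv} together with \eqref{I0tau0}--\eqref{tau0x1x2} give $\partial\Omega/\partial v_2 = (1+o(1))/I_0$, bounded below by a positive multiple of $1/\log s$, while the needed $v_2$-derivatives of the Fourier coefficients are controlled through \eqref{tau0tau}--\eqref{tau0asymtheta} and the chain rule using $\partial\tau/\partial v_2,\,\partial u/\partial v_2,\,\partial d/\partial v_2 = \mathcal O(1/\nu)$.

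Integrating the pointwise bound on $E$ plus the oscillatory remainder over $v_2 \in [\widehat V_2,V_2]$ and changing variable to $\nu$ then produces integrals of the form $\int_{\asymp s^{-5/4}}^{\asymp 1} \nu^{-a}\,d\nu/s^b$ whose worst contribution sits at the lower endpoint. The main obstacle is precisely this error arithmetic: every bound in the $\widetilde{\mathcal N}$-based analysis picks up negative powers of $\nu$ (with $\widetilde{\mathcal N}, \widetilde{\mathcal N}^{-1} = \mathcal O(\nu^{-1/4-\epsilon})$ by \eqref{mest}), and one must check that the vanishing of $\gamma_0^2\zeta_0$ together with the Fourier decay in $s$ and the $1/\log s$ loss from the integration-by-parts step outpace this degeneration. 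Taking $\epsilon$ sufficiently small in the construction of $\widetilde{\mathcal N}$ and optimizing the Fourier truncation then produces the threshold $\nu > s^{-5/4}$ in the hypothesis and the error $\mathcal O(s^{-1/9})$ stated in the lemma.
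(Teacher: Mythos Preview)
Your overall strategy is the right one, but there are two concrete errors that break the arithmetic.

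First, the claim $\gamma_0^2\zeta_0 = \mathcal O(\nu^2)$ is wrong. From \eqref{expzeta}, \eqref{expgamma} one has $\frac{i\gamma_0^2\zeta_0}{4} = -\frac{|q(v_2)|}{2(1+v_2)}$, and by \eqref{x1x2eqn1}, \eqref{tau0x1x2}, $|q(v_2)| = -v_2\nu - x_1x_2 = \frac{|\alpha\beta|}{\log(\gamma\nu)^{-1}} + \mathcal O(\nu)$. Thus $\gamma_0^2\zeta_0 = \mathcal O\bigl((\log\nu^{-1})^{-1}\bigr)$, as stated in \eqref{tau0gamzet}, not $\mathcal O(\nu^2)$. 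The damping you rely on is therefore far weaker than you claim. (Incidentally, your estimate $\partial\Omega/\partial v_2 \approx 1/I_0$ is also off: by \eqref{dOmdv} and the above, $\partial\Omega/\partial v_2 \sim c/(\nu(\log\nu^{-1})^2)$, which is large; this actually helps the averaging step rather than hurts it.)

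Second, and more seriously, you are missing the cancellation that makes the error close. Using only \eqref{tau0orderN} and \eqref{smallRtilde}, the generic bound on the third term in \eqref{exact} is too large by a full power of $\nu$ and the integral over $v_2$ diverges at the lower endpoint $\nu \sim s^{-5/4}$. The paper's proof passes to the $\widetilde{\mathcal N}$-form \eqref{consttilde} via \eqref{mtildem} and then exploits the structural identity \eqref{limv2N1}: since the column $(\widetilde m_{11}(v_2),\,-i\widetilde m_{22}(v_2))^T$ is, up to the scalar $\delta^{-1}(z)/2$, the leading column of $\widetilde{\mathcal N}(z)$ itself near $v_2$, one gets the improved bounds
\[
\widetilde{\mathcal N}(z)^{-1}\begin{pmatrix}\widetilde m_{11}(v_2)\\ -i\widetilde m_{22}(v_2)\end{pmatrix} = \mathcal O(\nu^{1/4-\epsilon}),
\qquad
\begin{pmatrix} i\widetilde m_{22}(v_2)&\widetilde m_{11}(v_2)\end{pmatrix}\widetilde{\mathcal N}(z) = \mathcal O(\nu^{1/4-\epsilon})
\]
on $\partial U^{(v_1)}\cup\partial U^{(v_2)}$ (equations \eqref{Ntildem2}--\eqref{Ntildem3}), rather than the naive $\mathcal O(\nu^{-1/4-\epsilon})$. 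This gain of $\nu^{1/2}$ on each side is exactly what brings the pointwise error down to the form \eqref{tau0aim}, whose integral over $v_2$ gives $\mathcal O(s^{-1/8+\epsilon}) = \mathcal O(s^{-1/9})$. Without it the bound does not close. A minor further point: the second term in $D$ is \emph{not} averaged---see \eqref{DD}, where $m_{jj,k}=m_{jj,k}(s\Omega)$---so only the $W$-piece needs the Fourier argument.
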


\begin{proof}
In this proof, $\epsilon$ stands for a sufficiently small positive constant whose value
may vary from line to line.

In the previous section, we obtained the asymptotic solution of the $S$-RH problem in the regime
$s\to\infty$, $2\nu>s^{-2+\epsilon}$. By 
\eqref{defnR}, $R$ is also well defined in this regime, 
\begin{equation}  R(z)=\widetilde N(\infty)^{-1}\widetilde R(z)\widetilde N(\infty),\label{RtildeR}
\end{equation}
and thus \eqref{exact} holds. We now aim to prove the analogue of \eqref{notexact}, namely  
\begin{equation}\label{tau0aim}
\mathcal F_s(v_1,v_2)=\frac{ s^2\zeta_0^2}{4}-\frac{\zeta_0 s}{4}
m_{11,0}m_{22,0}\left(\gamma_0^2 \Gamma_2+\Gamma_1\right) 
+\frac{i \zeta_0\gamma_0^2}{4}W(s\Omega)+
\mathcal O\left(\frac{1}{s\nu^{3/2+\epsilon}}+\frac{1}{s^2\nu^{5/2+\epsilon}}\right), \end{equation}
as $s\to\infty$, uniformly for $2\nu >s^{-5/4}$, with the same notation as in
\eqref{exact}, \eqref{notexact}.

 By \eqref{def:calN} and \eqref{def:tildeN}, using \eqref{idsmij} and similar identities for
 $\widetilde m_{jk}$, we obtain
 \begin{align} \label{limv2N1}
\widetilde{\mathcal N}(z)&=\frac{\delta^{-1}(z)}{2}\begin{pmatrix}\widetilde m_{11}(v_2) &i\widetilde m_{11}(v_2)\\
-i\widetilde m_{22}(v_2)&\widetilde m_{22}(v_2)\end{pmatrix}+\mathcal O\left((z-v_2)^{1/4}\right),\\
\mathcal N(z)&=\frac{\gamma(z)}{2}\begin{pmatrix} m_{11}(v_2) &i m_{11}(v_2)\\
-i m_{22}(v_2)& m_{22}(v_2)\end{pmatrix}+\mathcal O\left((z-v_2)^{1/4}\right), \label{limv2N2}
\end{align}
as $z\to v_2$.

Thus, substituting \eqref{limv2N1} and \eqref{limv2N2} into  \eqref{NtildeN} and taking the limit $z\to v_2$, we obtain
\begin{equation}\label{mtildem}\begin{aligned}
 \begin{pmatrix} m_{11}(v_2)\\ -im_{22}(v_2)\end{pmatrix}
&=\left(\lim_{z\to v_2}\frac{1}{\gamma(z)\delta(z)}\right)\widetilde {\mathcal N} (\infty)^{-1}
\begin{pmatrix} \widetilde m_{11}(v_2)\\ -i \widetilde m_{22}(v_2)\end{pmatrix},\\
\begin{pmatrix}
i  m_{22}(v_2)& m_{11}(v_2)
\end{pmatrix}&=\left(\lim_{z\to v_2}\frac{1}{\gamma(z)\delta(z)}\right)
\begin{pmatrix}
i  \widetilde m_{22}(v_2)&\widetilde  m_{11}(v_2)
\end{pmatrix}\widetilde N(\infty).
 \end{aligned}\end{equation}

By the definition of $\gamma$ and $\delta$ in \eqref{def:gamma} and \eqref{def:delta}, $\lim_{z\to v_2} \gamma(z)\delta(z)= \sqrt{2}\nu^{1/4}/\sqrt{(v_2+1)}$. Thus, by \eqref{RtildeR}, the third term on the right hand side of \eqref{exact} is given by
\begin{multline}\label{consttilde}
\frac{i s\zeta_0\gamma_0^2}{4}\begin{pmatrix}
i  m_{22,0}& m_{11,0}
\end{pmatrix}
 R^{-1}(v_2) R\, '(v_2)\begin{pmatrix}
 m_{11,0}\\ -i m_{22,0}
\end{pmatrix}\\ =
\frac{i s\zeta_0\gamma_0^2(1+v_2)}{8\nu^{1/2}}\begin{pmatrix}
i \widetilde m_{22}(v_2)&\widetilde m_{11}(v_2)
\end{pmatrix}
\widetilde R^{-1}(v_2)\widetilde R\, '(v_2)\begin{pmatrix}
\widetilde m_{11}(v_2)\\ -i\widetilde m_{22}(v_2)
\end{pmatrix} ,
\end{multline}
which we now evaluate. By \eqref{tau0x1x2}, \eqref{expzeta}, \eqref{expgamma},
\begin{equation} \label{tau0gamzet}
 \zeta_0\gamma_0^2=\mathcal O\left(\frac{1}{\log \nu^{-1}}\right),
\end{equation}
as $\nu\to 0$.

By the definition of $\Delta_1$, $\widetilde \Delta_1$, and by \eqref{NtildeN},
\begin{equation} \widetilde \Delta_1(z)=\widetilde N(\infty) \Delta_1(z)\widetilde N(\infty)^{-1}, \end{equation}
and thus, by \eqref{mtildem},   and \eqref{def:W},
\begin{equation}\label{def:W2}
W(\omega)=\frac{(1+v_2)}{2\sqrt{\nu}}
\begin{pmatrix}
i \widetilde m_{22}(v_2;\omega)&\widetilde m_{11}(v_2;\omega)
\end{pmatrix}
\sum_{p\in \mathcal T} \int_{\partial U^{(p)}} \frac{s\widetilde \Delta_1(z;\omega)}{(z-v_2)^2}\frac{dz}{2\pi i}\begin{pmatrix}
\widetilde m_{11}(v_2;\omega)\\ -i\widetilde m_{22}(v_2;\omega)
\end{pmatrix}.
\end{equation}


Note that $\widetilde R$ satisfies (we denote the jump of $\widetilde R$ on $\Sigma_R$ by $I+\widetilde \Delta(z)$)
\begin{equation} \label{Rtildeint}
\widetilde R(z)=I+\int_{\Sigma_R} \frac{\widetilde R_-(\xi)\widetilde \Delta(\xi)}{\xi-z}\frac{d\xi}{2\pi i}.
\end{equation}
 By \eqref{Rtildeint}, \eqref{tau0PN-1}, \eqref{tau0PN-inf},
 \eqref{smallRtilde}, and the fact that $U^{(v_1)}$ and $U^{(v_2)}$ have radius $\nu/3$,
\begin{equation}\begin{aligned}
\widetilde R\, '(v_2)&=\int_{\Sigma_R}\left(I+\int_{\Sigma_R}\frac{\widetilde R_-(u)\widetilde \Delta(u)}{u-\xi_-}\frac{du}{2\pi i} \right) \frac{\widetilde \Delta(\xi)}{(\xi-v_2)^2}\frac{d\xi}{2\pi i}\\
&=\int_{\partial U^{(v_1)}\cup \partial U^{(v_2)}}\left(I+\int_{\Sigma_R} \frac{\widetilde \Delta(u)}{u-\xi_-} du+\mathcal O\left(\frac{1}{s^2\nu^{1+4\epsilon}}\right)\right) \frac{\widetilde \Delta(\xi)}{(\xi-v_2)^2} \frac{d\xi}{2\pi i}\\
&\quad+\int_{\partial U^{(1)}\cup \partial U^{(-1)}} \frac{\widetilde \Delta(\xi)}{(\xi-v_2)^2} \frac{d\xi}{2\pi i}+\mathcal O\left(\frac{1}{s^2\nu^{1+4\epsilon}}\right),
\\
\widetilde R(v_2)^{-1}&=I-\int_{\Sigma_R}\frac{\widetilde \Delta(u)}{u-v_2}\frac{du}{2\pi i}+\mathcal O\left(\frac{1}{s^2\nu^{1+4\epsilon}}\right),
\end{aligned}\end{equation}
as $s\to \infty$, uniformly for $z\in \mathbb C\setminus \Sigma_R$ and for $2\nu>s^{-5/4}$. Thus,
\begin{multline}\nonumber
\widetilde R(v_2)^{-1}\widetilde R\, '(v_2)=\int_{\partial U^{(v_1)}\cup \partial U^{(v_2)}}\left(I+\int_{\Sigma_R} \widetilde \Delta(u)\left(\frac{1}{u-\xi_-}-\frac{1}{u-v_2}\right) du+\mathcal O\left(\frac{1}{s^2\nu^{1+\epsilon}}\right)\right) \\ \times \frac{\widetilde \Delta(\xi)}{(\xi-v_2)^2} \frac{d\xi}{2\pi i}
+\int_{\partial U^{(1)}\cup \partial U^{(-1)}} \frac{\widetilde \Delta(\xi)}{(\xi-v_2)^2} \frac{d\xi}{2\pi i}+\mathcal O\left(\frac{1}{s^2\nu^{1+\epsilon}}\right),\end{multline}
in the same limit. Since $\frac{1}{u-\xi_-}-\frac{1}{u-v_2}=\mathcal O(\nu)$ when $u\in \partial U^{(1)}\cup \partial U^{(-1)}$ and $\xi_-\in \partial U^{(v_1)}\cup \partial U^{(v_2)}$, we obtain
\begin{multline}\label{tau0RR}
\widetilde R(v_2)^{-1}\widetilde R\, '(v_2)=\int_{\partial U^{(v_1)}\cup \partial U^{(v_2)}}\Bigg(I+\int_{\partial U^{(v_1)}\cup \partial U^{(v_2)}} \widetilde \Delta(u)\left(\frac{1}{u-\xi_-}-\frac{1}{u-v_2}\right) du+\\ \mathcal O\left(\frac{1}{s^2\nu^{1+\epsilon}}+\frac{\nu^{1/2-\epsilon}}{s}\right)\Bigg)  \frac{\widetilde \Delta(\xi)}{(\xi-v_2)^2} \frac{d\xi}{2\pi i}
+\int_{\partial U^{(1)}\cup \partial U^{(-1)}} \frac{\widetilde \Delta(\xi)}{(\xi-v_2)^2} \frac{d\xi}{2\pi i}+\mathcal O\left(\frac{1}{s^2\nu^{1+\epsilon}}\right).\end{multline}

We will now estimate \eqref{consttilde}.
For estimates on $\partial U^{(-1)}\cup  \partial U^{(1)}$,
recall that by \eqref{mest}, $\widetilde m(v_2)$ is of order $\nu^{-\epsilon}$. 
For estimates on $\partial U^{(v_1)}\cup  \partial U^{(v_2)}$
we need more precise information:
note that by \eqref{limv2N1},
\begin{equation}
\begin{pmatrix}1\\
0\end{pmatrix}=
\widetilde{\mathcal N}(z)^{-1}\widetilde{\mathcal N}(z)
\begin{pmatrix}1\\
0\end{pmatrix}=
\frac{\delta^{-1}(z)}{2}
\widetilde{\mathcal N}(z)^{-1}
\begin{pmatrix}\widetilde m_{11}(v_2)\\
-i\widetilde m_{22}(v_2)\end{pmatrix}
+\mathcal O\left(\nu^{-1/4-\epsilon}\delta(z)\right),
\end{equation}
on $\partial U^{(v_1)}\cup  \partial U^{(v_2)}$, and therefore
\begin{equation}\label{Ntildem2}
 \widetilde {\mathcal N}(z)^{-1}\begin{pmatrix}
\widetilde m_{11}(v_2)\\ -i\widetilde m_{22}(v_2)
\end{pmatrix}=\mathcal O\left(\nu^{1/4-\epsilon}\right),
\end{equation}
as $\nu \to 0$ for $z\in \partial U^{(v_1)}\cup \partial U^{(v_2)}$. Similarly,
\begin{equation}\label{Ntildem3}
\begin{pmatrix} i\widetilde m_{22}(v_2)&\widetilde m_{11}(v_2)\end{pmatrix}\widetilde N(z)=\mathcal O\left(\nu^{1/4-\epsilon}\right).
\end{equation}

Estimates \eqref{tau0RR}, and \eqref{Ntildem2}, \eqref{Ntildem3} on
$\partial U^{(v_1)}\cup  \partial U^{(v_2)}$, and $\widetilde m(v_2)=\mathcal O(\nu^{-\epsilon})$, $\mathcal N(z)=\mathcal O(\nu^{-1/4-\epsilon})$ on 
$\partial U^{(-1)}\cup  \partial U^{(1)}$ imply that
\eqref{consttilde} can be written as
\begin{multline}\label{consttilde2}
\frac{i s\zeta_0\gamma_0^2}{4}\begin{pmatrix}
i  m_{22,0}& m_{11,0}
\end{pmatrix}
 R^{-1}(v_2) R\, '(v_2)\begin{pmatrix}
 m_{11,0}\\ -i m_{22,0}
\end{pmatrix}\\ =
\frac{i\zeta_0\gamma_0^2}{4}
\frac{s(1+v_2)}{2\nu^{1/2}}\begin{pmatrix}
i \widetilde m_{22}(v_2)&\widetilde m_{11}(v_2)
\end{pmatrix}
\widetilde R^{-1}(v_2)\widetilde R\, '(v_2)\begin{pmatrix}
\widetilde m_{11}(v_2)\\ -i\widetilde m_{22}(v_2)
\end{pmatrix}\\ =
\frac{i\zeta_0\gamma_0^2}{4}
W(s\Omega)+
\mathcal O\left(\frac{1}{s\nu^{3/2+\epsilon}}+\frac{1}{s^2\nu^{5/2+\epsilon}}\right).
\end{multline}

Thus we obtained \eqref{tau0aim}. After integration, the error term here yields the
one not larger than that of the statement of the lemma, $\mathcal O(s^{-1/9})$.

\bigskip

To finish the proof of the lemma we need to estimate the error of replacing $W$
with its average value. 
From the definition \eqref{def:W2} and the estimates above, we deduce
\begin{equation} 
f(\omega)=\zeta_0\gamma_0^2
W(\omega)=\mathcal O\left(\frac{1}{\nu^{1+\epsilon}}\right),\qquad\nu\to 0. 
\end{equation}
By \eqref{I0tau0},
\begin{equation}\label{Omegatau0}
\Omega=\frac{1}{I_0}=
\frac{\sqrt{|\alpha\beta|}}{\log \left(\gamma\nu\right)^{-1}}\left(1+\mathcal O(\nu^2)\right),\qquad 
\frac{\partial\Omega}{\partial v_2}=\mathcal O\left(\frac{1}{\nu\left(\log \nu^{-1}\right)^2}\right), \qquad \frac{\partial^2\Omega}{\partial v_2^2}=\mathcal O\left(\frac{1}{\nu^2 \left(\log \nu^{-1}\right)^2}\right),
\end{equation} 
as $\nu \to 0$. 

First, we have $f=\mathcal O\left(\nu^{-1-\epsilon}\right)$ and $\frac{\partial}{\partial v_2}f=\mathcal O\left(\nu^{-2-\epsilon}\right)$.
By the analysis leading to \eqref{mest}, $\frac{\partial}{\partial\omega}\widetilde m(v_2)=
\mathcal O(\nu^{-\epsilon}\log\nu)$,
$\omega=s\Omega$, and therefore, adjusting $\epsilon$, we also have
 $\frac{\partial}{\partial\omega}f=\mathcal O\left(\nu^{-1-\epsilon}\right)$ and $\frac{\partial}{\partial\omega}\frac{\partial}{\partial v_2}f=\mathcal O\left(\nu^{-2-\epsilon}\right)$. Thus, by \eqref{Fdom} and a similar expression for 
 $\frac{\partial}{\partial v_2}f_j$,
the right hand side of \eqref{fjint} is of order $\frac{1}{j^2 s\nu^{\epsilon}}$, and we obtain 
\begin{equation}\label{Fouriertau0}
\begin{aligned}
\int_{\widehat V_2}^{V_2}f(s\Omega;v_2,v_1)dv_2&=\sum_{j=-\infty}^\infty \int_{\widehat V_2}^{V_2} f_j(v_2,v_1)e^{2\pi ijs\Omega}dv_2\\
&=\int_{\widehat V_2}^{V_2}f_0(v_2,v_1)dv_2+\mathcal O\left(\frac{1}{s\nu^{\epsilon}}\right), \end{aligned} \end{equation}
as $s\to \infty$, uniformly for $2\nu>s^{-5/4}$. The error term here is better than the one of the statement of the lemma. Thus the lemma is proved.
\end{proof}

\subsection{Proof of Theorem \ref{Thmtau0}}

 By \eqref{newD} and Lemma \ref{Lemtau0}, we see that to show that the expansion \eqref{FormDIZ} holds in the asymptotic regime of Theorem \ref{Thmtau0} (with the error term $\mathcal O(s^{-1/9})$) it remains to 
 prove that
\begin{equation}\label{Fourier2tau0}
\int_{\widehat V_2}^{V_2}\left(\frac{\partial \tau}{\partial v_2}\int_0^1\frac{\partial }{\partial \tau} \log \theta_3(\omega;\tau)d\omega \right)
-\left(\frac{\partial \tau}{\partial v_2}\frac{\partial }{\partial \tau} \log \theta_3(s\Omega;\tau)\right) dv_2
=\mathcal O\left(\frac{1}{s\nu^{\epsilon}}\right). \end{equation}
Since by \eqref{derivativetau}, \eqref{I0tau0},
\begin{equation}
\frac{\partial \tau}{\partial v_2}=\frac{i\pi}{I_0^2(1-v_2^2)(v_2-v_1)}=
\mathcal O \left(\frac{1}{\nu \log^2 (\gamma\nu)^{-1}}\right),
\end{equation}
and by \eqref{tau0asymtheta}, \eqref{asymkappa},
\[
\frac{1}{\theta(\omega)}\frac{d^k}{d\omega^k}\theta(\omega)=
\mathcal O\left(\log^k(\gamma\nu)^{-1}\right),
\]
we obtain
\begin{equation}\label{logtau01}
\frac{\partial}{\partial\omega}\left(\frac{\partial \tau}{\partial v_2}\frac{\partial }{\partial \tau} \log \theta_3(\omega;\tau)\right)=\frac{1}{4\pi i} 
\frac{\partial \tau}{\partial v_2}
\left(\frac{\theta''_3}{\theta_3}\right)'(\omega)=
 \mathcal O\left(\frac{\log(\gamma\nu)^{-1}}{\nu}\right). 
\end{equation}
Also since by \eqref{Partial5},
\begin{equation}
\frac{\partial^2 \tau}{\partial v_2^2}=
\mathcal O \left(\frac{1}{\nu^2\log^2 (\gamma\nu)^{-1}}\right),
\end{equation}
we similarly obtain
\begin{equation}\label{logtau02}
\frac{\partial}{\partial\omega}\frac{\partial}{\partial v_2}\left(\frac{\partial \tau}{\partial v_2}\frac{\partial }{\partial \tau} \log \theta_3(\omega;\tau)\right)= \mathcal O\left(\frac{\log(\gamma\nu)^{-1}}{\nu^2}\right). 
\end{equation}
The estimates \eqref{logtau01} and \eqref{logtau02} imply, as in the proof of \eqref{Fouriertau0},
the estimate \eqref{Fourier2tau0}.
Thus, we have proven the first statement of Theorem \ref{Thmtau0}.

Since we have proven the uniformity of Theorem \ref{Thm} for $2\nu>s^{-5/4}$, all that remains to show \eqref{limThm1tau0} is to expand $G_0$, $\log \theta_3(s\Omega;\tau)$, and $c_1$ as $\nu \to 0$.

By \eqref{G0} and \eqref{tau0x1x2},
\begin{equation} G_0=\frac{1}{2}-\frac{|\alpha \beta|}{\log (\gamma \nu)^{-1}}+\mathcal O\left(\nu^2\right),
\end{equation}
as $\nu \to 0$. 

By the formula for $\Omega $ in \eqref{Omegatau0}, $\theta$ in  \eqref{tau0asymtheta}, $\kappa$ in \eqref{asymkappa}, $\tau$ in \eqref{tau0tau},
\begin{equation} \log\theta_3(s\Omega;\tau)=\frac{1}{2}\log \log (\gamma \nu)^{-1}
-\langle \omega_0 \rangle^2 \log (\gamma \nu)^{-1}
 +\log \left(1+ (\gamma\nu)^{1-2|\langle \omega_0 \rangle |}\right)-\frac{1}{2}\log \pi+o(1),
 \end{equation}
as $s\nu\to 0$, where 
\[
s\Omega=\omega_0+o(1),
\]
with $\omega_0$ given by \eqref{omegaintro}.

By the asymptotics for  $I_0$ in  \eqref{I0tau0} and $x_1x_2$ in \eqref{tau0x1x2}, and by \eqref{x1x2eqn1},
 \begin{equation}
 c_1=-\frac{1}{4}\log \log (\gamma \nu)^{-1}-\frac{1}{8}\log |\alpha \beta|+\frac{1}{2}\log \pi+2c_0+ o(1),
 \end{equation}
 as $\nu \to 0$.
Thus we obtain \eqref{limThm1tau0} if $s\nu\to 0$.

\appendix
\numberwithin{equation}{section}
\renewcommand{\theequation}{\thesection.\arabic{equation}}

\section{$\theta$-functions and elliptic integrals}\label{App1}

Here we collect the properties of Jacobian $\theta$-functions and elliptic integrals
we need in the main text. For more information on the topic, see \cite{WW,GR,Spr}. 

The third Jacobian $\theta$-function is defined by a series\footnote{
$\theta$-functions are defined in \cite{WW} with argument $z/\pi$.}:
\begin{equation} \label{def:theta}
\theta_3(z;\tau)\equiv\theta_3(z)\equiv\theta(z)=\sum_{m\in \mathbb Z} e^{2\pi i zm+\pi i \tau m^2},\qquad \Im\tau>0.
\end{equation}
The function
$\theta(z)$ satisfies the periodicity properties:
\begin{equation}\label{theta3 period}
\theta(z)=\theta(z+1),\qquad
\theta(z\pm \tau)=e^{\mp 2\pi i z-\pi i \tau} \theta(z). 
\end{equation}
It is an entire function which is even, $\theta(z)=\theta(-z)$. Furthermore, $\theta(z)$ has a single zero modulo the lattice $(\mathbb Z,\tau\mathbb Z)$ at $\frac{1+\tau}{2}$, and at the zero the derivative $\theta'(z)$ is non-zero.

The first, second, and fourth $\theta$-functions are then defined as follows:
\begin{equation}\label{th124}
\begin{aligned}
\theta_1(z)&=ie^{-\pi iz+\frac{\pi i\tau}{4}}\theta_3\left(z-\frac{\tau+1}{2}\right),\\
\theta_2(z)&=\theta_1(z+1/2)=e^{-\pi i z+\pi i \tau/4}\theta_3\left(z-\frac{\tau}{2}\right), \qquad \theta_4(z)=\theta_3(z+1/2). 
\end{aligned}
\end{equation}
The function $\theta_1(z)$ is odd, while $\theta_2(z)$, $\theta_4(z)$ are even. 
The unique zeros (modulo the lattice) of $\theta_1$, $\theta_2$ and $\theta_4$ are at $0$,$1/2$ and $\tau/2$ respectively, and we have the periodicity properties:
\begin{equation}\label{theta period}
\begin{aligned}
\theta_1(z+1)&=-\theta_1(z), \qquad \theta_1(z+\tau)=-e^{-2\pi iz-\pi i \tau}\theta_1(z),\\
\theta_2(z+1)&=-\theta_2(z), \qquad \theta_2(z+\tau)=e^{-2\pi iz-\pi i \tau}\theta_2(z),\\
\theta_4(z+1)&=\theta_4(z), \qquad \theta_4(z+\tau)=e^{-2\pi iz-\pi i \tau}\theta_4(z).
\end{aligned}
\end{equation}
From the periodicity properties we have
\begin{equation}\label{almost_elliptic}\begin{aligned}
\frac{\theta'_j(z+1)}{\theta_j(z+1)}&=\frac{\theta_j'(z)}{\theta_j(z)},\qquad 
\frac{\theta_j'(z+\tau)}{\theta_j(z+\tau)}=\frac{\theta_j'(z)}{\theta_j(z)}-2\pi i,\\
\frac{\theta''_j(z+1)}{\theta_j(z+1)}&=\frac{\theta_j''(z)}{\theta_j(z)},\qquad 
\frac{\theta_j''(z+\tau)}{\theta_j(z+\tau)}=\frac{\theta_j''(z)}{\theta_j(z)}-4\pi i\frac{\theta_j'(z)}{\theta_j(z)}-4\pi^2,\qquad j=1,2,3,4.
\end{aligned}
\end{equation}

We denote $\theta_j=\theta_j(0)$, and the derivatives at zero $\theta'_j=\theta'_j(0)$,
etc. In particular, we have expansions at zero: $\theta_3(z)=\theta_3+\frac{z^2}{2}\theta_3''+\cdots$, $\theta_1(z)=z\theta_1'+z^3\frac{\theta_1'''}{6}+\cdots$.

We will use representations of $\theta_3$ in terms of $\theta_1$. By \eqref{th124},
\begin{equation}\label{theta'/theta}
\frac{\theta_3'(z)}{\theta_3(z)}=\frac{\theta_1'(\nu)}{\theta_1(\nu)}-\pi i, \qquad \nu=z-\frac{1+\tau}{2},
  \end{equation}
and
\begin{equation}\label{theta''/theta}
\frac{\theta_3''(z)}{\theta_3(z)}=\frac{\theta_1''(\nu)}{\theta_1(\nu)}-2\pi i\frac{\theta_1'(\nu)}{\theta_1(\nu)}-\pi^2,\qquad \nu=z-\frac{1+\tau}{2}.
\end{equation}

$\theta$-functions satisfy Jacobian addition relations, of which we will make use 
of the following two:
\begin{align}\label{theta23}
\theta_2(x+y)\theta_3(x-y)+\theta_2(x-y)\theta_3(x+y)&=\frac{2}{\theta_2\theta_3}\theta_2(x)\theta_2(y)\theta_3(x)\theta_3(y),\\
\label{theta34}
\theta_4(x+y)\theta_3(x-y)+\theta_4(x-y)\theta_3(x+y)&=\frac{2}{\theta_4\theta_3}\theta_4(x)\theta_4(y)\theta_3(x)\theta_3(y).
\end{align}

$\theta$-functions satisfy the differential equation
\begin{equation}\label{thdiff}
\theta''_j(z)=4\pi i \frac{\partial}{\partial\tau}\theta_j(z),\qquad j=1,2,3,4,
\end{equation}
some useful for us well-known identities for the values at zero:
\[
\theta_1'=\pi\theta_2\theta_3\theta_4,\qquad \theta_3^4=\theta_2^4+\theta_4^4,
\] 
and the following transformation formula for $\tau\to 1/\tau$,
\begin{equation}\label{tau1overtau}
\theta_3(z)=\frac{1}{\sqrt{-i\tau}}\sum_k e^{-\frac{i\pi}{\tau}(k-z)^2}.
\end{equation}

We will also need the following identity:
\begin{equation}\label{ellfuns4}
 \left(\frac{\theta_3'(z)}{\theta_3(z)}\right)'=\left(\frac{\theta_1'}{\theta_3}\right)^2\frac{\theta_1(z)^2}{\theta_3(z)^2}+\frac{\theta_3''}{\theta_3}.
\end{equation}
To show it, we first observe that both sides of the equation are elliptic functions
(i.e. they satisfy the periodicity relations $f(z+1)=f(z)$, $f(z+\tau)=f(z)$) with
second order pole at $z=(1+\tau)/2$. Considering the expansions of these functions
at the pole, we obtain that the difference of these functions has a pole of order
at most 1, and is therefore a constant. This constant is then evaluated setting $z=0$. 

Changing variable $z=\nu+\frac{1+\tau}{2}$
in \eqref{ellfuns4}, we also obtain 
\begin{equation}\label{ellfuns5}
 \left(\frac{\theta_1'(\nu)}{\theta_1(\nu)}\right)'=-\left(\frac{\theta_1'}{\theta_3}\right)^2\frac{\theta_3(\nu)^2}{\theta_1(\nu)^2}+\frac{\theta_3''}{\theta_3}.
\end{equation}

We further have 
\begin{Lemma}\label{LemmaComp}
If $g(z)$  is an elliptic function with a single pole modulo the lattice, located at $z=\frac{1+\tau}{2}$,  and
\begin{equation}g\left(\nu+\frac{1+\tau}{2}\right)=c_1\nu^{-2}+\mathcal O \left(\nu^{-1}\right),\end{equation}
as $\nu \to 0$,
then 
\begin{equation}\label{LemAsymg}
g(z)=-c_1\left[\left(\frac{\theta_3'(z)}{\theta_3(z)}\right)'-\frac{\theta_3''}{\theta_3}\right] +g(0),
\end{equation}
and furthermore
\begin{equation}\label{Lemmaintg}\int_0^1g(z)dz=c_1\frac{\theta_3''}{\theta_3}+g(0).\end{equation}
\end{Lemma}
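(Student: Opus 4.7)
\medskip

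\noindent\textbf{Proof proposal.} The plan is to identify an explicit elliptic function $h(z)$ with exactly the same principal part as $g(z)$ at $\frac{1+\tau}{2}$, so that $g-h$ is a pole-free elliptic function and hence constant by Liouville's theorem, and then compute this constant by evaluating at $z=0$ using \eqref{ellfuns4}.

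The natural choice is $h(z)=-c_1\bigl(\tfrac{\theta_3'(z)}{\theta_3(z)}\bigr)'$. First I would check that this is an elliptic function: the periodicity relations \eqref{almost_elliptic} give $\tfrac{\theta_3'(z+1)}{\theta_3(z+1)}=\tfrac{\theta_3'(z)}{\theta_3(z)}$ and $\tfrac{\theta_3'(z+\tau)}{\theta_3(z+\tau)}=\tfrac{\theta_3'(z)}{\theta_3(z)}-2\pi i$, so both shifts kill the constant upon differentiation. Next, I would analyze its behavior at $z=\tfrac{1+\tau}{2}$. Using \eqref{theta'/theta} with $\nu=z-\tfrac{1+\tau}{2}$, and the fact that $\theta_1$ has a simple zero at the origin with $\theta_1(\nu)=\theta_1'\nu+\tfrac{\theta_1'''}{6}\nu^3+\cdots$, one gets $\tfrac{\theta_1'(\nu)}{\theta_1(\nu)}=\tfrac{1}{\nu}+O(\nu)$, so $\bigl(\tfrac{\theta_3'}{\theta_3}\bigr)'=-\nu^{-2}+O(1)$ near the pole. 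In particular $h(\nu+\tfrac{1+\tau}{2})=c_1\nu^{-2}+O(1)$; note that the expansion has only even powers of $\nu$ because $\theta_1$ is odd makes $\tfrac{\theta_1'(\nu)}{\theta_1(\nu)}$ odd and its derivative even.

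Now the key step: $g(z)-h(z)$ is elliptic with at worst a simple pole at $\tfrac{1+\tau}{2}$ modulo the lattice (the $\nu^{-2}$ terms match). However, an elliptic function has residue sum zero over a fundamental parallelogram. Since $g$ has only one pole modulo the lattice and $h$ has no residue there (by the evenness just noted), the residue of $g-h$ vanishes, so $g-h$ is entire and elliptic, hence constant by Liouville. Evaluating at $z=0$ and using \eqref{ellfuns4} at $z=0$, which gives $\bigl(\tfrac{\theta_3'(z)}{\theta_3(z)}\bigr)'\bigr|_{z=0}=\tfrac{\theta_3''}{\theta_3}$ (because $\theta_1(0)=0$), produces the identity \eqref{LemAsymg}.

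The integral identity \eqref{Lemmaintg} then follows by integrating \eqref{LemAsymg} termwise over $[0,1]$: the only nontrivial piece is
\[
\int_0^1\left(\frac{\theta_3'(z)}{\theta_3(z)}\right)' dz=\frac{\theta_3'(1)}{\theta_3(1)}-\frac{\theta_3'(0)}{\theta_3(0)}=0
\]
by $1$-periodicity of $\tfrac{\theta_3'}{\theta_3}$ from \eqref{almost_elliptic}, and the path $[0,1]$ avoids the pole $\tfrac{1+\tau}{2}$ since $\Im\tau>0$. Overall this proof is quite short; the only subtlety worth highlighting is the residue-vanishing argument, which requires using both the single-pole hypothesis on $g$ and the evenness of $h$ near the pole.
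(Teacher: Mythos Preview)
Your proof is correct and follows essentially the same route as the paper: compare $g$ with $-c_1\bigl(\tfrac{\theta_3'}{\theta_3}\bigr)'$, use that the difference is elliptic with at most a simple pole modulo the lattice and hence constant, then evaluate at $z=0$ via \eqref{ellfuns4}. The only difference is cosmetic: the paper dispatches the simple-pole step in one line (``elliptic with a single simple pole, therefore constant''), whereas you unpack it by separately arguing that $g$ has zero residue (single pole, residue sum zero) and that $h$ has zero residue (evenness in $\nu$); your version is a bit more explicit but not a different argument.
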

\begin{proof} The second part of the lemma, \eqref{Lemmaintg}, follows directly from \eqref{LemAsymg}.

To show  \eqref{LemAsymg} note first that
since $\theta_3(z)$ has a zero of order 1 at $\frac{1+\tau}{2}$,  
\begin{equation}\label{Asymtheta3}
 \frac{\theta_3'(z)}{\theta_3(z)}=\frac{1}{z-\frac{1+\tau}{2}}+\mathcal O(1),\end{equation}
as $z\to \frac{1+\tau}{2}$.
By the fact that $\left(\frac{\theta_3'(z)}{\theta_3(z)}\right)' $ is elliptic and the hypothesis of the theorem,
\begin{equation}\label{LemAsymg2}
g(z)+c_1\left(\frac{\theta_3'(z)}{\theta_3(z)}\right)' 
\end{equation}
is an elliptic function with a single simple pole modulo the lattice, and therefore is a
constant. By \eqref{ellfuns4}, this constant is $g(0)+c_1\frac{\theta_3''}{\theta_3}$.
This shows \eqref{LemAsymg}.
\end{proof}

\begin{Lemma}\label{2integrals}
We have
\begin{equation}\label{intlogder4}
\int_0^1 \left(\frac{\theta_3'(z)}{\theta_3(z)}\right)^2 dz=\frac{\pi^2}{3}+\frac{\theta_1'''}{3\theta_1'},
\end{equation}
and, for any $d$, $u$,
\begin{equation}\label{thint2}
\int_0^1\frac{\theta_3(z-d)\theta_3(z+u+d)}{\theta_3(z)^2}dz=\frac{\pi \left[\theta_1'(d)\theta_1(u+d)-\theta_1(d)\theta_1'(u+d)\right]}{\left(\theta_1'\right)^2  \sin(\pi u) }.
\end{equation}
\end{Lemma}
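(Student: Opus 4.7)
My plan is to treat the two integrals with quite different techniques.

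For the first integral, I would first exploit the identity $(\theta_3'/\theta_3)^2 = \theta_3''/\theta_3 - (\theta_3'/\theta_3)'$ together with the $1$-periodicity of $\theta_3'/\theta_3$ from \eqref{almost_elliptic} to reduce the problem to evaluating $\int_0^1 \theta_3''(z)/\theta_3(z)\,dz$. By the heat equation \eqref{thdiff}, this equals $4\pi i\,\partial_\tau\int_0^1 \log\theta_3(z;\tau)\,dz$. Inserting the Jacobi triple product $\theta_3(z) = \prod_{n\ge 1}(1-q^{2n})(1+q^{2n-1}e^{2\pi iz})(1+q^{2n-1}e^{-2\pi iz})$ with $q=e^{\pi i\tau}$, and using that $\int_0^1 \log(1+re^{\pm 2\pi iz})\,dz = 0$ for $|r|<1$ (expand the log as a Taylor series; only Fourier modes $e^{2\pi ikz}$ with $k\ne 0$ appear and all vanish on integration), one obtains $\int_0^1\log\theta_3(z)\,dz = \sum_n \log(1-q^{2n})$. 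Differentiating in $\tau$ and applying the Lambert-series identity $\sum_n nq^{2n}/(1-q^{2n}) = (1-E_2(\tau))/24$ gives $\int_0^1(\theta_3'/\theta_3)^2\,dz = \pi^2(1-E_2)/3$. The same device applied to Jacobi's derivative formula $\theta_1'(0) = 2\pi\eta(\tau)^3$ yields $\theta_1'''/\theta_1' = 12\pi i\,\partial_\tau\log\eta = -\pi^2 E_2$, and the two expressions agree.

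For the second integral, let $F(z) = \theta_3(z-d)\theta_3(z+u+d)/\theta_3(z)^2$. Using \eqref{theta3 period} one checks $F(z+1)=F(z)$ and $F(z+\tau)=e^{-2\pi iu}F(z)$. My plan is to apply the residue theorem to $F$ on the fundamental parallelogram with vertices $0,1,1+\tau,\tau$. The two vertical sides cancel by $1$-periodicity, while the two horizontal sides combine into $(1-e^{-2\pi iu})\int_0^1 F(z)\,dz$, so $(1-e^{-2\pi iu})\int_0^1 F\,dz = 2\pi i\,\Res_{z=z_0}F$, where $z_0=(1+\tau)/2$ is the only pole inside. For the residue I would first derive the local formula $\theta_3(z_0+s) = ie^{-\pi i\tau/4-\pi is}\theta_1(s)$, an immediate consequence of \eqref{th124} together with the shift laws \eqref{theta period}. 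This yields $\theta_3'(z_0)^2 = -e^{-\pi i\tau/2}(\theta_1')^2$ and $\theta_3''(z_0) = -2\pi i\,\theta_3'(z_0)$. With $N(z) = \theta_3(z-d)\theta_3(z+u+d)$, the standard Laurent expansion at the double pole then gives $\Res_{z_0}F = (N'(z_0)+2\pi i\,N(z_0))/\theta_3'(z_0)^2$. Computing $N(z_0)$ and $N'(z_0)$ via the local formula, the contribution of $2\pi i\,N(z_0)$ exactly cancels the $\pi i\theta_1(d)\theta_1(u+d)$ cross terms appearing in $N'(z_0)$, leaving $-e^{-\pi i\tau/2-\pi iu}[\theta_1'(d)\theta_1(u+d)-\theta_1(d)\theta_1'(u+d)]$. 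Finally, $2\pi i/(1-e^{-2\pi iu}) = \pi e^{\pi iu}/\sin\pi u$; the exponential factors collapse and the stated identity drops out.

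The main obstacle I anticipate is the bookkeeping in the residue calculation: several exponential factors from the shifted theta evaluations must be tracked simultaneously, and the delicate cancellation of the $\pi i\theta_1(d)\theta_1(u+d)$ terms hinges on combining the two pieces of $N'(z_0)$ in just the right way. The first identity, by contrast, is essentially mechanical once one invokes the Jacobi-triple-product and Dedekind-eta apparatus, with the only subtle point being the vanishing of the integrals $\int_0^1 \log(1+re^{\pm 2\pi iz})\,dz$.
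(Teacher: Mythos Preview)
Your proposal is correct on both counts.

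For the second integral \eqref{thint2}, your approach and the paper's are essentially the same: integrate around the fundamental parallelogram and compute the residue at the unique double pole. The only organizational difference is that the paper first substitutes $\nu=z-(1+\tau)/2$ and rewrites the integrand as $e^{-\pi i u}\theta_1(\nu-d)\theta_1(\nu+u+d)/\theta_1(\nu)^2$ \emph{before} integrating, so that the residue at $\nu=0$ is read off directly without tracking exponential prefactors; you instead keep $\theta_3$ throughout and invoke the local formula $\theta_3(z_0+s)=ie^{-\pi i\tau/4-\pi is}\theta_1(s)$ only at the end. The paper's ordering avoids the ``delicate cancellation'' you flag as an obstacle, but the content is identical.

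For the first integral \eqref{intlogder4}, your route is genuinely different. The paper stays entirely within theta-function identities: it writes $(\theta_3'/\theta_3)^2=\pi^2+(\theta_1'/\theta_1)^2-2\pi i(\theta_1'/\theta_1)$ via \eqref{theta'/theta}, and then evaluates $\int_J(\theta_1'/\theta_1)^2\,d\nu$ by integrating $(\theta_1'/\theta_1)^3$ around the fundamental parallelogram --- the quasi-periodicity of $\theta_1'/\theta_1$ produces exactly $\int(\theta_1'/\theta_1)^2$ on the right-hand side, and the residue at $\nu=0$ gives $\theta_1'''/\theta_1'$ directly. Your approach trades this contour trick for external machinery (Jacobi triple product, Dedekind $\eta$, Eisenstein $E_2$, and the formula $\theta_1'=2\pi\eta^3$). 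Both are valid; the paper's argument is more self-contained and avoids introducing $E_2$ as an intermediary, while yours is arguably more conceptual in that it identifies the integral as $4\pi i\,\partial_\tau\log\eta+\pi^2/3$ and $\theta_1'''/\theta_1'$ as $12\pi i\,\partial_\tau\log\eta$, making the agreement transparent.
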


\begin{proof}
 Since 
\begin{equation}\label{logder0}\int_0^1\frac{\theta_3'(\omega)}{\theta_3(\omega)}d\omega=0,\end{equation} 
we have by the relation between the logarithmic derivatives of $\theta_1$ and $\theta_3$ in \eqref{theta'/theta},
\begin{equation}\label{intlogder01}
\int_0^1 \left(\frac{\theta_3'(z)}{\theta_3(z)}\right)^2 dz=
\pi^2+\int_0^1\left(\frac{\theta_3'(z)}{\theta_3(z)}+\pi i \right)^2 dz
=\pi^2+\int_J\left(\frac{\theta_1'(\nu)}{\theta_1(\nu)}\right)^2d\nu,
\end{equation}
where 
\begin{equation}J=\left\{\nu=z-\frac{1+\tau}{2}:z\in(0,1)\right\}. \end{equation}
Let $\widetilde \Gamma$ be the rectangle with corners $\pm 1/2 \pm\tau/2$, with positive orientation. 
Writing the integral around the contour and using
the periodicity relation of $\theta_1'/\theta_1$ in \eqref{almost_elliptic}, we obtain
\begin{equation}\int_{\widetilde \Gamma}\left(\frac{\theta_1'(\nu)}{\theta_1(\nu)}\right)^3d\nu=
6 \pi i \int_J\left(\frac{\theta_1'(\nu)}{\theta_1(\nu)}\right)^2d\nu+12\pi^2\int_J\frac{\theta_1'(\nu)}{\theta_1(\nu)}d\nu-8\pi^3i.
\end{equation}
By \eqref{theta'/theta}, and \eqref{logder0}, $\int_J\frac{\theta_1'(\nu)}{\theta_1(\nu)}d\nu=\pi i$, and therefore
\begin{equation} \label{intlogder2}
\int_J\left(\frac{\theta_1'(\nu)}{\theta_1(\nu)}\right)^2d\nu
=-\frac{2\pi^3}{3}+\frac{1}{6\pi i}\int_{\widetilde \Gamma}\left(\frac{\theta_1'(\nu)}{\theta_1(\nu)}\right)^3d\nu.\end{equation}
Since $\theta_1$ has a single zero modulo the lattice located at $0$, and since $\theta_1''(0)=0$, we obtain
\begin{equation}\label{intlogder3}
\int_{\widetilde \Gamma}\left(\frac{\theta_1'(\nu)}{\theta_1(\nu)}\right)^3d\nu=2\pi i \frac{\theta_1'''}{\theta_1'}\end{equation}
by evaluating the residue of $\left(\frac{\theta_1'(\nu)}{\theta_1(\nu)}\right)^3$ at $0$. 
Combining \eqref{intlogder01}, \eqref{intlogder2}, and \eqref{intlogder3},
we obtain \eqref{intlogder4}.

To obtain \eqref{thint2}, we first observe that by \eqref{th124}, \eqref{theta period},
\begin{equation}\label{formulaT34}
\int_0^1\frac{\theta(z-d)\theta(z+u+d)}{\theta(z)^2}dz=e^{-\pi i u}\int_J\frac{\theta_1(\nu-d)\theta_1(u+\nu+d)}{\theta_1(\nu)^2}d\nu,\end{equation}
where again $J=\{\nu=z-\frac{1+\tau}{2}, z\in(0,1)\}$. With $\widetilde \Gamma$ as above, we have by periodicity properties that
\begin{equation}\label{formulaT33}
\int_{\widetilde \Gamma}\frac{\theta_1(\nu-d)\theta_1(u+\nu+d)}{\theta_1(\nu)^2}d\nu=\left(1-e^{-2\pi i u}\right)\int_J \frac{\theta_1(\nu-d)\theta_1(u+\nu+d)}{\theta_1(\nu)^2}d\nu.
\end{equation}
On the other hand, computing the residue, we obtain
\begin{equation}\label{formulaT32}
\int_{\widetilde \Gamma}\frac{\theta_1(\nu-d)\theta_1(u+\nu+d)}{\theta_1(\nu)^2}d\nu=\frac{2\pi i}{(\theta_1')^2}\left(\theta_1'(d)\theta_1(u+d)-\theta_1(d)\theta_1'(u+d)\right).
\end{equation}
The last 3 equations give \eqref{thint2}.

\end{proof}

\bigskip

Recall the definition of the elliptic integrals $I_j=I_j(v_1,v_2)$, $J_j=J_j(v_1,v_2)$ from \eqref{IJ}.

\begin{Lemma}
There holds a Riemann's period relation:
\begin{equation}\label{RBL}
\left(I_2-\frac{v_1+v_2}{2}I_1\right)J_0-
I_0\left(J_2-\frac{v_1+v_2}{2}J_1\right)=\pi.
\end{equation}
\end{Lemma}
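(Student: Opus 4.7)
\textbf{Proof plan for Lemma A.3 (the Riemann period relation \eqref{RBL}).}

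The strategy is to view both sides of \eqref{RBL} as periods on the Riemann surface $\Sigma$ and then apply Riemann's bilinear relations, which connect the periods of a holomorphic and a meromorphic differential to residues at the poles of the latter. The key observation is that the integrand appearing in the differences $I_2-\tfrac{v_1+v_2}{2}I_1$ and $J_2-\tfrac{v_1+v_2}{2}J_1$ is naturally a meromorphic differential of the second kind on $\Sigma$.

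First I would introduce the meromorphic differential
\begin{equation*}
\eta(z)=\frac{z^2-\frac{v_1+v_2}{2}z}{p(z)^{1/2}}\,dz,
\end{equation*}
which is holomorphic away from the two points at infinity on $\Sigma$. A direct expansion using $p(z)^{1/2}/z^2=\pm(1-\tfrac{v_1+v_2}{2z}+O(z^{-2}))$ shows $\eta=\pm(1+O(z^{-2}))\,dz$ near $\infty^{(1)}$ and $\infty^{(2)}$; in particular $\eta$ has vanishing residues (double poles only). By the definitions \eqref{IJ},
\begin{equation*}
\int_{A_1}\eta=-2i\left(I_2-\tfrac{v_1+v_2}{2}I_1\right),\qquad \int_{B_1}\eta=2\left(J_2-\tfrac{v_1+v_2}{2}J_1\right).
\end{equation*}

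Next I would apply Riemann's bilinear relation to the normalized holomorphic differential $\bm\omega$ of \eqref{def:omega} and to $\eta$:
\begin{equation*}
\int_{A_1}\bm\omega\int_{B_1}\eta-\int_{B_1}\bm\omega\int_{A_1}\eta=2\pi i\sum_{p\in\{\infty^{(1)},\infty^{(2)}\}}\Res_p\!\bigl(u\,\eta\bigr),
\end{equation*}
where $u(z)=\int_{v_2}^z\bm\omega$ is a primitive of $\bm\omega$. Using $\int_{A_1}\bm\omega=1$ and $\int_{B_1}\bm\omega=\tau=iJ_0/I_0$ from \eqref{omega cond}--\eqref{def:Omega}, the left-hand side simplifies to
\begin{equation*}
\frac{2}{I_0}\Bigl[I_0\bigl(J_2-\tfrac{v_1+v_2}{2}J_1\bigr)-J_0\bigl(I_2-\tfrac{v_1+v_2}{2}I_1\bigr)\Bigr].
\end{equation*}

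The main (but routine) computational step is evaluating the two residues. Expanding $\bm\omega$ at infinity gives $u(z)=u_\infty^{(j)}\mp\tfrac{i}{2I_0 z}+O(z^{-2})$ on the $j$-th sheet ($j=1,2$, with opposite signs on the two sheets because $p^{1/2}/z^2\to\pm 1$), while $\eta=\pm(1+O(z^{-2}))\,dz$ with matching signs. Hence the $1/z$ coefficient of $u(z)\eta/dz$ equals $-i/(2I_0)$ at each point at infinity, and the standard convention $\Res_\infty[(a_0+a_1/z+\cdots)dz]=-a_1$ yields
\begin{equation*}
\sum_p\Res_p(u\,\eta)=\frac{i}{2I_0}+\frac{i}{2I_0}=\frac{i}{I_0}.
\end{equation*}
Substituting back and multiplying through by $-I_0/2$ gives exactly \eqref{RBL}. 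The main (minor) obstacle is bookkeeping the signs of $p(z)^{1/2}$ on the two sheets when computing the residues, but there is nothing deep here beyond the standard Riemann bilinear machinery.
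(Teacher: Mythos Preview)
Your proposal is correct and follows essentially the same approach as the paper: both apply Riemann's bilinear relation to the pair consisting of the holomorphic differential $dz/p(z)^{1/2}$ (normalized as $\bm\omega$ in your case) and the second-kind meromorphic differential $\eta=\frac{z^2-\frac{v_1+v_2}{2}z}{p(z)^{1/2}}\,dz$, then evaluate the residue contribution at the two points over infinity. The only cosmetic difference is that the paper takes the primitive of the meromorphic differential $\eta$ and pairs it with $\mu=dz/p^{1/2}$, whereas you take the primitive $u$ of the holomorphic $\bm\omega$ and pair it with $\eta$; the residue computation at $\infty^{(1)},\infty^{(2)}$ and the final identity come out the same either way.
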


\begin{proof}
We cut the Riemann surface $\Sigma$ along the loops $A_1$, $B_1$, which yields a 4-gon $\gamma$ with the sides
$A_1$, $B_1$, $A_1^{-1}$, $B_1^{-1}$ (the side $A_1$ is identified with $A_1^{-1}$ on the surface, 
the same with  $B_1$, $B_1^{-1}$). The standard Riemann period relation between meromorphic differentials $\lambda$, $\mu$ on $\Sigma$
is as follows:
\begin{equation}
\int_{\gamma} \Lambda\mu=
\int_{A_1}\lambda\int_{B_1}\mu-\int_{A_1}\mu\int_{B_1}\lambda,\qquad \Lambda(x)=\int_{x_0}^x\lambda,\qquad x\in\Sigma,
\end{equation}
where $\gamma$ is traversed in the positive direction, and where $x_0$ is a fixed point on the surface away from the cuts.

Now taking $\lambda=\frac{x^2-x(v_1+v_2)/2}{p(x)^{1/2}}dx$, $\mu=\frac{dx}{p(x)^{1/2}}$, we have in the local variable
$\xi=1/z$, $\lambda=\mp(1+\mathcal O(\xi^2))\frac{d\xi}{\xi^2}$, $\mu=\mp(1+\mathcal O(\xi))d\xi$, as
$\xi\to 0$. Here the upper sign is taken on the first sheet, and the lower one on the second. 
Computing the residue at $z$-infinity
(at 2 points on $\Sigma$ corresponding to it) of $\Lambda\mu$, we obtain \eqref{RBL}.
\end{proof}

The complete elliptic integrals of first and second kind, respectively, are defined
as follows:
\begin{equation}\label{EllInt}
K(v)=\int_0^1\frac{dt}{\sqrt{(1-t^2)(1-v^2t^2)}},\qquad E(v)=\int_0^1 \sqrt{\frac{1-v^2t^2}{1-t^2}}dt.
\end{equation}
Moreover, let
\begin{equation}
K'(v)=\int_1^{1/v}\frac{dt}{\sqrt{(t^2-1)(1-v^2t^2)}},\qquad \widehat E(v)=\int_1^{1/v} \sqrt{\frac{1-v^2t^2}{t^2-1}}dt.
\end{equation}
It is well-known that
\begin{equation}\label{KandK'}
K'(v)=K(v'),\qquad  v'=\sqrt{1-v^2}.
\end{equation}
By integrating the derivative of $t\sqrt{\frac{1-t^2}{1-v'^2t^2}}$, we also obtain that
\begin{equation}\label{HatE}
 \widehat E(v)=K(v')-E(v').
\end{equation}

As $v\to 1$ (and therefore $v'\to 0$), we have the expansions:
\begin{equation}\label{KEexp}
\begin{aligned}
K(v)&=\left(\frac{1}{2}\log \frac{1}{2-2v}+2\log 2\right)(1+\mathcal O(1-v)),\\
K(v')&=\frac{\pi}{2}\left(1
+\frac{v'^2}{4}+\frac{9v'^4}{64}+
\mathcal O(v'^6)\right), \qquad 
E(v')=\frac{\pi}{2}\left(1
-\frac{v'^2}{4}-\frac{3v'^4}{64}+
\mathcal O(v'^6)\right).
\end{aligned}
\end{equation}

Now consider the case symmetric intervals $-v_1=v_2\equiv v$.
By the change of variable $x=vy$ and by using \eqref{HatE}, we see that
\begin{equation}\label{I02}
I_0(-v,v)=K(v'),\qquad \frac{I_2(-v,v)}{I_0(-v,v)}=1-\frac{\widehat E(v)}{K(v')}=\frac{E(v')}{K(v')},\qquad J_0(-v,v)=2K(v).
\end{equation}

%

\section{Prefactor of $\log s$}\label{App2}

Here we show that the constant $\widehat G_1$ in \eqref{FormDIZ} obtained in \cite{DIZ} is equal to $-1/2$.
Let 
\begin{equation} u(z)=-\frac{i}{2I_0}\int_{v_2}^z\frac{d\xi}{p(\xi)^{1/2}}, \end{equation}
and define
\begin{equation}\nonumber
\rho(z,\omega)= \frac{\theta^2(0) \theta(u(z)+\omega-u(\infty))\theta(u(z)-\omega-u(\infty))}{\theta^2(\omega)\theta^2(u(z)-u(\infty))},\qquad d=-u(\infty).
\end{equation}
It is easily verified that $\rho$ as a function of $\omega$ is elliptic: $\rho(\omega)=\rho(\omega+1)=
\rho(\omega+\tau)$.
Here we use our definitions of $u(z)$ \eqref{def:u} and $d$ (which has the property \eqref{uinftyd}) from Section \ref{secPhi}. However, it is straightforward to verify that $\rho$ is exactly the function (1.30) in \cite{DIZ} for $n=1$ with
$x=\omega/\Omega$, $V=\Omega$.

Let 
\begin{equation}
h(z)=(z-1)(z-v_1)+(z-v_2)(z+1),
\end{equation}
and consider the function $G_1$ given by (1.33) in \cite{DIZ}, which in our case of $n=1$ becomes
\begin{equation}\nonumber
G_1(t)=-\frac{1}{16}\sum_{y=\{-1,v_1,v_2,1\}}\rho(y,t\Omega)\frac{h(y)}{q(y)}.
\end{equation}
It was shown in \cite{DIZ} that the coefficient $\widehat G_1$ in \eqref{FormDIZ} is given by
\begin{equation}\nonumber
\widehat G_1=\lim_{x\to \infty}\frac{1}{x}\int_{x_0}^x G_1(t)dt,
\end{equation}
for some fixed large $x_0$.

By ellipticity of $\rho$, this can be written in the form
\begin{equation}
\widehat G_1=-\frac{1}{16}\sum_{y=\{-1,v_1,v_2,1\}}\frac{h(y)}{q(y)}
\int_0^1\rho(y,\omega)d\omega.
\end{equation}
To compute the integral, note first that by \eqref{th124}
\begin{equation}
\begin{aligned}
\rho \left(y, \nu+\frac{1+\tau}{2}\right)&=\frac{\theta_3^2}{\theta_3^2(u(z)+d)}
\frac{\theta_1(u(z)+d+\nu)\theta_1(-u(z)-d+\nu)}{\theta_1^2(\nu)}\\
&=-\frac{\theta_3^2}{\theta_3^2(u(z)+d)}
\frac{\theta_1^2(u(z)+d)}{(\theta_1')^2 \nu^2}+\mathcal O\left(\nu^{-1}\right),\qquad \nu\to 0.
\end{aligned}
\end{equation}
Using Lemma \ref{LemmaComp} in Appendix \ref{App1}, we compute the integral $\int_0^1\rho(y,\omega)d\omega$
and obtain
 \begin{equation}
\widehat G_1=-\frac{1}{16}\sum_{y\in\{-1,v_1,v_2,1\}}\frac{h(y)}{q(y)}\left(1-
\frac{\theta_3\theta_3''}{(\theta_1')^2}\frac{\theta_1^2(u(y)+d)}{\theta_3^2(u(y)+d)}\right).
\end{equation}
By applying the identities  \eqref{idd} of Proposition \ref{ThmThetaids} (d), 
\begin{equation}\label{FormulaG2}
\widehat G_1=-\frac{1}{16}\sum_{y\in\{-1,v_1,v_2,1\}}\frac{1}{q(y)}\left(h(y)+\frac{\theta_3''}{\theta_3 I_0^2}\right).
\end{equation}
By \eqref{dtheta33}, 
\begin{equation}
\frac{\theta_3''}{\theta_3 I_0^2}=2q(v_2)-h(v_2),
\end{equation}
and therefore the term with $y=v_2$ in \eqref{FormulaG2} is
\[
\frac{1}{q(v_2)}\left(h(v_2)+\frac{\theta_3''}{\theta_3 I_0^2}\right)=2.
\]
Now note (recall \eqref{x1x2eqn1}) that
\begin{equation}
2q(v_2)-h(v_2)=2q(v_1)-h(v_1)=2q(1)-h(1)=2q(-1)-h(-1)=v_2-v_1+2x_1x_2,
\end{equation}
so that all the other terms in the sum in \eqref{FormulaG2} are also equal $2$.
Therefore 
\begin{equation}
\widehat G_1=-\frac{1}{16}(2+2+2+2)=-\frac{1}{2}.
\end{equation}

\section*{Acknowledgements}
The work of the authors was partly supported by the Leverhulme Trust research project grant
RPG-2018-260.

\end{document}